\newcounter{method1}
\newtheorem{meth}[method1]{Construction}
\newtheorem{theorem}{Theorem}[section]
\newtheorem{lemma}[theorem]{Lemma}
\newtheorem{proposition}[theorem]{Proposition}
\newtheorem{fact}[theorem]{Fact}
\newtheorem{corollary}[theorem]{Corollary}
\newtheorem{clam}[theorem]{Claim}
\newtheorem{obs}[theorem]{Observation}
\newtheorem{definition}[theorem]{Definition}
\newtheorem{rem}[theorem]{Remark}
\newcommand{\eps}{\epsilon}
\newcommand{\N}{{\mathbb N}}
\newcommand{\polylog}{\operatorname{polylog}}
\newcommand{\loglog}{\operatorname{loglog}}
\newcommand{\poly}{\operatorname{poly}}
\newcommand{\dsum}{\displaystyle\sum}
\newcommand{\expect}{\mathbb{E}}
\newcommand{\mbN}{\mathbb{N}}
\newcommand{\mcA}{{\mathcal A}}
\newcommand{\mcLA}{{\mathcal L \mathcal A}}
\newcommand{\mcP}{{\mathcal P}}
\newcommand{\mcR}{{\mathcal R}}
\newcommand{\mcH}{{\mathcal H}}
\newcommand{\sd}{\leq_{st}}
\newcommand{\ignore}[1]{}
\begin{document}

\date{}

\title{New Techniques and Tighter Bounds for Local Computation Algorithms}
\author{ Omer Reingold \thanks{Microsoft Research. E-mail: {\tt  omer.reingold@microsoft.com} } \and Shai Vardi
\thanks{School of Computer Science, Tel Aviv University.
 E-mail: {\tt  shaivar1@post.tau.ac.il}. This research was supported in part by the Google Europe Fellowship in Game Theory.}
 }

\maketitle
\begin{abstract}
Given an input $x$, and a search problem $F$, local computation algorithms (LCAs) implement access to specified locations of $y$ in a legal output $y \in F(x)$, using polylogarithmic time and space. Mansour et al., (2012), had previously shown how to convert certain online algorithms to LCAs.  In this work, we expand on that line of work and develop new techniques for designing LCAs and bounding their space and time complexity. 
Our contributions are fourfold: (1) We significantly improve the running times and space requirements of LCAs for previous results, (2) we expand and better define the family of online algorithms which can be converted to LCAs using our techniques, (3) we show that our results apply to a larger family of graphs than that of previous results, and  (4) our proofs are simpler and  more concise than the previous proof methods. 

For example,  we show how to construct LCAs that require  $O(\log{n}\log\log{n})$ space and $O(\log^2{n})$ time (and expected time $O(\log\log n)$) for problems such as maximal matching on a large family of graphs, as opposed to the henceforth best results that required   $O(\log^3{n})$ space and $O(\log^4{n})$ time, and applied to a smaller family of graphs. 
\end{abstract}

\section{Introduction}

The need for \emph{local computation algorithms} (LCAs) arises in situations when we require fast and space-efficient access to part of a solution to a computational problem, but we never need the entire solution at once. Consider, for instance, a network with millions of nodes, on which we would like to compute a maximal independent set. We are never required to compute the entire solution; instead,  we are inquired\footnote{To avoid confusion, we use the word \emph{inquire} to represent queries made \emph{to} the algorithm, and the word \emph{query} to refer to the queries the algorithm itself makes to the data.} about specific nodes, and we have to reply whether or not they are part of the independent set. To accomplish this, we are allocated space polylogarithmic in the size of the graph, and are required to reply to each inquiry in polylogarithmic time. Although at any single point in time, we can be inquired as to whether a single node (or a small number of nodes) is in the solution, over a longer time period, we may be inquired about all of the nodes. We therefore  want all of our replies to be consistent with the same maximal independent set. 


Local computation algorithms were introduced by Rubinfeld, Tamir, Vardi and Xie, \cite{RTVX11}. In that work, they gave LCAs for several problems such as hypergraph $2$-coloring and maximal independent set; their focus was on the time bounds of these algorithms. To give better space bounds, Alon, Rubinfeld, Vardi and Xie, \cite{ARVX11}  showed
how to use $k$-wise independent hash functions (instead of random functions), when an LCA makes at most $k$ queries (per inquiry), and used it to obtain polylogarithmic time and space bounds for hypergraph $2$-coloring. 
Mansour, Rubinstein, Vardi and Xie \cite{MRVX12}, showed how to bound the number of queries of the algorithms of \cite{ARVX11} by $O(\log{n})$ (where $n$ is the number of vertices), and showed that this query bound also applies to a variety of LCAs which are obtained via reductions from online algorithms. Their results give, for example, an LCA for maximal matching which requires $O(\log^3{n})$ space and runs in $O(\log^4{n})$ time. 
 The analyses of \cite{ARVX11} and \cite{MRVX12} (and indeed of many other works, e.g., \cite{HKNO09,MV13}) use an abstract structure called a \emph{query tree}, introduced by Nguyen and Onak \cite{NO08}, to bound the number of queries the LCA makes.

In this work, we depart from that line of proof, and develop new ideas for  bounding the space and time bounds of LCAs. Our techniques are more general and can be applied to a wider family of graphs. The proofs of these new techniques are simpler, and this allows us to improve on the  known results on LCAs in several important ways: we reduce the space and time bounds of most of the previous results; we expand and better characterize  the family of LCAs for which these new bounds hold; and we expand the family of graphs on which these LCAs can be implemented.

\subsection{Our Contributions}
We formalize a family of algorithms called \emph{neighborhood-dependent online algorithms}. These algorithms compute some function $f$ on the vertices (or edges) of a graph, and the value of $f(x)$ for every vertex $x$ depends only on the value of $f$ for the immediate neighbors of $x$ that arrived before $x$. This family includes the algorithms of \cite{MRVX12} and, furthermore, encompasses a large number of online algorithms: online algorithms for maximal matching,  maximal independent set, and vertex/edge  coloring, for example. It also includes many online load balancing algorithms, such as the algorithm of Azar, Broder, Karlin and Upfal  \cite{ABK+99} and many of the variations thereof (e.g.,  \cite{Vocking03,Wie07,BBFN14}). 

 The results on LCAs thus-far have been on a very narrow family of graphs (for example, the LCAs of \cite{RTVX11} and \cite{ARVX11} only apply to graphs of bounded constant degree). We take an important step forward in characterizing the graphs which admit LCAs: we introduce a family of graphs, which we call \emph{$d$-light} graphs, which captures a broad range of graphs, for example:
\begin{itemize}[noitemsep,nolistsep]
\item $d$-regular graphs, or more generally, graphs with degree bounded by $d$, where $d=O(\log\log{n})$,
\item The random graphs $G(n,p)$, where $p=d/n$ for any $d = O(\log\log{n})$,
\item Bipartite graphs on $n$ consumers and $m$ producers, where each consumer is connected to $d$ producers at random. The degree of each consumer is fixed whereas the degree of each producer is distributed according to the binomial distribution $B(n, d/m)$.
\end{itemize}

Our main result is the following: Assume we are given a $d$-light graph, $G=(V,E)$ (for constant $d$), where $|V|=n$. We are also given a combinatorial search problem $F$ for $G$, for which there exists a (sufficiently efficient) neighborhood-dependent online algorithm. We show how to construct  the following:
\begin{enumerate}
\item An LCA for $F$ which requires time $O(\log{n}\log\log{n})$ and space $O(\log^2{n})$ per inquiry, with probability $1-1/\poly(n)$.
\item An LCA for $F$ which requires time $O(\log^2{n}) $ and space $O(\log{n}\log\log{n})$ per inquiry, with probability $1-1/\poly(n)$.
\end{enumerate}
We further show, that in both of these cases, the LCAs require $O(\log\log{n})$ time and $O(\log{n})$ space in expectation. 

For both of our LCAs we use the following general construction. When we are inquired about the value of some vertex $v$ in the solution,  we simulate the execution of the online algorithm on the graph, for some randomized order of arrival of the vertices. Because the algorithm is neighborhood-dependent, we only need to query the neighbors of $v$ that arrived before it. Call this set $U$. For each of the vertices in $u \in U$, we need to query the neighbors of $u$ that arrived before $u$, and so on. We show how to generate a pseudorandom ordering on the vertices using a seed of length $O(\log{n})$ such that  any such inquiry will require at most $O(\log{n})$ queries to the graph w.h.p. We furthermore show that using this seed, the expected number of queries the LCA will need to make to the graph is constant. The difference between the two constructions is the following: as we discover the vertices we need to query in order to simulate the online algorithm, we relabel the vertices with unique labels to reflect the ordering. Trivially, we can use labels of length $O(\log{n})$, and hence we get the first construction. However, as we only need to order $O(\log{n})$ vertices per inquiry, we only require labels of length $O(\log\log{n})$. Our second construction shows that we can create shorter labels, albeit at the expense of a somewhat worse running time. 
We show that if we allow our LCAs use polylogarithmic space and require polylogarithmic  time, we can handle  $d$-light graphs, where $d$ is as large as $\Theta(\log\log{n}) $. We also show that this bound is tight, at least for LCAs constructed by using the current general techniques of simulating online algorithms. 

In the majority of the LCAs that have been considered thus far, because the number of queries per inquiry is $O(\log{n})$, it is enough that the ordering of the vertices is $O(\log{n})$-wise independent.
 In some cases, though, we can implement LCAs which only require $O(1)$-wise independence. For example, the maximal independent set LCA of \cite{ARVX11} requires $O(1)$-wise independence, and they show an LCA which requires $O(\log^2{n})$ space and $O(\log^3{n})$ time. In addition to the construction above, we show a simpler construction, which is useful for improving the running time and space of LCAs when only $O(1)$-wise independence is required. For example, this construction can be used to reduce the running time and space of the LCA of \cite{ARVX11} to $O(\log{n})$. 

In addition, as a by-product of our main results, we make some interesting observations about stochastic dominance, and in particular, stochastic dominance of certain binomial distributions.
\subsection{Related Work}

In the face of ever-growing networks and databases, the last decade or so has produced a large volume of work on handling the challenges posed by these giant entities. While in the past, much of the research in algorithms had focused on what can be done in polynomial time, for these new networks, even linear-time tractability  is often insufficient.  Many of the approaches to this problem therefore concern themselves with what can be done in sublinear time. The approaches have been very diverse.

The field of property testing asks whether some mathematical object, such as a graph, has a certain property, or is ``far'' from having that property (for an introduction and a recent survey, see \cite{Goldreich11} and \cite{Ron09} respectively). Streaming algorithms are required to use limited memory and quickly process data streams which are presented as a sequence of items (see \cite{Muth05} for a comprehensive introduction, and \cite{Zhang10} for a recent survey).  Sublinear approximation algorithms give approximate solutions to optimization problems, (e.g., \cite{PR07,NO08,RS11}). A major difference between all of these algorithms and LCAs is that LCAs require that w.h.p., the output will be correct on any input, while optimization
problems usually require a correct output only on most inputs. More importantly, LCAs require a consistent output for each inquiry, rather than only approximating a given global property. 

\emph{Locally decodable codes} (LDCs) \cite{KT00} which given an encoding of a message, provide quick access to the requested bits of the original message, can be viewed as LCAs.   Known constructions of LDCs are efficient and use small space \cite{Yek12}.   LCAs also generalize the \emph{reconstruction} models described in \cite{ACC+08, SS10, JR11}.  These models describe scenarios where an input string that has a  certain property, such as monotonicity, is assumed to be corrupted at a relatively small number of
locations.  The reconstruction algorithm gives fast query access to an uncorrupted version of the string that is close to the original input.  
Another related area of research is distributed algorithms which run in time independent (or almost independent) of the size of the network, often dependent on the maximal degree, which is assumed to be small (e.g., \cite{Linial92, NS95} for some classic results and \cite{BEPS12, PS13} for newer results). Recently, much focus has been given to constant time distributed algorithms (for a recent survey, see \cite{Suomela13}). These algorithms have a nice relationship to LCAs, in that any distributed algorithm which runs in constant time immediately gives an LCA for the problem, when the degree is constant. In fact, following the results Hassidim, Mansour and Vardi \cite{HMV14} and the results of this paper,  it is easy to see that this holds for any $d$-light graphs.

\subsection{Paper Organization}

In Section \ref{section:prelim} we define the terms we need for our results. We also prove some simple results on stochastic dominance and $k$-wise independence. In Section \ref{section:delight} we define $d$-light graphs, and show that the neighborhood of any suitably defined vertex set is not too large in such a graph. In Section \ref{section:orderings} we show a simple construction of a randomized ordering, which improves the space and time bounds of previous results. We show that we can make even greater improvements: Section \ref{subsection:better} is the heart of our main result - we show  how to generate a randomized ordering on the vertices which guarantees a fast running time for our local computation algorithms using a logarithmic seed. In Section \ref{section:expected}, we show that the expected number of vertices we need to query per inquiry is constant. In  Section \ref{section:onlinetolca}, we show how to use the results of Sections \ref{subsection:better} and \ref{section:expected} to obtain fast and space-efficient LCAs.
\section{Preliminaries}
\label{section:prelim}


We denote the set $\{1, 2, \ldots n\}$ by $[n]$. Let $X$ be a random variable, distributed according to the binomial distribution with parameters $n$ and $p$;  we denote this by $X \sim B(n,p)$.
We assume the standard uniform-cost RAM model, in which  the word size $w = O(\log{n})$, and it takes $O(1)$ to read and perform simple operations on words.

Let $G=(V,E)$ be a graph. The neighborhood of a vertex $v$, denoted $N(v)$, is the set of vertices that share an edge with $v$: $N(v)=\{u:(u,v)\in E\}$. The \emph{degree} of a vertex $v$ is $|N(v)|$. The neighborhood of a set of vertices $S \subseteq V$, denoted $N(S)$ is the set of all vertices $\{u : v \in S, u \in N(v) \setminus S\}$.

\subsection{Local Computation Algorithms}
We use the model of local computation algorithms (LCAs) of \cite{MV13} (a slight variation of the model of \cite{RTVX11}).

\begin{definition}[Local computation algorithms] A {\em $(t(n),$ $ s(n), $ $\delta(n))$ - local computation
algorithm} $\mcLA$ for a computational problem is a (randomized)
algorithm that receives an input of size $n$, and an inquiry $x$.
Algorithm $\mcLA$ uses at most $s(n)$ memory and replies to $x$ in time $t(n)$,  with probability
at least $1-\delta(n)$.
The algorithm must be \emph{consistent}, that is, for any fixing of its randomness, the algorithm's replies to all of the
possible inquiries combine to a single feasible solution to the
problem.
\end{definition}
\begin{rem}Usually, LCAs  require a random seed that is stored in memory and reused for every inquiry. In these cases, the space requirement of the LCA includes the space required to store this seed as well as the space to perform the computations of the algorithm.
\end{rem}

In this paper, we give LCAs for computational problems solvable by online algorithms. For simplicity, we only consider online graph algorithms on vertices; an analogous definition holds for algorithms on edges, and our results hold for them as well. Intuitively, such an algorithm is presented with the vertices of a graph $G=(V,E)$ in some arbitrary order. Once the algorithm is presented with a vertex $v$ (as well as the edges to the neighbors of $v$ that arrived before $v$), it must irrevocably output a value which we denote $f(v)$.  The output of the algorithm is the combination of all of these intermediate outputs, namely the function $f: V \rightarrow O$ (where $O$ is some arbitrary finite set).

We require that the online algorithm will also be neighborhood dependent in the sense that the value $f(v)$ is only a function of the values $\{f(u)\}$ for the neighbors $u$ of $v$ which the algorithm has already seen. 
Formally,
\begin{definition}[Neighborhood-dependent online graph algorithm]\label{def:NDonline}
Let $G=(V,E)$ be a graph, and let $O$ be some arbitrary finite set.
A \emph{neighborhood-dependent online graph algorithm} $\cal A$ takes as input a vertex $v \in V$ and a sequence of pairs $\{(u_1,o_1),\ldots(u_\ell,o_\ell)\}$ where  $\forall i, u_i\in V, o_i\in O$, and outputs a value in $O$. For every permutation $\Pi$ of the vertices in $V$, define the output of $\cal A$ on $G$ with respect to $\Pi$ as follows.
Denote by $v_i$ the vertex at location $i$ under $\Pi$. Define $f(v_i)$ recursively by invoking $\mcA$ on $v_i$ and the sequence of values $(v_j,f(v_j))$, such that $j<i$ and $(v_j,v_{i})\in E$.

Let $R$ be a search problem on graphs. 
We say that $\cal A$ is a neighborhood-dependent online graph algorithm for $R$ if for every graph $G$ and every permutation $\Pi$, the output of $\cal A$ on $G$ with respect to $\Pi$ satisfies the relation defining $R$.\\
\end{definition}

\subsection{Stochastic Dominance}
\label{section:stoch}
We consider graphs with degree distribution bounded by the binomial distribution. We rely on stochastic dominance.
\begin{definition}[Stochastic dominance]
For any two distributions over the reals, $X$ and $Y$, we say that $X$ is \emph{stochastically dominated by}\footnote{This is usually called \emph{first-order stochastic dominance}. As this is the only measure of stochastic dominance we use, we omit the term ``first-order'' for brevity.} $Y$ if for every real number $x$ it holds that $\Pr[X> x]\leq \Pr[Y> x]$. We denote this by $X \leq_{st} Y.$
\end{definition}


We recall the following facts about stochastic dominance (see, e.g., \cite{SS94}).
\begin{enumerate}\label{fact:sdexpect} 
\item If $X \sd Y$ and $Y \sd Z$ then $X \sd Z$.
\item For any integer $n$, let $\{X_1, X_2, \ldots\ X_n\}$ and $\{Y_1, Y_2,\ldots Y_n\}$ be two sequences of independent random variables. If $\forall i, X_i \sd Y_i$, then
$$\dsum_{j=1}^{n} X_j \sd \dsum_{j=1}^{n}Y_j.$$
\end{enumerate}

We need the following lemmas, which we prove in Appendix \ref{app:lrst}. 
\begin{lemma} \label{lemma:stoch2}
Let $\{Y_1, Y_2, \ldots, Y_n\}$  be a sequence of independent random variables. Let  $\{X_1, X_2,\ldots, X_n\}$ be a series of (possibly) dependent random variables. If it holds that $X_1 \sd Y_1$, and for any $1\leq i \leq n$, conditioned on any realization of $X_1,\ldots X_{i-1}$, it holds that $X_i \sd Y_i$, then
 $$\dsum_{j=1}^{n} X_j \sd \dsum_{j=1}^{n}Y_j.$$
\end{lemma}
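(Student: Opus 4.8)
The plan is to prove Lemma \ref{lemma:stoch2} by an explicit coupling: realize all the variables on a single probability space so that the dominating partial sum lies pointwise above the dominated one, and then read off stochastic dominance of the sums. The tool is the classical \emph{quantile coupling} for real-valued random variables. For a real random variable $Z$ with CDF $F_Z$, let $F_Z^{-1}(u) = \inf\{z : F_Z(z) \ge u\}$ be its generalized inverse; then for $U$ uniform on $(0,1)$ one has $F_Z^{-1}(U) \stackrel{d}{=} Z$, and, by monotonicity of the generalized inverse, $W \sd Z$ (equivalently $F_W(x) \ge F_Z(x)$ for all $x$) implies $F_W^{-1}(u) \le F_Z^{-1}(u)$ for every $u \in (0,1)$. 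I would state and briefly justify these two facts first.

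Next I would build the coupling inductively. Draw $U_1,\ldots,U_n$ i.i.d.\ uniform on $(0,1)$. Set $\hat Y_i := F_{Y_i}^{-1}(U_i)$; since the $U_i$ are independent, the $\hat Y_i$ are independent with $\hat Y_i \stackrel{d}{=} Y_i$, so $\sum_j \hat Y_j \stackrel{d}{=} \sum_j Y_j$. For the $X$'s, set $\hat X_1 := F_{X_1}^{-1}(U_1)$, and recursively, once $\hat X_1,\ldots,\hat X_{i-1}$ have been defined as functions of $U_1,\ldots,U_{i-1}$, let $G_i$ be the (regular) conditional CDF of $X_i$ given $(X_1,\ldots,X_{i-1}) = (\hat X_1,\ldots,\hat X_{i-1})$, and put $\hat X_i := G_i^{-1}(U_i)$. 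Because $U_i$ is independent of $(U_1,\ldots,U_{i-1})$, a routine induction gives $(\hat X_1,\ldots,\hat X_n) \stackrel{d}{=} (X_1,\ldots,X_n)$, hence $\sum_j \hat X_j \stackrel{d}{=} \sum_j X_j$.

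The crux is then the pointwise comparison. The hypothesis says exactly that, conditioned on any realization of $X_1,\ldots,X_{i-1}$, the conditional law of $X_i$ is stochastically dominated by $Y_i$, i.e.\ $G_i(x) \ge F_{Y_i}(x)$ for all $x$; by monotonicity of generalized inverses this yields $\hat X_i = G_i^{-1}(U_i) \le F_{Y_i}^{-1}(U_i) = \hat Y_i$ almost surely, for every $i$ (crucially, the \emph{same} $U_i$ drives both). Summing over $i$, $\sum_{j=1}^n \hat X_j \le \sum_{j=1}^n \hat Y_j$ almost surely, so $\Pr[\sum_j \hat X_j > t] \le \Pr[\sum_j \hat Y_j > t]$ for every $t \in \R$, and the two distributional identities above turn this into $\sum_j X_j \sd \sum_j Y_j$.

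The only delicate point — the main obstacle — is the inductive construction of the $\hat X_i$: one must invoke regular conditional distributions (unproblematic for real-valued variables) and check that feeding independent uniforms through the successive conditional quantile functions faithfully reproduces the joint law of $(X_1,\ldots,X_n)$. Everything else is bookkeeping. One could instead argue by induction on $n$, conditioning on $X_1,\ldots,X_{n-1}$ and combining the $n=1$ case with Facts 1–2, but the coupling makes the reduction to the pointwise inequality $\sum_j \hat X_j \le \sum_j \hat Y_j$ entirely transparent.
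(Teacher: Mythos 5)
Your argument is correct, but it takes a genuinely different route from the paper's. The paper first proves an auxiliary two-variable lemma (Lemma~\ref{lem:stoch}): writing $\Pr[X_1+X_2>k]$ via the law of total probability as $\sum_x \Pr[X_2 > k-x \mid X_1=x]\Pr[X_1=x]$, it replaces the conditional law of $X_2$ by $Y_2$ (using the hypothesis), and then replaces $X_1$ by $Y_1$ (using $X_1 \sd Y_1$ and independence of $Y_2$), obtaining $X_1+X_2 \sd Y_1+Y_2$; Lemma~\ref{lemma:stoch2} then follows by induction, applying the two-variable case with $Z_1=\sum_{j\le i}X_j$ and $Z_2=X_{i+1}$ (noting that conditional dominance given $X_1,\ldots,X_i$ implies conditional dominance given $\sum_{j\le i}X_j$, by averaging). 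You instead build an explicit Rosenblatt/quantile coupling on a common probability space and derive an almost-sure pointwise inequality $\hat X_i \le \hat Y_i$, summing to get dominance of the sums. The trade-off is the one you flag yourself: your coupling is conceptually transparent (it exhibits the dominance as a pointwise inequality rather than an inequality of tail probabilities) and works verbatim for general real-valued variables, but it leans on regular conditional distributions and the monotonicity of generalized inverses; the paper's argument is more elementary and self-contained, consisting of a single law-of-total-probability calculation plus a routine induction, at the cost of being stated (in the auxiliary lemma) for discrete variables. One small remark on your write-up: you should state explicitly, as the paper does implicitly via the induction, that the hypothesis gives dominance of $G_i$ by $F_{Y_i}$ for the particular realization $(\hat X_1,\ldots,\hat X_{i-1})$ appearing in the coupling — i.e.\ the hypothesis is a statement over \emph{all} realizations, so it in particular covers the coupled one; you do say this, and it is the right observation, so the proof is complete.
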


\begin{lemma}\label{lemma:lrst}
Let  $Z$ and $X$ be random variables such that $Z \sim 2d + B(n^2, \frac{2d}{n^2})$ and $X \sim B(\alpha, \frac{d}{\alpha})$, where $d \leq \alpha \leq n$. Then $X \sd Z$.
\end{lemma}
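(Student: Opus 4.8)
The claim is that $X \sim B(\alpha, d/\alpha) \sd Z$ where $Z \sim 2d + B(n^2, 2d/n^2)$, for $d \le \alpha \le n$. The plan is to break this into two stochastic-dominance steps joined by transitivity (Fact 1 from Section~\ref{section:stoch}). The natural intermediate distribution is a Poisson-like object or, more elementarily, another binomial with the same mean $d$ but many more trials. Concretely, I would first show $B(\alpha, d/\alpha) \sd B(n^2, d/n^2)$, i.e.\ that refining a binomial into more (but individually less likely) trials while keeping the mean fixed only increases it in the stochastic order. Then I would show $B(n^2, d/n^2) \sd 2d + B(n^2, 2d/n^2)$, which should be easy: the shift by $2d$ is enormous compared to the fluctuation, and even $B(n^2, d/n^2) \sd B(n^2, 2d/n^2) \sd 2d + B(n^2, 2d/n^2)$ works since increasing $p$ monotonically increases a binomial in the stochastic order (couple each Bernoulli trial), and adding a nonnegative constant only helps.

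For the first and more substantial step, $B(\alpha, d/\alpha) \sd B(n^2, d/n^2)$, the cleanest route is a coupling argument. Write $\alpha m = n^2$ is not generally an integer, so instead I would argue in two sub-steps: (a) $B(\alpha, d/\alpha) \sd B(k\alpha, d/(k\alpha))$ for any positive integer $k$, by replacing each of the $\alpha$ Bernoulli$(d/\alpha)$ trials with a sum of $k$ Bernoulli$(d/(k\alpha))$ trials — one checks $\mathrm{Bernoulli}(d/\alpha) \sd \mathrm{Bin}(k, d/(k\alpha))$ directly since $\Pr[\mathrm{Bin}(k,d/(k\alpha)) \ge 1] = 1 - (1-\tfrac{d}{k\alpha})^k \ge \tfrac{d}{\alpha}\cdot\tfrac{1}{1} $... actually $1-(1-x)^k \le kx$ the wrong way, so let me instead note $1-(1-\tfrac{d}{k\alpha})^k \le d/\alpha$ by Bernoulli's inequality — hmm, that gives the *wrong* direction, so the right statement is that the finer binomial is *smaller*. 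This sign issue is exactly the main obstacle: I need to get the monotonicity direction right. The correct fact is that for fixed mean $\mu$, $B(m,\mu/m)$ is stochastically *decreasing* in $m$ (it converges down to Poisson$(\mu)$ from... no—). Let me instead just prove $B(\alpha,d/\alpha)\sd B(n^2,d/n^2)$ is actually *false* as stated and what's true is that both are dominated by the shifted version; I would verify the precise chain by computing tail bounds: a Chernoff bound gives $\Pr[B(\alpha,d/\alpha) > 2d + t] \le e^{-t^2/(6d+2t)}$ or similar, while $\Pr[2d + B(n^2,2d/n^2) > 2d + t] = \Pr[B(n^2,2d/n^2) > t]$, and for $t \le 2d$ the left side is $\le 1$ trivially while $B(n^2,2d/n^2)$ has mean $2d \ge t$... this still needs care.

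Given the delicacy, the approach I would actually commit to is: reduce to showing $B(\alpha, d/\alpha) \sd 2d + B(\beta, d/\beta)$ for a single well-chosen $\beta$ and then handle the passage from $B(\beta,d/\beta)$ to $B(n^2,2d/n^2)$ by a direct Bernoulli-coupling once $\beta$ divides $n^2$ or by the elementary inequality $B(m, p) \sd B(m', p')$ whenever $m \le m'$ and $mp \le m'p'$ with $p' \ge p$ — and the "$\ge 1$" shift absorbs any slack. The role of the constant $2d$ and the specific parameters $n^2$ and $2d/n^2$ strongly suggests the intended proof just wants enough room that a crude argument closes: plausibly the author proves pointwise that the CDF of $X$ dominates that of $Z$ by comparing $\Pr[X \le x]$ term by term for $x \ge 2d$ (where it is automatic since... no) — more likely, they invoke Lemma~\ref{lemma:stoch2} or a likelihood-ratio argument (the macro $\lr$ is defined, hinting LR order is used in the appendix: $\lr$ implies $\sd$, and binomials with proportional parameters are LR-ordered). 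So my final plan: (1) show $B(\alpha,d/\alpha) \lr B(n^2, 2d/n^2)$ fails but $B(\alpha, d/\alpha)$ and $B(n^2,2d/n^2)$ are comparable after the shift via an explicit coupling where I embed $\alpha$ trials into $n^2$ slots, giving $B(\alpha,d/\alpha) \sd B(n^2, d/n^2) \sd B(n^2,2d/n^2) \sd 2d+B(n^2,2d/n^2)$; (2) justify each inequality by the standard Bernoulli-coupling lemma and Fact 2 of Section~\ref{section:stoch}. The one genuinely nontrivial inequality, $B(\alpha,d/\alpha) \sd B(n^2,d/n^2)$, I would prove by the substitution coupling above after pinning down that $1 - (1-d/n^2)^{n^2/\alpha} \ge d/\alpha$ — which holds because $1-(1-x)^r \ge rx(1 - rx/2) \ge \dots$; checking this inequality carefully for the given range $d \le \alpha \le n$, $n^2/\alpha \ge n \ge 1$, is the step where I expect to spend the real effort.
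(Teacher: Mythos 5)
There is a genuine gap, and it is one you flagged yourself mid-stream and then seemingly talked yourself back out of. Your committed plan runs through the chain
$B(\alpha, d/\alpha) \sd B(n^2, d/n^2) \sd B(n^2, 2d/n^2) \sd 2d + B(n^2, 2d/n^2)$,
and the first link fails. By Bernoulli's inequality, $(1 - d/n^2)^{n^2/\alpha} \ge 1 - d/\alpha$, i.e.\ $\Pr[B(n^2/\alpha, d/n^2) = 0] \ge \Pr[\mathrm{Bern}(d/\alpha)=0]$, so the refined binomial does \emph{not} dominate the Bernoulli and the coupling breaks at the very first cell. The inequality you end by promising to verify, $1 - (1-d/n^2)^{n^2/\alpha} \ge d/\alpha$, is simply false. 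You noticed exactly this sign problem in your second paragraph (``that gives the wrong direction'') but never resolved it.

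The paper's proof is built precisely around fixing that sign, which is why the target is $2d + B(n^2, 2d/n^2)$ and not $B(n^2, d/n^2)$: doubling the per-trial probability gives $1 - (1-2d/n^2)^{\lceil n^2/\alpha\rceil} > 1 - e^{-2d/\alpha} > d/\alpha$ (via $(1-x/y)^y < e^{-x} < 1 - x/2$), so a single $\mathrm{Bern}(d/\alpha)$ \emph{is} dominated by $B(\lceil n^2/\alpha\rceil, 2d/n^2)$. The paper then packages this as: write $B(\alpha, d/\alpha) \sd 1 + \sum_{i=1}^{\alpha-1}\mathrm{Bern}(d/\alpha)$, dominate each Bernoulli by $B(\lceil n^2/\alpha\rceil, 2d/n^2)$, observe $(\alpha-1)\lceil n^2/\alpha\rceil \le n^2$ to merge those into $B(n^2, 2d/n^2)$, and finally absorb the leftover $+1$ into the $2d$ shift; the case $\alpha \le 2d$ is trivial since then $X \le \alpha \le 2d \le Z$ pointwise. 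So the $2d$ in the statement is doing two jobs (flipping Bernoulli's inequality and absorbing the $+1$ / small-$\alpha$ slack), not just ``providing room for a crude Chernoff argument.'' Your instinct that the shift and the doubled probability are load-bearing was right, but without the $e^{-2x} < 1-x$ step your intermediate chain has a false link.
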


 \subsection{Static and Adaptive $k$-wise Independence}
 \begin{definition}[$k$-wise independent hash functions] For $n,L,k \in \N$ such that $k \leq n$, a
 family of functions $\mcH = \{h : [n] \rightarrow [L]\}$ is \emph{$k$-wise independent} if for all distinct $x_1, x_2, \ldots ,x_k \in [n]$, when
 $H$ is sampled uniformly from $\mcH$ we have that the random variables $H(x_1), H(x_2), \ldots ,H(x_k)$ are independent and uniformly distributed in $[L]$.
 \end{definition}

 To quantify what we mean by ``almost'' $k$-wise independence, we use the notion of \emph{statistical distance}.
 \begin{definition}[Statistical distance]
  For random variables $X$ and $Y$ taking values in $\mathcal{U}$, their \emph{statistical distance}  is $$\Delta(X, Y) = max_{D\subset \mathcal{U}}|\Pr[X \in D] - \Pr[Y \in D]|.$$ For $\eps\geq 0$, we say that $X$ and
 $Y$ are $\eps$-close if $\Delta(X, Y)\leq \eps$.
 \end{definition}

 \begin{definition}[$\eps$-almost $k$-wise independent hash functions] For $n,L,k \in \N$ such that $k \leq n$, let $Y$ be a random variable sampled uniformly at random from $[L]^k$. For $\eps\geq 0$, a family of functions $\mcH = \{h : [n] \rightarrow [L]\}$ is \emph{$\eps$-almost $k$-wise independent} if for all distinct $x_1, x_2, \ldots ,x_k \in [n]$, we have that $\langle H(x_1), H(x_2),\ldots ,H(x_k)\rangle $ and $Y$ are $\eps$-close,  when
 $H$ is sampled uniformly from $\mcH$.  We sometimes use the term ``$\eps$-dependent'' instead of ``$\eps$-almost independent''.
 \end{definition}

Another interpretation of $\eps$-almost $k$-wise independent hash functions is as functions that are indistinguishable from uniform for a static distinguisher that is allowed to query the function in at most $k$ places. In other words, we can imagine the following game: the (computationally unbounded) distinguisher $\cal D$ selects $k$ inputs $x_1, x_2, \ldots ,x_k \in [n]$, and gets in return values $F(x_1), F(x_2),\ldots ,F(x_k)$ where $F: [n] \rightarrow [L]$ is either chosen from $\mcH$ or is selected uniformly at random. $\mcH$ is $\eps$-almost $k$-wise independent if no such $\cal D$ can differentiate the two cases with advantage larger than $\epsilon$. In this paper we will need to consider {\em adaptive} distinguishers that can select each $x_i$ based on the values $F(x_1), F(x_2),\ldots ,F(x_{i-1})$.
 \begin{definition}[$\eps$-almost adaptive $k$-wise independent hash functions] \label{def:adaptiveKwise} For $n,L,k \in \N$ such that $k \leq n$ and for $\eps\geq 0$, a
 family of functions $\mcH = \{h : [n] \rightarrow [L]\}$ is \emph{$\eps$-almost adaptive $k$-wise independent} if for every (computationally unbounded) distinguisher $\cal D$ that makes at most $k$ queries to an oracle $F$ it holds that $$|\Pr[{\cal D}^H(1^n)=1] - \Pr[{\cal D}^G(1^n)=1] |\leq \eps$$  when
 $H$ is sampled uniformly from $\mcH$ and $G: [n] \rightarrow [L]$ is  selected uniformly at random.

 We say that $\mcH$ is \emph{adaptive $k$-wise independent} if it is $0$-almost adaptive $k$-wise independent.
 \end{definition}

Maurer and Pietrzak \cite{MP04} showed a very efficient way to transform a family of (static) $k$-wise almost independent functions into a family of adaptive $k$-wise almost independent functions with similar parameters. For our purposes, it is enough to note that every family of (static) $k$-wise almost independent function is in itself also adaptive $k$-wise almost independent. While the parameters deteriorate under this reduction, they are still good enough for our purposes. We provide the reduction here for completeness.

\begin{lemma}\label{lem:static2adaptive} 
For $n,L,k \in \N$ such that $k \leq n$ and for $\eps\geq 0$, every family of functions $\mcH = \{h : [n] \rightarrow [L]\}$ that is $\eps$-almost $k$-wise independent is also $\eps L^k$-almost adaptive $k$-wise independent.
\end{lemma}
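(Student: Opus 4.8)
The plan is to reduce the behavior of an arbitrary adaptive distinguisher to a collection of \emph{static} tests, one for each leaf of its query tree, and then sum the resulting errors, paying a factor of $L^k$ for the number of leaves.

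First I would make a few standard simplifications. Fix a (computationally unbounded) distinguisher $\cal D$ making at most $k$ queries to its oracle. Since the advantage of $\cal D$ is at most the advantage of $\cal D$ under the best fixing of its internal randomness, it suffices to bound the advantage of a \emph{deterministic} $\cal D$. Next, because the oracle $F$ is a function, re-querying an input returns the same value and yields no new information, so we may assume $\cal D$ never repeats a query; and by padding with queries to arbitrary fresh points we may assume $\cal D$ makes exactly $k$ pairwise-distinct queries before outputting a bit. With these conventions, the entire interaction of $\cal D$ with an oracle $F$ is determined by the vector of answers it receives: the first query $x_1\in[n]$ is fixed, the $(i{+}1)$-st query is a fixed function $x_{i+1}(y_1,\dots,y_i)$ of the answers seen so far, and the final output is a fixed function $\mathrm{out}(y_1,\dots,y_k)\in\{0,1\}$.

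Next I would set up the query-tree decomposition. For each answer vector $\vec y=(y_1,\dots,y_k)\in[L]^k$, write $\vec x(\vec y)=\big(x_1,\;x_2(y_1),\;\dots,\;x_k(y_1,\dots,y_{k-1})\big)$ for the tuple of inputs $\cal D$ queries along that branch; by the distinctness convention the $k$ coordinates of $\vec x(\vec y)$ are pairwise distinct. Then for any oracle $F$,
$$\Pr[{\cal D}^F(1^n)=1]\;=\;\sum_{\vec y\,:\,\mathrm{out}(\vec y)=1}\Pr\!\big[\,F(x_i(\vec y))=y_i \text{ for all } i\in[k]\,\big],$$
since the event that $\cal D$ traverses branch $\vec y$ and accepts is precisely the event that $F$ takes the prescribed values on those prescribed distinct inputs. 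When $F=G$ is uniformly random, each summand equals $L^{-k}$ (the inputs are distinct); when $F=H$ is drawn from $\mcH$, the $\eps$-almost $k$-wise independence applied to the distinct inputs $x_1(\vec y),\dots,x_k(\vec y)$ says $\langle H(x_1(\vec y)),\dots,H(x_k(\vec y))\rangle$ is $\eps$-close in statistical distance to uniform on $[L]^k$, and taking the distinguishing set to be the singleton $\{\vec y\}$ gives $\big|\Pr[\,H(x_i(\vec y))=y_i\ \forall i\,]-L^{-k}\big|\le\eps$. Summing these per-branch discrepancies over the at most $L^k$ accepting vectors $\vec y$ yields $\big|\Pr[{\cal D}^H(1^n)=1]-\Pr[{\cal D}^G(1^n)=1]\big|\le\eps L^k$, which is the claim.

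The only real subtlety — the ``hard part'', such as it is — is that adaptivity forbids invoking the $\eps$-almost $k$-wise independence guarantee directly on one fixed $k$-tuple of inputs, because different branches of the query tree interrogate different tuples. The decomposition above is exactly what circumvents this: it expresses the adaptive acceptance probability as a sum over branches, each of which \emph{is} a static event on a fixed distinct $k$-tuple, and the union over all $L^k$ leaves is precisely the source of the $L^k$ loss recorded in the statement. (One should also double-check the bookkeeping in the padding step, namely that replacing ``at most $k$ queries'' by ``exactly $k$ distinct queries'' does not alter either acceptance probability — it does not, since the added queries are to points never used to branch or to decide the output.)
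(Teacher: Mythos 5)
Your proof is correct, and rests on the same underlying decomposition by query-tree branches that powers the paper's argument; the difference is one of packaging. The paper proceeds by simulation: it builds a \emph{static} distinguisher $\mathcal D'$ that samples a random answer vector $\vec y\in[L]^k$, runs $\mathcal D$ against those fictitious answers to determine the query tuple $\vec x(\vec y)$, then queries $F$ on $\vec x(\vec y)$ all at once and outputs $\mathcal D$'s verdict precisely when $F$ matches $\vec y$. Since $\Pr[{\mathcal D'}^F=1]=L^{-k}\Pr[{\mathcal D}^F=1]$ and $\mathcal D'$ is static with $k$ distinct queries, the $\eps$-bound on $\mathcal D'$'s advantage lifts to an $\eps L^k$ bound on $\mathcal D$'s. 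You instead expand $\Pr[{\mathcal D}^F=1]$ directly as a sum over the at most $L^k$ accepting leaves of the query tree and bound each per-leaf discrepancy by $\eps$ via the singleton-event reading of statistical distance. The two arguments are dual: the paper's random $\vec y$ inside $\mathcal D'$ averages exactly the same per-branch discrepancies that you sum explicitly, so the $L^k$ loss appears in one place as the number of leaves and in the other as the inverse of the guessing probability. Your version is slightly more elementary (no intermediate distinguisher needs to be constructed), while the paper's makes the reduction to the static model syntactically explicit, which matches the way the $\eps$-almost $k$-wise guarantee is phrased in Definition 2.7.
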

\begin{proof}
The proof is by a simulation argument. Consider an adaptive distinguisher $\cal D$ that makes at most $k$ queries; assume without loss of generality that $\cal D$ always makes {\em exactly} $k$ distinct queries. We can define the following static distinguisher $\cal D'$ with oracle access to some function $F$ as follows: $\cal D'$ samples $k$ distinct outputs $y_1, y_2, \ldots ,y_k \in [L]$ uniformly at random. $\cal D'$ then simulates $\cal D$ by answering the $i$th query $x_i$ of $\cal D$ with $y_i$. Let $\sigma$ be the bit $\cal D$ would have output in this simulation. Now $\cal D'$  queries $F$ for $x_1,x_2,\ldots x_k$ (note that $\cal D'$ makes all of its queries simultaneously and is therefore static). If the replies obtained are consistent with the simulation (i.e.\  $y_i=F(x_i)$ for every $i$) then $\cal D'$ outputs $\sigma$. Otherwise, it outputs $0$. By definition, $$\Pr[{\cal D'}^F(1^n)=1] = L^{-k}\Pr[{\cal D}^F(1^n)=1].$$
This implies that for every $H$ and $G$
$$|\Pr[{\cal D'}^H(1^n)=1] - \Pr[{\cal D'}^G(1^n)=1]| = L^{-k}|\Pr[{\cal D}^H(1^n)=1] - \Pr[{\cal D}^G(1^n)=1]|.$$  The proof follows. 
\end{proof}

In particular, Lemma~\ref{lem:static2adaptive} implies that $k$-wise independent functions are also adaptive $k$-wise independent, and we get the following theorem:

\begin{theorem}[cf.\ \cite{Vad12} Proposition~3.33 and Lemma~\ref{lem:static2adaptive}]\label{thm:simple}
For $n,k \in \N$ such that $k \leq n$ and $n$ is a power of $2$, there exists a family of functions $\mcH =  \{h : n \rightarrow n \}$ that is adaptive $k$-wise independent, whose seed length is $k \log{n}$. The time required to evaluate each $h$ is $O(k)$ word operations and the  memory required per evaluation is $O(\log{n})$ bits (in addition to the memory for the seed).
\end{theorem}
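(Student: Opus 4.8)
The plan is to invoke a standard construction of $k$-wise independent hash functions and then appeal to Lemma~\ref{lem:static2adaptive} to upgrade it to the adaptive setting. First I would recall the classical polynomial-based construction: identify $[n]$ with the finite field $\mathbb{F}_{2^m}$, where $n = 2^m$, and for a seed $(a_0, a_1, \ldots, a_{k-1}) \in \mathbb{F}_{2^m}^k$ define $h_{a_0,\ldots,a_{k-1}}(x) = \sum_{i=0}^{k-1} a_i x^i$, with all arithmetic in $\mathbb{F}_{2^m}$. A standard interpolation argument (any $k$ distinct points determine a unique degree-$<k$ polynomial) shows that this family $\mcH$ is exactly $k$-wise independent: for distinct $x_1,\ldots,x_k$, the tuple $(h(x_1),\ldots,h(x_k))$ is uniform over $[n]^k$. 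This gives seed length $k \log n$ bits, which is the bound claimed. This is precisely the content of \cite[Proposition 3.33]{Vad12}.

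Next I would address the resource bounds. Evaluating $h$ at a point $x$ requires evaluating a degree-$(k-1)$ polynomial over $\mathbb{F}_{2^m}$ with $m = \log n$; by Horner's rule this is $O(k)$ field additions and multiplications, and since each field element fits in a single machine word of size $w = O(\log n)$ (by the RAM model assumption in the Preliminaries), each field operation is $O(1)$ word operations, for a total of $O(k)$ word operations. The working memory needed beyond the stored seed is a constant number of field elements, i.e. $O(\log n)$ bits.

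Finally, to get the \emph{adaptive} guarantee with no loss in parameters, I would apply Lemma~\ref{lem:static2adaptive} with $\eps = 0$: a $0$-almost $k$-wise independent family is $0 \cdot L^k = 0$-almost adaptive $k$-wise independent, i.e.\ genuinely adaptive $k$-wise independent. Here $L = n$, so the (vacuous) degradation is harmless. Combining the three points yields the theorem.

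I do not anticipate a genuine obstacle here; the statement is essentially a bookkeeping combination of a textbook construction with the already-proved Lemma~\ref{lem:static2adaptive}. The only mild subtlety is making sure the parameter $\eps = 0$ case of Lemma~\ref{lem:static2adaptive} is what is needed (so that adaptivity costs nothing), and confirming that the word-size assumption of the RAM model legitimately makes each $\mathbb{F}_{2^m}$ operation $O(1)$ time — both of which are immediate from the setup in Section~\ref{section:prelim}.
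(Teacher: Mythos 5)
Your proof is correct and follows precisely the route the paper intends: the polynomial-evaluation construction over $\mathbb{F}_{2^m}$ is exactly the content of the cited Proposition~3.33 of \cite{Vad12}, and the upgrade to adaptivity is the $\eps=0$ case of Lemma~\ref{lem:static2adaptive}, just as you say. The paper does not write out a proof of this theorem at all (it is stated as an immediate corollary of the two cited results), so your proposal correctly reconstructs the implicit argument.
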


 Naor and Naor \cite{NN90} showed that relaxing from $k$-wise to almost $k$-wise independence can imply significant savings in the family size. As we we also care about the evaluation time of the functions, we will employ a very recent result of Meka, Reingold, Rothblum and Rothblum \cite{MRRR13} (which, using Lemma~\ref{lem:static2adaptive}, also applies to {\em adaptive} almost $k$-wise independence). The following is specialized from their work to the parameters we mostly care about in this work.

 \begin{theorem} [\cite{MRRR13} and Lemma~\ref{lem:static2adaptive}]\label{thm:almost}
 For every $n,L,k \in \N$ and $\eps > 0$ such that $L$ is a power of $2$, and such that $k\cdot L=O(\log n)$ and $1/\eps=poly(n)$, there is a family of $\eps$-almost adaptive $k$-wise  independent functions
 $\mcH =  \{h : n \rightarrow L \}$  such that choosing a random function from $\mcH$ takes $O(\log{n})$ random bits. The time required to evaluate each $h$ is $O(\log k)$ word operations and the memory is $O(\log{n})$ bits.
 \end{theorem}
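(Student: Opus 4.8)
The plan is to derive this as essentially a corollary of the Meka--Reingold--Rothblum--Rothblum construction \cite{MRRR13} together with Lemma~\ref{lem:static2adaptive}, which upgrades static almost-independence to adaptive almost-independence at the price of an $L^k$ factor in the error. First I would apply \cite{MRRR13} to obtain a family $\mcH = \{h : [n] \to [L]\}$ that is $\eps'$-almost \emph{static} $k$-wise independent, where I choose $\eps' := \eps / L^k$; padding $n$ up to the next power of $2$ if necessary, at no asymptotic cost. The point of taking $\eps'$ this small is precisely so that, after invoking Lemma~\ref{lem:static2adaptive}, the resulting family is $\eps' L^k = \eps$-almost \emph{adaptive} $k$-wise independent, which is what the theorem demands.

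The key point that makes this substitution harmless is that $L^k$ is only polynomial in $n$ under the stated hypotheses. Since $L \ge 1$ we have $\log(L^k) = k \log L \le kL = O(\log n)$, so $L^k = 2^{O(\log n)} = \poly(n)$; combined with $1/\eps = \poly(n)$ this gives $1/\eps' = \poly(n)$. Hence the parameters $(n, L, k, \eps')$ still lie in the regime covered by the specialized form of \cite{MRRR13} quoted above ($L$ a power of $2$, $kL = O(\log n)$, inverse error polynomial in $n$), so that construction delivers, for $\mcH$: a seed of $O(\log n)$ random bits, evaluation cost $O(\log k)$ word operations, and $O(\log n)$ bits of working memory per evaluation (one checks that each term in \cite{MRRR13}'s seed-length bound — $\log\log n$, $\log k$, $\log L$, and $\log(1/\eps') = \log(1/\eps) + k\log L$ — is $O(\log n)$ here).

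Finally, I would observe that Lemma~\ref{lem:static2adaptive} alters nothing about the family $\mcH$ itself: its reduction is purely in the analysis, bounding an adaptive distinguisher's advantage by $L^k$ times a static distinguisher's advantage, so the \emph{same} $\mcH$, with the same seed length, evaluation time, and memory footprint, is now certified $\eps$-almost adaptive $k$-wise independent. The only real obstacle is the bookkeeping that keeps us inside \cite{MRRR13}'s parameter regime after shrinking $\eps$ to $\eps/L^k$ — namely the bound $L^k = \poly(n)$, which rests on $\log L \le L$ and $kL = O(\log n)$ — and the harmless power-of-$2$ padding of the domain; everything else is routine.
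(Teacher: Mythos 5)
Your proposal is correct and matches what the paper intends: the paper gives no explicit proof of Theorem~\ref{thm:almost}, simply attributing it to \cite{MRRR13} combined with Lemma~\ref{lem:static2adaptive}, and your reconstruction — instantiate \cite{MRRR13} with error $\eps' = \eps/L^k$, verify $1/\eps' = \poly(n)$ via $\log(L^k) = k\log L \leq kL = O(\log n)$, then apply the lemma noting that it changes only the analysis and not the family — is exactly the intended chain of reasoning. The one tiny slip is saying you pad $n$ to a power of $2$; the hypothesis requires $L$, not $n$, to be a power of $2$, but this has no bearing on the argument.
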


The property of $k$-wise almost independent hash functions $H$ that we will need is that an algorithm querying $H$ will not query ``too many" preimages of any particular output of $H$. Specifically, if the algorithm queries $H$ more than $cL \log{n}$ times, none of the values will appear more than twice their expected number. More formally:
\begin{proposition}
\label{prop:1}
There exists a constant $c$ such that following holds. Let $n,L,k \in \N$ be such that $k \leq n$ and let  $\eps\geq 0$. Let $\mcH = \{h : [n] \rightarrow [L]\}$ be a family of functions that is $\eps$-almost adaptive $k$-wise independent. Let $\cal A$ be any procedure with oracle access to $H$ sampled uniformly from $\mcH$ and let $\ell$ be any value in $[L]$. Define the random variable $m$ to be the number of queries $\cal A$ makes and the random variables $x_1,x_2,\ldots, x_{m}$ to be those queries. Then conditioned on $c L \log n \leq m\leq k$, it holds that:
$$\Pr[|\{x_i| H(x_i)=\ell\}| > 2m/L] \leq \frac{1}{2n^5}+\eps.$$
\end{proposition}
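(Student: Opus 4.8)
The key difficulty, relative to the naive union bound, is that we must bound a \emph{conditional} probability given the event $E := \{cL\log n \le m \le k\}$, which could have tiny probability (e.g.\ if $\cal A$ rarely makes that many queries). So we cannot simply say ``the bad event has probability at most $1/(2n^5)+\eps$ unconditionally, hence also conditionally.'' Instead, the plan is to show the bad event is essentially impossible regardless of conditioning, by reducing to a \emph{worst-case} padding argument. First I would reduce to the static (non-adaptive) case using Definition~\ref{def:adaptiveKwise}: the quantity $\Pr[|\{x_i : H(x_i)=\ell\}| > 2m/L \mid E]$ differs by at most $\eps$ from the corresponding quantity with a truly uniform $G$ in place of $H$. (More care is needed here since $E$ itself is a random event — but note that whether $\cal A$ ``stops'' and the identities of $x_1,\dots,x_m$ are all determined adaptively by the oracle's answers, so the whole transcript, including the value of $m$, is a function of at most $k$ oracle queries; thus the joint distribution of $(m, x_1,\dots,x_m, H(x_1),\dots,H(x_m))$ is $\eps$-close in statistical distance to the same object built from a uniform function. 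Statistical distance $\le\eps$ transfers to any conditional probability \emph{as long as} we add $\eps$ to both the numerator event probability and worry about the denominator — this is the subtle point, see below.)

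The cleanest route avoids the denominator issue entirely by bounding the bad probability \emph{for every fixed value $m_0$ of $m$ with $cL\log n \le m_0 \le k$}, and doing so uniformly over all possible prefixes. Concretely, for a truly random $G$: fix any adaptive strategy and any stopping value $m_0 \in [cL\log n, k]$; I claim $\Pr_G[\,|\{i \le m_0 : G(x_i)=\ell\}| > 2m_0/L \,\wedge\, m = m_0\,] \le \Pr_G[\,m=m_0\,]\cdot \tfrac{1}{2n^5}$, i.e.\ the bad event has conditional probability at most $1/(2n^5)$ \emph{given any fixed $m=m_0$}. To see this, condition on the event $\{m=m_0\}$ and on the realized query sequence $x_1,\dots,x_{m_0}$ (these are distinct since we may assume $\cal A$ never repeats a query): given all of that, the values $G(x_1),\dots,G(x_{m_0})$ are still i.i.d.\ uniform on $[L]$ (the adaptive choices only \emph{read} these values, so conditioning on the sequence being exactly $x_1,\dots,x_{m_0}$ is conditioning on an event in the $\sigma$-algebra generated by $G(x_1),\dots,G(x_{m_0})$ — hmm, this is exactly the place where independence could fail). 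The honest fix: argue by a coupling/exposure order. Expose the queries one at a time; at step $i$, $x_i$ is determined by $G(x_1),\dots,G(x_{i-1})$, and then $G(x_i)$ is fresh uniform on $[L]$, independent of everything so far, \emph{including} the future stopping decision insofar as it hasn't happened yet. Let $Z_i = \mathbf{1}[G(x_i)=\ell]$ for $i=1,\dots,k$ (define $x_i$ arbitrarily for $i>m$). Then $Z_1,\dots,Z_k$ are i.i.d.\ $\mathrm{Bernoulli}(1/L)$ \emph{unconditionally}, because each $G(x_i)$ is revealed fresh. The stopping time $m$ is a stopping time with respect to this filtration. Now on the event $\{m=m_0\}$, $|\{i\le m : G(x_i)=\ell\}| = \sum_{i=1}^{m_0} Z_i =: S_{m_0}$, a sum of $m_0$ i.i.d.\ Bernoulli$(1/L)$'s with mean $m_0/L \ge c\log n$.

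So the whole thing reduces to: for each fixed $m_0 \in [cL\log n, k]$, $\Pr[S_{m_0} > 2m_0/L] \le \tfrac{1}{4n^6}$ by a Chernoff bound (choosing $c$ large enough that $e^{-m_0/(3L)} \le e^{-(c/3)\log n} \le \tfrac{1}{4n^6}$ suffices). Then I would union-bound over the at most $k \le n$ possible values of $m_0$ to get $\Pr[\exists m_0 : m=m_0 \wedge S_{m_0} > 2m_0/L] \le n\cdot \tfrac{1}{4n^6} = \tfrac{1}{4n^5}$. But crucially I want this \emph{given $E$}: since the events $\{m=m_0\}$ partition $E$, we have $\Pr[S_m > 2m/L \mid E] = \sum_{m_0} \Pr[m=m_0\mid E]\Pr[S_{m_0}>2m_0/L] \le \max_{m_0} \Pr[S_{m_0}>2m_0/L] \le \tfrac{1}{4n^6} \le \tfrac{1}{2n^5}$ (here I used that $\Pr[S_{m_0}>2m_0/L]$ does \emph{not} depend on conditioning, since $S_{m_0}$ is a sum of unconditionally i.i.d.\ Bernoullis and $\{m=m_0\}$ is the same $m_0$-value — wait, $S_{m_0}>2m_0/L$ and $m=m_0$ are not independent; the correct statement is $\Pr[S_{m_0}>2m_0/L \mid m=m_0] \le \Pr[S_{m_0}>2m_0/L]/\Pr[m=m_0]$, which is useless if $\Pr[m=m_0]$ is small). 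So the genuinely careful argument is: condition on $\{m = m_0\}$, which fixes the realized transcript's \emph{stopping behavior}; but the event $\{m=m_0\}$ may be correlated with $S_{m_0}$. The resolution is the optional-stopping / Markov-chain style bound: $\Pr[S_m > 2m/L \,\wedge\, m\ge cL\log n]$ can be bounded by noting $\{S_m > 2m/L, m=m_0\} \subseteq \{S_{m_0} > 2m_0/L\}$ and summing the \emph{unconditional} $\Pr[S_{m_0}>2m_0/L]$ over $m_0$, giving $\Pr[\text{bad}\wedge E] \le \tfrac14 n^{-5}$; and to turn this into a conditional bound given $E$ I instead make $c$ large enough that the Chernoff bound gives $n^{-6}$ \emph{per value}, union-bound to $n^{-5}$, and then observe that if $\Pr[E]$ were as small as, say, $n^{-4}$ this would still not suffice — so the only fully rigorous path is the fresh-randomness exposure above, which shows $\Pr[S_{m_0} > 2m_0/L \mid m = m_0]$ directly by the fact that $S_{m_0}$ is a function of $Z_1,\dots,Z_{m_0}$ while $\{m=m_0\}$ is also a function of $Z_1,\dots,Z_{m_0}$ — no, both depend on the same randomness.

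\textbf{The main obstacle}, then, is exactly this: $\{m=m_0\}$ is \emph{not} independent of $S_{m_0}$, so the per-value Chernoff bound does not directly give a conditional bound. The clean way around it, which I expect is what the authors do, is to avoid conditioning on the exact value of $m$ at all: work with the \emph{maximal} version. Define $T = \max_{cL\log n \le j \le k} \big(S_j - 2j/L\big)$ where $S_j = \sum_{i=1}^j Z_i$ with $Z_i$ i.i.d.\ Bernoulli$(1/L)$ (extending the query sequence deterministically past $m$). A maximal Chernoff/Doob-inequality argument (or just a union bound over the $\le k \le n$ values of $j$, each with a $\tfrac{1}{2n^6}$ Chernoff tail, using $j/L \ge c\log n$) gives $\Pr[T > 0] \le \tfrac{1}{2n^6}\cdot n = \tfrac{1}{2n^5}$ \emph{unconditionally}, and moreover this bound holds for the random function $G$ \emph{regardless of} which deterministic continuation of the query sequence we chose, hence it holds \emph{conditioned on any event in the $\sigma$-algebra of the query sequence and the stopping decision} — in particular conditioned on $E$ — because the $Z_i$ are fresh uniform at each exposure step and the $S_j$ for $j \le m$ coincide with the true counts. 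Then transfer back from $G$ to $H$ via the $\eps$-almost adaptive $k$-wise independence (Definition~\ref{def:adaptiveKwise}), which costs an additive $\eps$. Choosing $c$ a sufficiently large absolute constant in the Chernoff estimate $\Pr[S_j \ge 2j/L] \le e^{-(j/L)/3} \le e^{-(c/3)\log n}$, we get the claimed $\tfrac{1}{2n^5} + \eps$. I would spell out the exposure-filtration argument carefully since that is where the conditional-vs-joint subtlety is actually discharged; the Chernoff and union-bound steps are routine.
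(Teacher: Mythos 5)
Your overall route is the same as the paper's: replace $H$ by a truly uniform $G$ at an additive cost of $\eps$ (the paper phrases this as a distinguisher argument --- if the bound failed for $H$ but held for $G$, then $\cal A$ together with a counter of preimages of $\ell$ would distinguish the two with advantage more than $\eps$, contradicting Definition~\ref{def:adaptiveKwise}), and then, for uniform $G$, apply a Chernoff bound for each admissible value of $m$ and union-bound over the at most $n$ such values, with $c$ chosen so that each tail is at most $1/(2n^6)$. Your exposure-filtration observation (each $G(x_i)$ is fresh uniform given the transcript so far, so the indicators $Z_i=\mathbf{1}[G(x_i)=\ell]$ are i.i.d.\ Bernoulli$(1/L)$) is the right way to justify the Chernoff step for an adaptive querier, and you correctly identify the one genuinely delicate point, which the paper's proof passes over with the phrase ``conditioned on any fixing of $m'$'': the event $\{m=m_0\}$ is determined by the same oracle answers as the hit count $S_{m_0}$, so they are not independent, and a per-value Chernoff bound does not automatically survive conditioning on $\{m=m_0\}$ or on $E=\{cL\log n\le m\le k\}$.

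The gap is that your final resolution of this point is not valid. You prove the unconditional bound $\Pr[T>0]\le 1/(2n^5)$ for $T=\max_{cL\log n\le j\le k}(S_j-2j/L)$ and then assert that it ``holds conditioned on any event in the $\sigma$-algebra of the query sequence and the stopping decision --- in particular conditioned on $E$.'' That implication is false: conditioning on $E$ can inflate any probability by a factor $1/\Pr[E]$, and the freshness of each $Z_i$ at its exposure step does not make $T$ independent of $E$; both are functions of the same values $G(x_1),\dots,G(x_k)$. (You raised exactly this objection yourself two paragraphs earlier --- a procedure that continues querying only when it has already seen many preimages of $\ell$ makes $E$ and the bad event essentially coincide --- and then set it aside rather than answering it.) What your argument, and the paper's, actually establishes is the joint bound $\Pr[\text{bad}\wedge E]\le 1/(2n^5)$, equivalently the unconditional bound on the maximal event over all $j$ in range; turning that into the conditional bound stated in the proposition requires either a lower bound on $\Pr[E]$ or an assumption that the stopping rule is not adversarially correlated with the count, neither of which appears in your writeup (nor, for that matter, in the paper's one-line treatment of this step). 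You should either prove the joint/unconditional form --- which is what the application in Claim~\ref{lemmaclaim} consumes, since the bad events there are conjunctions --- or explicitly isolate the property of the exploration procedure that licenses the conditioning.
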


\begin{proof}
Consider any $\cal A$ as in the theorem and assume for the sake of contradiction that $\Pr[|\{x_i| H(x_i)=\ell\}| > 2m/L\ |\ c L \log n \leq m\leq k]> \frac{1}{2n^5}+\eps.$ Consider $\cal A$ with access to a uniformly selected $G : [n] \rightarrow [L]$. Define $m'$ to be the number of queries $\cal A$ makes in such a case and let $x'_1,\ldots, x'_m$ be the set of queries. By the Chernoff bound, conditioned on any fixing of $m'$, we have that $\Pr[|\{x'_i| H(x'_i)=\ell\}| > 2m'/L]$ is exponentially small in $m'/L$. Therefore, by a union bound, for a sufficiently large $c$ we have that $\Pr[|\{x_i| H(x_i)=\ell\}| > 2m'/L\ |\ c L \log n \leq m'\leq k]\leq \frac{1}{2n^5}$ (note that $L<n$, otherwise the theorem is trivially true). We thus have that $\cal A$ distinguishes the distribution $H$ from the distribution $G$ with $k$ queries,  with advantage $\eps$, in contradiction to $H$ being $\eps$-almost adaptive $k$-wise independent. 
\end{proof}

\section{Exposures in $d$-light Graphs}
\label{section:delight}
In this section we introduce the family of graphs to which our algorithms apply. We only discuss {\em undirected graphs,} but both the definitions and algorithms easily extend to the directed case. We call these graphs \emph{$d$-light} and they generalize a large family of graphs (see below). Before defining $d$-light graphs, we need to define certain processes that adaptively expose vertices of a graph.

\begin{definition}[Adaptive vertex exposure]
An \emph{adaptive vertex exposure} process $\mcP$ is a process which receives a limited oracle access to a graph $G=(V,E)$ in the following sense: $\mcP$ maintains a set of vertices $S \subseteq V$, (initially $S=\emptyset$), and updates $S$ iteratively. In the first iteration, $\mcP$ exposes an arbitrary vertex $v \in V$, and adds $v$ to $S$. In each subsequent iteration,  $\mcP$ can choose to expose any $v \in N(S)$: the vertex is added to $S$, and $\mcP$ learns the entire neighborhood of $v$. If a subset $S \subseteq V$ was exposed by such a process, we say that $S$ was \emph{adaptively exposed}.
\end{definition}

\begin{definition}[$d$-light distribution]
Let $d>0$ and let $G=(V,E)$ be a distribution on graphs. If, for every adaptively exposed $S \subset V$, conditioned on any instantiation of $S\cup N(S)$ and of the set of edges $E_S=\{(u,v)\in E\ |\ u\in S\}$ we have that for every $v \in N(S)\setminus S$ (or any $v$ in the case that $S$ is empty), there is a value $d \leq \alpha \leq |V|$, such that $|N(v)\setminus S|$ is stochastically dominated by $B(\alpha, d/\alpha)$, we say that the degree distribution of $G$ is \emph{$d$-light} (or, for simplicity, that $G$ is $d$-light).
\end{definition}

The family of $d$-light graphs includes many well-studied graphs in the literature; for example,
\begin{itemize}[noitemsep,nolistsep]
\item $d$-regular graphs, or more generally, graphs with degree bounded by $d$ (taking $\alpha = d$),
\item The random graphs $G(n,p)$, where $p=d/(n-1)$. Each one of the ${n \choose 2}$ edges is selected independently with probability $p$; therefore the degree of each vertex is distributed according to the binomial distribution $B(n-1, d/(n-1))$.  Note that in general, the degrees are not independent (for example, if one vertex has degree $n-1$ then all other vertices are connected to this vertex). However, once a subset $W \subset V$ has been exposed (as well as the edges in the cut $(W, G\setminus W)$), for any $v \notin W$, $|N(v)\setminus W| \sim B(n-|W|-1, d/(n-1))$, which is stochastically dominated by $B(n-1, d/(n-1))$.
\item Some graphs where the degrees of different vertices are distributed differently. For example, consider a bipartite graph on $n$ consumers and $m$ producers. Assume that each consumer is connected to $d$ random producers. The degree of each consumer is fixed whereas the degree of each producer is distributed according to the binomial distribution $B(n, d/m)$.
\end{itemize}

\begin{rem}
Our algorithms can work even with graphs satisfying weaker definitions than being $d$-light. In particular, the exposure procedure we define in our proof is computationally efficient (whereas the definition allows for computationally unbounded exposure), and it exposes at most logarithmically many vertices. One of the other possible relaxations is allowing for more general distributions than binomial (as long as the sum of degrees satisfies strong enough tail inequalities). We use the less general definition for the sake of simplicity (and at the expense of generality).
\end{rem}

\subsection{Bounding the Neighborhood of Exposed Sets}

The following proposition proves the property of $d$-light graphs that is needed for our analysis.  For motivation, consider the following property: every connected subgraph with $s$ vertices, where $s$ is at least logarithmic in the number of vertices, has at most $O(d\cdot s)$ neighbors. (Note that this property holds trivially for $d$-regular graphs.) In Proposition~\ref{prop:neighbors} we ask for the weaker property that when we adaptively expose a (large enough) connected subgraph, it is unlikely to have too many neighbors. So while "high-degree subgraphs", may possibly exist, it is unlikely that they will be exposed.
\begin{proposition} \label{prop:neighbors}
For some constant $c>0$, for every $d$-light graph  $G=(V,E)$ with $|V|=n$ and every adaptively exposed subset of the vertices $S\subseteq V$ of size at least $c \log{n}$, we have that $\Pr[|\{(u,v)\in E\ |\ u\in S\}|> 6d|S|]\leq 1/n^5$. In particular, $\Pr[ | N(S)| > (6d-1)|S|] \leq 1/n^5$.
\end{proposition}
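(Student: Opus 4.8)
The plan is to re-expose the vertices of $S$ one at a time in the order the exposure process produced them, $v_1,v_2,\ldots,v_s$ with $s=|S|\ge c\log n$, and to bound the number of new edges seen at each step. Put $S_0=\emptyset$, $S_i=\{v_1,\ldots,v_i\}$, and let $X_i=|N(v_i)\setminus S_{i-1}|$ be the number of edges revealed when $v_i$ is exposed that go to vertices not yet in $S$. A one-line double count shows that every edge of $E$ incident to $S$ is counted by exactly one $X_i$: an edge $\{v_i,v_j\}$ with $i<j$ at step $i$ (since $v_j\notin S_{i-1}$) and not at step $j$; an edge from $v_i$ to $V\setminus S$ only at step $i$. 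Hence
$$\bigl|\{(u,v)\in E\ |\ u\in S\}\bigr|\ =\ \dsum_{i=1}^{s}X_i .$$
Note also that $S$ is connected, since $v_i\in N(S_{i-1})$ for every $i\ge2$; in particular $S$ spans at least $s-1$ internal edges, a fact I will use for the last claim.

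Next I would dominate $\sum_i X_i$ by a sum of i.i.d.\ binomials. After $v_1,\ldots,v_{i-1}$ have been exposed, the information held is exactly an instantiation of $S_{i-1}\cup N(S_{i-1})$ together with the edge set $E_{S_{i-1}}$, and the next vertex $v_i$ lies in $N(S_{i-1})\setminus S_{i-1}$. Hence, by the definition of a $d$-light graph, conditioned on that instantiation $X_i$ is stochastically dominated by $B(\alpha_i,d/\alpha_i)$ for some $d\le\alpha_i\le n$, and so, by Lemma~\ref{lemma:lrst}, $X_i$ is dominated by the single fixed law $Z:=2d+B\!\left(n^2,\tfrac{2d}{n^2}\right)$. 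Since $X_1,\ldots,X_{i-1}$ are functions of the exposed instantiation, averaging over the finer conditioning gives that, conditioned on any realization of $X_1,\ldots,X_{i-1}$, we still have $X_i\sd Z$. Taking $Z_1,\ldots,Z_s$ to be i.i.d.\ copies of $Z$, Lemma~\ref{lemma:stoch2} then yields
$$\dsum_{i=1}^{s}X_i\ \sd\ \dsum_{i=1}^{s}Z_i\ =\ 2ds+B\!\left(sn^2,\tfrac{2d}{n^2}\right).$$

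Finally, a Chernoff bound closes the argument. Writing $W\sim B(sn^2,2d/n^2)$ we have $\expect[W]=2ds$ and $\{2ds+W>6ds\}=\{W>2\expect[W]\}$, so
$$\Pr\!\Bigl[\dsum_{i=1}^{s}X_i>6d|S|\Bigr]\ \le\ \Pr\bigl[W>2\expect[W]\bigr]\ \le\ e^{-\expect[W]/3}\ =\ e^{-2ds/3}\ \le\ e^{-2cd(\log n)/3}\ \le\ \frac1{n^{5}},$$
the last step holding once $c$ is a large enough absolute constant (using $d\ge1$); this is the first bound. For the ``in particular'' statement, $|N(S)|\le|E_S|$ trivially, which already gives $|N(S)|\le 6d|S|$ on the complementary event; accounting for the at least $|S|-1$ internal edges of the connected set $S$ -- every vertex of $N(S)$ being incident to an edge of $E_S$ that is not internal -- sharpens this to $|N(S)|\le|E_S|-(|S|-1)\le(6d-1)|S|+1$, which is the stated bound (the trailing unit being immaterial).

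The step I expect to be the real work is the middle one. The $d$-light hypothesis only provides a \emph{single-step} stochastic-domination guarantee, conditioned on the entire exposed history, whereas we must add up $s$ of these even though the $X_i$ are strongly dependent and the dominating parameters $\alpha_i$ drift from step to step. Lemmas~\ref{lemma:lrst} and~\ref{lemma:stoch2} are tailored to exactly this: the former replaces the moving target $B(\alpha_i,d/\alpha_i)$ by one distribution $Z$, and the latter lets one sum conditional dominations. So the genuine care needed is bookkeeping -- verifying that the conditioning in the $d$-light definition coincides with the history available at step $i$, and legitimately passing from ``conditioned on the full instantiation'' to ``conditioned only on $X_1,\ldots,X_{i-1}$'' so that Lemma~\ref{lemma:stoch2} applies.
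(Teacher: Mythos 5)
Your proposal is correct and follows essentially the same route as the paper: decompose $|\{(u,v)\in E : u\in S\}|$ as $\sum_i X_i$ over the exposure order, dominate each conditional $X_i$ by $Z=2d+B(n^2,2d/n^2)$ via Lemma~\ref{lemma:lrst}, sum using Lemma~\ref{lemma:stoch2}, and finish with a Chernoff bound on the resulting binomial. The only small added content is your explicit justification of passing from the full instantiation to conditioning on $X_1,\ldots,X_{i-1}$, and your derivation of the ``in particular'' clause via the $\geq|S|-1$ internal edges of the connected set $S$, both of which the paper leaves implicit.
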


\begin{proof}
Denote $|S| = m$. Label the vertices of $S$:  $1,2, \ldots m$, according to the  order of exposure. That is, vertex $1$ is the first vertex that was exposed, and so on. Denote by $S_i$ the set of vertices  $\{1,2,\ldots, i\}$, and by $Y_i$ the random variable representing the number of neighbors of vertex $i$, which are not in $S_{i-1}$. That is, $Y_i = |N(i)\setminus S_{i-1}|$. The quantity we would like to bound, $|\{(u,v)\in E\ |\ u\in S\}|$, is exactly $\sum_{i=1}^m Y_i$.

From the definition of a $d$-light distribution, conditioned on every possible realization of $S_{i-1}$, $N(S_{i-1})$ and the edges adjacent to $S_{i-1}$, there exists some $d\leq \alpha_i \leq n$ such that $Y_i \sd B(\alpha_i, d/\alpha_i)$. By Lemma~\ref{lemma:lrst}, under the same conditioning, $Y_i \sd Z \sim 2d + B(n^2, \frac{2d}{n^2})$. This implies that conditioned on any realization of $Y_1,\ldots Y_{i-1}$ we have that $Y_i \sd Z$. By Lemma~\ref{lemma:stoch2} we now have that $\sum_{i=1}^m Y_i$ is stochastically dominated by the sum of $m$ independent copies of $Z$. Let $\{X_{i,j}\}_{i\in [m],j\in[n^2]}$ be $n^2\cdot m$ independent Bernoulli random variables such that $Pr[X_{i,j} = 1] = 2d/n^2$ for every $i\in [m]$ and $j\in[n^2]$. Note that for every $i$ we have that  $\dsum_{j=1}^{n^2} X_{i,j} \sim B(n^2, 2d/n^2)$. By the definition of $Z$ we now have that
$$\sum_{i=1}^m Y_i \sd 2dm + \dsum_{i=1}^m \dsum_{j=1}^{n^2} X_{i,j}.$$

By the linearity of expectation, $\expect[\dsum_{i=1}^m \dsum_{j=1}^{n^2} X_{i,j}]=2dm.$ By the the Chernoff bound, there exists a constant $c$ such that when $m\geq c \log{n}$ it holds that $\Pr[\dsum_{i=1}^m \dsum_{j=1}^{n^2} X_{i,j} > 4dm]\leq 1/n^5.$ The proposition follows. 
\end{proof}

\section{Almost $k$-wise Random Orderings}
\label{section:orderings}

Our LCAs will emulate the execution of online algorithms (that are neighborhood dependent as in Definition~\ref{def:NDonline}). One obstacle is that the output of an online algorithm may depend on the order in which it sees the vertices of its input graph (or the edges, in case we are considering an algorithm on edges). As the combined output of an LCA on all vertices has to be consistent, it is important that in all of these executions the algorithm uses the same permutation. Choosing a random permutation on $n$ vertices requires $\Omega(n\log n)$ bits which is disallowed (as it will imply memory $\Omega(n\log n)$). Instead, we would like to have a derandomized choice of a permutation that will be ``good enough" for the sake of our LCAs and can be sampled with as few as $O(\log n)$ bits.

We start with a few definitions and notation. We assume that each vertex has a unique label between $1$ and some $n= \poly(|V|)$. For simplicity, assume $n=|V|$ (or in other words $V=[n]$).
\begin{definition} [Ranking] \label{def:perm} Let $L$ be some positive integer. A function $r: [n] \to [L]$ is a {\em ranking} of $[n]$, where $r(i)$ is called the \emph{level} of $i$. The {\em ordering} $\Pi_r$ which corresponds to $r$ is a permutation on $[n]$ obtained by defining $\Pi_r(i)$ for every $i\in[n]$ to be its position according to the monotone increasing order of the relabeling $i\mapsto (r(i), i)$. In other words, for every $i,j\in[n]$ we have that $\Pi_r(i)<\Pi_r(j)$ if and only if $r(i)<r(j)$ or $r(i)=r(j)$ and $i<j$. The pair $(r(i),i)$ is called the \emph{rank} of $i$.
\end{definition}


Past works on LCAs (e.g., \cite{ARVX11,HMV14,MRVX12,MV13}) have used $k$-wise almost independent orderings to handle the derandomized ordering of vertices. A family of ordering $\mathbf{\Pi}=\{\Pi_r\}_{r\in R}$ , indexed by $R$ is \emph{$k$-wise independent} if for every subset $S\subseteq [n]$ of size $k$, the projection of $\mathbf{\Pi}$ onto $S$ (denoted by $\mathbf{\Pi}(S)$) is uniformly distributed over all $k!$ possible permutations of $S$. Denote this uniform distribution by $U(S)$. We have that $\mathbf{\Pi}$ is \emph{$\eps$-almost $k$-wise independent} if for every $k$-element subset $S$ we have that $\Delta(\mathbf{\Pi}(S), U(S)) \leq \eps.$ One can give adaptive versions of these definitions (in the spirit of Definition~\ref{def:adaptiveKwise}). For simplicity, in this section we concentrate on the static case, but we note that  our results in this section, Lemma~\ref{lem:orderings} and Corollary~\ref{coro:orderings}, extend to the adaptive case. (In Section~\ref{subsection:better}, which contains our main contributions, we focus on the adaptive case.)

It is easy to show that $k$-wise independent functions (or even almost $k$-wise independent functions), directly give $\eps$-almost $k$-wise independent orderings.

\begin{lemma}\label{lem:orderings}
For every $n,L,k \in \N, \eps>0$, such that $k \leq n$ and $L\geq k^2/\eps$, if $\mcH = \{h : [n] \rightarrow [L]\}$ is $k$-wise independent then  $\mathbf{\Pi}=\{\Pi_h\}_{h\in \mcH}$ is a family of $\eps$-almost $k$-wise independent ordering.
\end{lemma}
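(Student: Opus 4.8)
The plan is to show that the ordering $\Pi_h$ induced by a random $h \in \mcH$ restricted to any fixed $k$-element set $S$ is $\eps$-close to the uniform distribution $U(S)$ over permutations of $S$. The natural approach is a coupling/hybrid argument: first replace $\mcH$ by a genuinely uniform function $H' : [n] \to [L]$ (this costs nothing since $\mcH$ is $k$-wise independent and we only look at the $k$ coordinates of $H'$ on $S$), and then argue that the ordering induced by a uniform function $H'$ on $S$ is itself $\eps$-close to $U(S)$. The loss $\eps$ comes entirely from the event that $H'$ assigns the same level to two distinct elements of $S$, i.e.\ a ``collision'' within $S$.

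\textbf{Key steps.} First I would fix an arbitrary $S \subseteq [n]$ with $|S| = k$. Since $\mcH$ is $k$-wise independent, the random vector $(h(s))_{s \in S}$ has exactly the same distribution as $(H'(s))_{s\in S}$ for a uniformly random $H' : [n]\to[L]$; since $\Pi_h(S)$ is a deterministic function of $(h(s))_{s\in S}$ and the (fixed) labels of the elements of $S$, it follows that $\mathbf{\Pi}(S)$ and the ordering induced by $H'$ have identical distributions, so it suffices to bound $\Delta(\Pi_{H'}(S), U(S))$. Second, I would observe that conditioned on the event $C$ that $H'$ is injective on $S$ (no two elements of $S$ share a level), the induced ordering of $S$ is \emph{exactly} uniform over the $k!$ permutations: the $k$ levels $(H'(s))_{s\in S}$ conditioned on being distinct are a uniformly random $k$-subset of $[L]$ together with a uniformly random assignment to the elements of $S$, and the tie-breaking rule $i \mapsto (r(i),i)$ is never invoked, so the relative order of $S$ is a uniformly random permutation. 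Third, I would bound $\Pr[\bar C]$: by a union bound over the $\binom{k}{2}$ pairs, $\Pr[\bar C] \le \binom{k}{2}\cdot \frac1L \le \frac{k^2}{2L} \le \frac{\eps}{2} \le \eps$ using $L \ge k^2/\eps$. Finally, the standard fact that if two distributions agree conditioned on an event of probability $\ge 1-\eps$ then they are $\eps$-close gives $\Delta(\Pi_{H'}(S), U(S)) \le \Pr[\bar C] \le \eps$, completing the proof.

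\textbf{Main obstacle.} The only real subtlety — and the step I would be most careful about — is the claim that conditioned on no collisions within $S$, the induced order on $S$ is \emph{exactly} uniform. One has to note that conditioning on ``$H'$ injective on $S$'' is a symmetric event under permuting the elements of $S$, so by symmetry every one of the $k!$ relative orders of $S$ is equally likely; alternatively, condition further on the unordered set of $k$ level-values used, and observe that these are assigned to the $k$ elements of $S$ by a uniformly random bijection. Either way the tie-breaking rule in Definition~\ref{def:perm} plays no role once collisions are excluded, which is exactly why excluding collisions suffices. Everything else (the reduction to a uniform function via $k$-wise independence, the union bound on collisions, the statistical-distance-from-conditioning fact) is routine. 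Note also that the statement as given assumes exact $k$-wise independence in the hypothesis even though the surrounding text mentions almost $k$-wise independence; for the almost-independent case one simply adds the statistical distance between $(h(s))_{s\in S}$ and uniform into the bound via the triangle inequality for $\Delta$.
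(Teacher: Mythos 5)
Your proof is correct and follows essentially the same route as the paper's: fix $S$, note that a collision inside $S$ has probability at most $\binom{k}{2}/L \le \eps$, and observe that conditioned on no collision the induced order on $S$ is exactly uniform by symmetry of $k$-wise independence. The paper states this in two sentences; you have simply spelled out the hybrid step (replacing $h$ by a truly uniform $H'$ on $S$) and the symmetry argument for uniformity, which the paper leaves implicit.
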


\begin{proof}
Fix the set $S$. There is probability smaller than $\eps$ on the choice of $h\in \mcH$ that there exist two distinct values $i$ and $j$ in $S$ such that $h(i)=h(j)$. Conditioned on such collision not occurring the order is uniform by the definition of $k$-wise independent hashing.
\end{proof}

Lemma~\ref{lem:orderings} and Theorem~\ref{thm:simple} imply the following.

\begin{corollary}\label{coro:orderings}
For every $n,L,k \in \N, \eps>0$, such that $k \leq n$ and $L\geq k^2/\eps$, there exists a construction of $\eps$-almost $k$-wise independent random ordering over $[n]$ whose seed length is $O(k\log{n})$.
\end{corollary}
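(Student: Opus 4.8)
The plan is to combine Lemma~\ref{lem:orderings} with the explicit family of $k$-wise independent hash functions from Theorem~\ref{thm:simple}. First I would fix an integer $N$ that is a power of $2$ and satisfies $N \geq \max\{n,\, k^2/\eps\}$; such an $N$ can always be chosen with $N \leq 2\max\{n,\, k^2/\eps\}$. Recalling that $k \leq n$ and that throughout the paper we are in the regime $1/\eps = \poly(n)$, this gives $N = \poly(n)$ and hence $\log N = O(\log n)$.

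Next, apply Theorem~\ref{thm:simple} with domain and range $[N]$ to obtain a family $\mcH' = \{h : [N] \to [N]\}$ that is (adaptive, hence in particular static) $k$-wise independent, with seed length $k \log N$ and the stated evaluation cost. Restricting each $h \in \mcH'$ to the sub-domain $[n] \subseteq [N]$ produces a family $\mcH = \{h : [n] \to [N]\}$ which is still $k$-wise independent (the joint distribution of the values at any $k$ distinct points of $[n]$ is unchanged), and whose seed length $k \log N = O(k \log n)$ and evaluation cost are inherited from $\mcH'$.

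Finally, invoke Lemma~\ref{lem:orderings} with its parameter $L$ instantiated as $N$: since $N \geq k^2/\eps$, the hypothesis $L \geq k^2/\eps$ holds, so the induced family of orderings $\mathbf{\Pi} = \{\Pi_h\}_{h \in \mcH}$ is $\eps$-almost $k$-wise independent. This $\mathbf{\Pi}$ is the desired construction, sampled by the same seed as $\mcH$, namely $O(k \log n)$ bits.

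Since each step is a direct instantiation of a result already proved in the excerpt, there is no substantive obstacle. The only point that warrants care is the bookkeeping in the first step: one must observe that the power-of-two range $N$ forced on us by Theorem~\ref{thm:simple} and by the constraint $N \geq k^2/\eps$ still satisfies $\log N = O(\log n)$, which is exactly what holds in the $1/\eps = \poly(n)$ setting relevant here. (For a general $\eps$ one would instead state the seed length as $O(k \log(n/\eps))$.)
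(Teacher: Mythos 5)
Your proof is correct and takes the same route the paper intends: the paper offers Corollary~\ref{coro:orderings} with no argument beyond the one-line remark that it follows from Lemma~\ref{lem:orderings} and Theorem~\ref{thm:simple}, and you have filled in exactly that bookkeeping (rounding up to a power of two, restricting the domain back to $[n]$, and instantiating the range at $N \geq k^2/\eps$ so the lemma applies). Your parenthetical caveat that the stated $O(k\log n)$ seed length tacitly relies on $1/\eps = \poly(n)$ is a fair and accurate reading of the corollary as written.
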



This result improves the space bounds (and hence running times) of the algorithms of, for example, \cite{ARVX11,MRVX12,MV13}. For constant $k$, the seed length of the ordering is $O(\log{n})$ and the evaluation time is $O(1)$.
A concrete example of an algorithm whose running time and space bounds can be significantly improved using this construction is the maximal independent set LCA of \cite{ARVX11}, where they prove a space bound of $O(\log^2{n})$ and a time bound of $O(\log^3{n})$). The LCA simulates Luby's algorithm \cite{Luby86} on a graph of constant degree for a constant number of rounds. It can be shown how to improve both the time and space bounds to $O(\log{n})$, based on the construction Corollary~\ref{coro:orderings}.\footnote{Instead of each vertex choosing itself with some probability, and then selecting itself if none of its neighbors is also selected, each vertex selects itself if it is earlier in the ordering than all of its neighbors. The ordering can be generated using Corollary~\ref{coro:orderings}.}

 If $k$ is not a constant, but $O(\log{n})$, as is the case in most algorithms of \cite{ARVX11,MRVX12,MV13}, the bound of Corollary \ref{coro:orderings} gives us a seed length of $O(\log^2{n})$, in comparison to the $O(\log^3{n})$ attained using the almost $k$-wise ordering of \cite{ARVX11}. This could potentially be further improved, but one can observe that a lower bound on the seed length of almost $\log n$-wise independent ordering is $\Omega(\log n\log\log n)$. We now show how to define derandomized orderings that require seed $O(\log n)$ and, while not being $\log n$-wise almost independent, are still sufficiently good for our application.

\section{Ordering with logarithmic seeds}
\label{subsection:better}

Lemma \ref{lem:orderings} and Corollary \ref{coro:orderings} show that for $L=\poly(n)$ when $\mcH = \{h : [n] \rightarrow [L]\}$ is $k$-wise independent, then 
$\mathbf{\Pi}=\{\Pi_h\}_{h\in \mcH}$ is a family of $1/\poly(n)$-almost $k$-wise independent orderings, and that this requires a seed of length $O(k\log{n})$. We now consider what happens if we let $L$ be much smaller, namely a constant. This will reduce the seed length to $O(k+\log n)$ (when we let $\mcH$ be {\em almost} $k$-wise independent). However, we will lose the $k$-wise almost independence (for sub-constant error) of the ordering. Consider two variables $i<j$. Conditioned on $h(i)\neq h(j)$, the order of $i$ and $j$ under $\Pi_h$ is uniform. But with constant probability $h(i)=h(j)$ and then, according to Definition\ref{def:perm}, $i$ will come before $j$ under $\Pi_h$ (recall that the ordering is based on the labels $(h(i),i)$). Nevertheless, even though $\mathbf{\Pi}$ is no longer $k$-wise $1/n$-almost independent, we will show that a constant $L$ suffices for our purposes. To formally define what this means we introduce some definitions.

\begin{definition}[Level, levelhood]\label{def:level}
Let $G=(V,E)$ be a graph. Let $h:V\rightarrow \mbN$ be a function which assigns each vertex an integer. For each $v \in V$, we call $h(v)$ the \emph{level} of $v$.  Denote the restriction of a set of vertices $S \subseteq V$ to only vertices of a certain level $\ell$, $\{x \in S: h(x)=\ell\}$ by $S\|_{\ell}$.

 Let $S \subseteq V$, and let $N_{\ell}(S)$ be the neighbors of $S$ that are in level $\ell$. That is $N_{\ell}(S) = \{u \in N_{\ell}(S) : u \in N(S) , h(u) = \ell \}$.
 The \emph{$\ell^{th}$ levelhood of $S$} (denoted $\Psi_{\ell}(S)$), is defined recursively as follows. $\Psi_{\ell}(\emptyset) = \emptyset$.
$\Psi_{\ell}(S) = \{S \cup  \Psi_{\ell}(N_{\ell}(S))\}$. In other words, we initialize $\Psi_{\ell}(S) =S$, and add to $\Psi_{\ell}(S)$ neighbors of level $\ell$, until we cannot add any more vertices.
\end{definition}

\begin{definition}[Relevant vicinity]\label{def:vicinity}
The \emph{relevant vicinity} of a vertex $v$, denoted $\Im(v)$ (relative to a hash function $h: V \rightarrow [L]$), is defined constructively as follows. Let $\Pi_h$ be the permutation defined by $h$, as in Definition \ref{def:perm}. Initialize $\Im(v) = \{v\}$. For each $u \in \Im(v)$, add to $\Im(v)$ all vertices $w \in N(u) : \Pi_h(w)<\Pi_h(u)$. Continue adding vertices to $\Im(v)$ until the neighbors of all the vertices in $\Im(v)$, appear after them in $\Pi_h$.
\end{definition}
The relevant vicinity of a vertex $v$ is exactly  the vertices that our LCA will query when inquired about $v$.\footnote{In specific cases, better implementations exist that do not need to explore the entire relevant vicinity. For example, once an LCA for maximal independent set  sees that a neighbor of the inquired vertex is in the independent set, it knows that the inquired vertex is not (and can halt the exploration). Nevertheless, in order not to  make any assumptions on the online algorithm, we assume that the algorithm explores the entire relevant vicinity.} Because we can not make any assumptions about the original labeling of the vertices, we upper bound the size of the relevant vicinity by defining the \emph{containing vicinity}, where we assume that the worst case always holds. That is, if two neighbors have the same level,  depending on which is queried, the other appears before it in the permutation. The size of the containing vicinity is clearly an upper bound on the size of a relevant vicinity, for the same hash function.

\begin{definition}[Containing vicinity]\label{def:irvicinity}
The \emph{containing vicinity} of a vertex $v$ (relative to a hash function $h: V \rightarrow [L]$), given that  $h(v) = \ell$, is $\Psi_{L} (\Psi_{L-1} (\ldots \Psi_{\ell+1}(\Psi_l(v))\ldots))$. In other words, let $S_{\ell}$ be the $\ell^{th}$ levelhood of $v$.  Then $S_{\ell+1} = \Psi_{\ell+1}(S_{\ell})$, and so on. The containing vicinity is then $S_L = \Psi_L(S_{L-1})$.
\end{definition}

We wish to bound the size of the relevant vicinity of any vertex:

\subsection{Upper Bounding the Size of the Relevant Vicinity}
\label{section:upper_b}

\begin{lemma}\label{lemma:main}

Let $G=(V,E)$ be a $d$-light graph, $|V|=n$, and let $L$ be an integer such that $L>24d$. Let $c$ be such that Propositions~\ref{prop:1} and  \ref{prop:neighbors} hold, and let $\kappa=2^L c\log{n}$. Let $k = 6d\kappa$, and let $h$ be an adaptive  $k$-wise $\frac{1}{2n^5}$-dependent hash function, $h:V \rightarrow [L]$.  For any vertex $v \in G$, the relevant vicinity of $v$ has size at most $\kappa$ with probability at least $1-\frac{1}{n^4}+\frac{1}{n^5}$.

\end{lemma}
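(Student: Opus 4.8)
The plan is to bound the size of the containing vicinity of $v$ level by level, starting from the highest level $L$ and working down to $\ell = h(v)$, and to show that with high probability the size stays below $\kappa$ throughout. Recall from Definition~\ref{def:irvicinity} that the containing vicinity is built as $S_\ell \subseteq S_{\ell+1} \subseteq \cdots \subseteq S_L$, where $S_\ell = \Psi_\ell(v)$ is the $\ell^{th}$ levelhood of $v$ (the set reachable from $v$ via same-level-$\ell$ neighbors), and each subsequent $S_{j} = \Psi_{j}(S_{j-1})$ adds the level-$j$ closure. The key structural observation is that the whole containing vicinity is an \emph{adaptively exposed} set in the sense of Section~\ref{section:delight}: we can describe a process that exposes $v$, then explores level-by-level, and at each step only queries $h$ on vertices in the neighborhood of what has been exposed so far. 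Hence the $d$-light property and Propositions~\ref{prop:1} and~\ref{prop:neighbors} both apply to this exposure process, which makes at most $k$ queries to $h$ (this is exactly why $k = 6d\kappa$ is chosen — conditioned on the vicinity having size at most $\kappa$, Proposition~\ref{prop:neighbors} says the number of incident edges, hence the number of queries in the exposure, is at most $6d\kappa = k$).

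The core of the argument is an inductive (on the level, decreasing) bound showing $|S_j| \le (L - j + 1)\cdot 2^{?}\cdot c\log n$ or, more simply, that each of the $L$ levelhood-closure steps multiplies the current size by at most a factor $2$, starting from a base case of size $O(c\log n)$. For the base case, I would argue that a single levelhood $\Psi_\ell(W)$ — the closure of a set $W$ under same-level-$\ell$ neighbors — cannot grow $W$ by more than a factor of $2$ with high probability: the set $\Psi_\ell(W)$ is adaptively exposed, and by Proposition~\ref{prop:1} applied to the level function $h$ with output set $[L]$ (identifying $\ell$ as the special value), if the exposure makes at least $cL\log n$ queries then no more than $2/L$ of them land in level $\ell$; since only level-$\ell$ vertices get added, the closure dies out before doubling. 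More carefully, I would run the potential argument: suppose inductively $|S_{j+1}| \le$ (some bound $B_{j+1}$); then exposing $S_j = \Psi_j(S_{j+1}')$ where we also have to account for the neighborhood, Proposition~\ref{prop:neighbors} bounds $|N(S_{j+1})|$ by $(6d-1)|S_{j+1}|$, and among those neighbors only a $\approx 2/L$ fraction are at level $j$ (by Proposition~\ref{prop:1}), so the level-$j$ vertices added number at most $\approx \frac{2}{L}(6d-1)|S_{j+1}| < \frac{1}{2}|S_{j+1}|$ using $L > 24d$; iterating the closure then at most doubles. Thus $B_j \le 2 B_{j+1}$, and after $L$ levels we reach $B_{h(v)} \le 2^L \cdot O(c\log n) = \kappa$ (modulo the exact constant in $\kappa = 2^L c \log n$, which is set precisely to absorb these bounds).

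For the failure probability, I would take a union bound over the at most $L$ invocations of Proposition~\ref{prop:neighbors} (each failing with probability $\le 1/n^5$) and the at most (number of closure steps) $\le \kappa$ or so invocations of Proposition~\ref{prop:1} (each failing with probability $\le \frac{1}{2n^5} + \eps$ with $\eps = \frac{1}{2n^5}$, i.e.\ $\le \frac{1}{n^5}$); since $L, \kappa = \polylog(n)$, and being slightly generous, the total is at most $\frac{1}{n^4} - \frac{1}{n^5}$ as claimed (the statement's $1 - \frac1{n^4} + \frac1{n^5}$ is the complementary success probability). One subtlety to handle carefully: Propositions~\ref{prop:1} and~\ref{prop:neighbors} require the exposed set / query count to be \emph{at least} $cL\log n$ (resp.\ $c\log n$); when the set is smaller the conclusions are vacuous, but that is fine — if the vicinity never grows past $c L\log n < \kappa$ we are already done, so we may assume we are always in the regime where the tail bounds apply, or equivalently stop the exposure as soon as the size would exceed $\kappa$ and argue by contradiction that it never does.

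The main obstacle I anticipate is making the adaptive-exposure framing fully rigorous: one must exhibit a single adaptive vertex exposure process whose queries to $h$ are exactly (an upper bound on) the containing-vicinity construction, verify that it makes at most $k$ queries so that the adaptive $k$-wise independence of $h$ is actually usable, and ensure the conditioning in the $d$-light definition (conditioning on $S \cup N(S)$ and the cut edges at each step) is compatible with simultaneously conditioning on the outputs of $h$ that Proposition~\ref{prop:1} needs — i.e.\ that exploring "neighbors" (a graph operation) and "levels" (an $h$ operation) can be interleaved into one legitimate adaptive process. The cleanest way is probably to define the process to expose, at each step, a vertex of $N(S)$ whose level has already been revealed and lies in the currently-active level, revealing its graph-neighborhood (invoking $d$-lightness) and then immediately querying $h$ on the newly revealed neighbors (invoking Proposition~\ref{prop:1}); the two sources of randomness — the graph and the hash function — are independent, so the conditionings do not interfere.
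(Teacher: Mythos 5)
Your proposal matches the paper's proof: the same level-by-level doubling (the paper's Claim~\ref{lemmaclaim}), the same two propositions in the same roles (Proposition~\ref{prop:neighbors} to bound $|S_{i+1}\cup N(S_{i+1})|$, Proposition~\ref{prop:1} to bound the level-$(i+1)$ fraction of that set), and the same union bound over the $L$ levels. One caution on the write-up: the indices are reversed in places (the chain grows upward, $S_{j+1}=\Psi_{j+1}(S_j)$, not $S_j = \Psi_j(S_{j+1})$), and the $2/L$ fraction bound should be applied to the entire exposed set $T_{i+1}=S_{i+1}\cup N(S_{i+1})$ rather than to $N(S_{i+1})$ alone --- this is precisely what turns the argument into a bound on the whole closure at once (as the paper does, via $T_{i+1}\|_{i+1}=S_{i+1}\|_{i+1}$), rather than a per-iteration geometric bound, which would not obviously close.
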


The following claim will help us prove Lemma \ref{lemma:main}.


\begin{clam} \label{lemmaclaim}
Let the conditions of Lemma \ref{lemma:main} hold; assume without loss of generality that $h(v)=1$, and let $S_0 = \{v\}, S_1 = \Psi_1(S_0),  \ldots, S_{\ell+1}= \Psi_{\ell+1}(S_{\ell}),\ldots, S_L = \Psi_{L}(S_{L-1})$. Then for all $0 \leq i \leq L$,
$$\Pr[|S_i|\leq 2^ic\log{n} \wedge |S_{i+1}| \geq 2^{i+1}c \log{n}]\leq \frac{2}{n^5}.$$
\end{clam}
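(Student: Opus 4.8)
The plan is to analyze the growth of the levelhoods $S_0 \subseteq S_1 \subseteq \cdots \subseteq S_L$ one level at a time, and to show that the only way to jump from $|S_i| \le 2^i c\log n$ to $|S_{i+1}| \ge 2^{i+1} c\log n$ in a single step is for a low-probability event to occur. Fix $i$ and condition on the event $|S_i| \le 2^i c\log n$; I want to bound the probability that $|S_{i+1}| \ge 2^{i+1}c\log n$. First I would observe that the process of exposing $S_{i+1} = \Psi_{i+1}(S_i)$ is an adaptive vertex exposure process in the sense of the definition in Section~\ref{section:delight}: starting from $S_i$, we repeatedly expose a neighbor which happens to have level $i+1$ under $h$, learning its full neighborhood each time. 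So $S_{i+1}$ is an adaptively exposed set, and all of its vertices are queries made by an algorithm $\cal A$ (the exposure procedure) that has oracle access to $h$.

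The key point is that the vertices added when forming $S_{i+1}$ from $S_i$ are exactly those vertices of $S_{i+1} \setminus S_i$ that $h$ maps to level $i+1$ — i.e., preimages of the value $\ell = i+1$ among the queries $\cal A$ makes. So I would split into two sub-events. Either (a) the total number of vertices exposed while building $S_{i+1}$ (equivalently, $|S_{i+1}|$, or a quantity of the same order, since $|S_i| \le 2^i c\log n \le |S_{i+1}|/2$ when $|S_{i+1}| \ge 2^{i+1}c\log n$) stays below the threshold $cL\log n$ of Proposition~\ref{prop:1} — but then, since $\kappa = 2^L c\log n \ge cL\log n$ and $2^{i+1}c\log n \le \kappa$, we can combine this with Proposition~\ref{prop:neighbors} to bound the number of neighbors of $S_i$ at all (hence the number of candidates for $S_{i+1}\setminus S_i$) — or (b) the number of queries $m$ lies in the range $cL\log n \le m \le k$, in which case Proposition~\ref{prop:1} with $\ell = i+1$ says that with probability at least $1 - \frac{1}{2n^5} - \frac{1}{2n^5}$ (using the $\eps = \frac{1}{2n^5}$ almost-independence of $h$) the number of queried preimages of $i+1$ is at most $2m/L$. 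Combined with Proposition~\ref{prop:neighbors} applied to the exposed set (which has size $\ge c\log n$ when $|S_{i+1}|$ is large, so its edge-boundary, hence $m$ itself, is at most $6d|S_{i+1}|$ w.h.p. $1 - 1/n^5$), we get $|S_{i+1}\setminus S_i| \le 2m/L \le 12 d |S_{i+1}|/L$. Since $L > 24d$ this gives $|S_{i+1}\setminus S_i| < |S_{i+1}|/2$, i.e. $|S_i| > |S_{i+1}|/2$, contradicting $|S_i| \le 2^i c\log n$ and $|S_{i+1}| \ge 2^{i+1}c\log n$. A union bound over the (at most two or three) bad events — the Proposition~\ref{prop:neighbors} failure and the Proposition~\ref{prop:1} failure, each of probability at most $1/n^5$ (the latter $\frac{1}{2n^5}+\eps = \frac{1}{n^5}$) — yields the claimed bound $\frac{2}{n^5}$.

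The main obstacle I anticipate is being careful about the conditioning and about which set plays the role of ``the queries of $\cal A$'' in Proposition~\ref{prop:1}: the exposure process that builds $S_{i+1}$ queries not only the vertices it eventually adds, but the entire neighborhoods of all exposed vertices, so $m$ (the number of queries) is of order $|\{(u,v)\in E : u \in S_{i+1}\}|$ rather than $|S_{i+1}|$ — this is exactly why Proposition~\ref{prop:neighbors} is invoked, to say $m \le 6d|S_{i+1}|$. One also has to check that the relevant quantities stay within the window $[cL\log n, k]$ required by Proposition~\ref{prop:1}: the lower end is handled because we are in the regime $|S_{i+1}| \ge 2^{i+1}c\log n$ and may assume (the complementary case being even easier, via Proposition~\ref{prop:neighbors} alone applied to $S_i$) that enough vertices have been exposed; the upper end, $m \le k = 6d\kappa$, is guaranteed because all the vicinities under consideration have size at most $\kappa$ and each contributes at most $6d$ times its size in edges. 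A secondary subtlety is the boundary case $i=0$ (where $S_0 = \{v\}$ is a single vertex and one appeals to the ``$v$ in the case that $S$ is empty'' clause of the $d$-light definition and the trivial lower bound), and making sure the $k$-wise adaptive almost-independence of $h$ is not exhausted, which holds since every vicinity examined has at most $\kappa$ vertices and thus generates at most $k = 6d\kappa$ queries.
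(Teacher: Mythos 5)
Your proof follows the paper's argument essentially verbatim: from the bad event you derive that $S_{i+1}\setminus S_i$ (the vertices at level $i+1$) constitutes more than half of $S_{i+1}$, then you invoke Proposition~\ref{prop:neighbors} to bound the total number of $h$-queries by $6d\,|S_{i+1}|$ with failure probability $1/n^5$, invoke Proposition~\ref{prop:1} with $\ell=i+1$ to bound the level-$(i+1)$ preimages among those queries by a $2/L$ fraction with failure probability $1/n^5$, and combine with $L>24d$ to reach a contradiction; the union bound then gives $2/n^5$, exactly as in the paper. One small caution: your attempted repair of the low-$m$ regime (case (a)), namely bounding $|S_{i+1}\setminus S_i|$ by ``the number of neighbors of $S_i$'' via Proposition~\ref{prop:neighbors}, does not quite go through, since $S_{i+1}=\Psi_{i+1}(S_i)$ is the full closure under adding level-$(i+1)$ vertices, not merely $S_i\cup N_{i+1}(S_i)$; however, the paper's own proof does not explicitly treat this regime either, so this is a shared lacuna rather than a divergence from the paper's approach.
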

\begin{proof}

Fix $i$ and denote the bad event for $i$ as $B_i$. That is, $B_i  \equiv |S_i|\leq 2^ic\log{n} \wedge |S_{i+1}| \geq 2^{i+1}c \log{n}$.
We make the following observation:
\begin{equation}
B_i \Rightarrow |S_{i+1}| > 2^{i+1}c \log{n} \wedge |S_{i+1}\|_{i+1}|>\frac{|S_{i+1}|}{2}, \label{eq:psibig}
\end{equation}
because $S_{i+1}\|_{i+1} = S_{i+1}\setminus S_i$. In other words, this means that if $B_i$ occurs then the majority of elements in $S_{i+1}$ must have come from the $(i+1)^{th}$-levelhood of $S_i$.


For all $0<i \leq L$ let $T_i = S_i \cup N(S_i)$.
Define two bad events:
\begin{align*} 
B_i^1 &\equiv |S_{i+1}| > 2^{i+1}c \log{n} \wedge |T_{i+1}| > 6d |S_{i+1}| \\ 
B_i^2 &\equiv |S_{i+1}| > 2^{i+1}c \log{n} \wedge |T_{i+1}\|_{i+1}| > \frac{12d}{L}|S_i+1| \\ 
\end{align*}
 From Equation~\eqref{eq:psibig}, the definition of $L$, and the fact that $T_{i+1}\|_{i+1}=S_{i+1}\|_{i+1}$, we get $B_i \Rightarrow B_i^2$.

By Proposition~\ref{prop:neighbors},
\begin{equation}
\Pr[B_i^1] \leq \frac{1}{n^5}\label{eq:fact1},
\end{equation} because  $S_{i+1}$ is an adaptively exposed subset, $T_{i+1} = S_{i+1} \cup N(S_{i+1})$ and the size of $S_{i+1}$ satisfies the conditions of the proposition. Given $|S_{i+1}| > 2^{i+1}c \log{n}$, it holds that $|T_{i+1}| > 2^{i+1}c \log{n}$ because $S_{i+1} \subseteq T_{i+1}$. Also note that $T_{i+1}$ was defined based on the value of $h$ on elements in $T_{i+1}$ only. Therefore, the conditions of Proposition \ref{prop:1} hold for $|T_{i+1}|$.

\begin{align}
\Pr[B_i^2:\neg B_i^1] &\leq \Pr[|T_{i+1}\|_{i+1}| > \frac{12d}{L}|S_{i+1}|: |T_{i+1}|< 6d|S_{i+1}|] \notag\\
&\leq \Pr[|T_{i+1}\|_{i+1}| > \frac{2|T_{i+1}|}{L}] \notag\\
& \leq \frac{1}{n^5}, \label{eq:fact2}
\end{align}
where the last inequality is due to Proposition~\ref{prop:1},

Because
$$\Pr[B_i^2] = \Pr[B_i^2: B_i^1]\Pr[B_i^1]  +  \Pr[B_i^2:\neg B_i^1]\Pr[\neg B_i^1],$$
from Equations \eqref{eq:fact1} and \eqref{eq:fact2}, the claim follows. 
\end{proof}

\begin{proof}[Proof of Lemma \ref{lemma:main}]  To prove Lemma \ref{lemma:main}, we need to show that
\begin{equation*}
\Pr[|S_L|> 2^L c \log{n}]<\frac{1}{n^4}-\frac{1}{n^5},
\end{equation*}
We show that for $ 0\leq i \leq L, \Pr[|S_i| > 2^ic \log{n}]< \frac{2i}{n^5}$, by induction.
For the base of the induction, $|S_0| = 1$, and the claim holds.
For the inductive step, assume that $\Pr[|S_i|> 2^i c\log{n}]< \frac{2i}{L}$.
Then
\begin{align*}
\Pr[|S_{i+1}|> 2^{i+1}c\log{n}] &= \Pr[|S_{i+1}|> 2^{i+1}c\log{n} : |S_i|>2^{i}c\log{n}]\Pr[|S_i|>2^{i}c\log{n}]\\
& + \Pr[|S_{i+1}|> 2^{i+1}c\log{n} : |S_i|\leq2^{i}c\log{n}]\Pr[|S_i|\leq 2^{i}c\log{n}].
\end{align*}
From the inductive step and
Claim~\ref{lemmaclaim}, using the union bound, the lemma follows. 
\end{proof}

From Lemma \ref{lemma:main} and Proposition \ref{prop:neighbors},  we immediately get
\begin{corollary}\label{corr:relevant}
Let $G=(V,E)$ be a $d$-light graph, where $|V|=n$, and let $L$ be an integer such that $L>24d$. Let $c$ be such that Propositions \ref{prop:1} and \ref{prop:neighbors} hold, and let $\kappa=2^L c\log{n}$. Let $k = 6d\kappa$, and let $h$ be an adaptive  $k$-wise $\frac{1}{2n^5}$-dependent hash function.  For any vertex $v \in G$, let $S_L$ be the relevant vicinity of $v$. Then
$$\Pr[|\{(u,v)\in E\ |\ u\in S_L\}|> k]<1/n^4.$$
\end{corollary}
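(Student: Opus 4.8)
The plan is to combine Lemma~\ref{lemma:main} with Proposition~\ref{prop:neighbors}, exactly as the statement of the corollary hints. First I would invoke Lemma~\ref{lemma:main}: under the stated hypotheses ($L > 24d$, $\kappa = 2^L c\log n$, $k = 6d\kappa$, and $h$ an adaptive $k$-wise $\frac{1}{2n^5}$-dependent hash function), the relevant vicinity $S_L$ of any vertex $v$ satisfies $|S_L| \le \kappa$ except with probability at most $\frac{1}{n^4} - \frac{1}{n^5}$. Note that $S_L$ is an adaptively exposed set (the exposure process is precisely the query process of Definition~\ref{def:vicinity}, which only ever exposes a vertex that is a neighbor of an already-exposed one), and $|S_L| \le \kappa = 2^L c\log n \ge c\log n$ conditioned on the good event, so the hypotheses of Proposition~\ref{prop:neighbors} are met on that event.

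Next I would apply Proposition~\ref{prop:neighbors} conditioned on $|S_L| \le \kappa$: it gives $\Pr[\,|\{(u,v)\in E \mid u\in S_L\}| > 6d|S_L|\,] \le 1/n^5$. On the good event $|S_L|\le\kappa$ we have $6d|S_L| \le 6d\kappa = k$, so $|\{(u,v)\in E \mid u\in S_L\}| > k$ forces either $|S_L| > \kappa$ or the edge count exceeds $6d|S_L|$. A union bound over these two failure modes then yields
\[
\Pr[|\{(u,v)\in E \mid u\in S_L\}| > k] \le \left(\tfrac{1}{n^4} - \tfrac{1}{n^5}\right) + \tfrac{1}{n^5} = \tfrac{1}{n^4},
\]
which is the desired bound (and is in fact $< 1/n^4$ since the inequality from Proposition~\ref{prop:neighbors} is non-strict but the relevant probabilities are strictly below their stated thresholds in the induction of Lemma~\ref{lemma:main}).

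The one technical point requiring care — and the main (minor) obstacle — is checking that Proposition~\ref{prop:neighbors} genuinely applies to $S_L$: the proposition is stated for an adaptively exposed subset, and one must verify that the relevant-vicinity exploration is a legitimate adaptive vertex exposure process. This is true because the exploration starts at $v$ and at each step only adds a vertex $w$ that is a neighbor of a vertex $u$ already in $\Im(v)$, i.e.\ $w \in N(\Im(v))$, which is exactly what Definition~\ref{def:NDonline}-style adaptive exposure permits; moreover the decision of whether to add $w$ depends only on $\Pi_h$ restricted to the already-exposed vertices and their neighbors, i.e.\ only on $h$ evaluated on $S_L \cup N(S_L)$, so the conditioning structure in the definition of $d$-light and in Proposition~\ref{prop:neighbors} is respected. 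Once that observation is in place, the rest is the two-line union bound above. A secondary bookkeeping point is that the hash function is only $k$-wise (almost) independent, so one should note that the total number of graph queries made while exploring $S_L$ on the good event is at most $|S_L \cup N(S_L)| \le 6d\kappa = k$, hence all invocations of Propositions~\ref{prop:1} and~\ref{prop:neighbors} stay within the $k$-query budget for which the almost-independence guarantees hold — but this is already subsumed in the hypotheses as inherited from Lemma~\ref{lemma:main}.
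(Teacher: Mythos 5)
Your argument takes exactly the route the paper intends — combine Lemma~\ref{lemma:main} with Proposition~\ref{prop:neighbors} via a union bound over the two failure events; the paper itself supplies no more detail than ``immediately get.'' Your observation that the exploration defining $\Im(v)$ is a legitimate adaptive vertex exposure is the right thing to record, and the quantitative bookkeeping ($6d|S_L|\le 6d\kappa = k$ on the good event; probabilities $(\tfrac{1}{n^4}-\tfrac{1}{n^5}) + \tfrac{1}{n^5}$) is correct.

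There is one place where the justification you give does not actually establish what it needs to. Proposition~\ref{prop:neighbors} has a \emph{lower} bound hypothesis, $|S| \ge c\log n$, but you argue that this is met because $|S_L| \le \kappa$ and $\kappa \ge c\log n$ on the good event — a non sequitur, since the good event only upper bounds $|S_L|$, and $|S_L|$ can perfectly well be smaller than $c\log n$ (indeed Lemma~\ref{lemma:expected} shows its expectation is $O(1)$). The standard repair is to note that if $|S_L| < c\log n$ one may continue the adaptive exposure beyond $S_L$ (say by continuing a DFS) until the exposed set $S'$ reaches size $c\log n$; then $S_L\subseteq S'$ so $|\{(u,v)\in E : u\in S_L\}| \le |\{(u,v)\in E: u\in S'\}|$, and Proposition~\ref{prop:neighbors} applied to $S'$ gives a bound of $6d\cdot c\log n \le k$ except with probability $1/n^5$. (One still has to dispatch the corner case where the connected component of $v$ has fewer than $c\log n$ vertices and cannot be padded; in that case all edges incident to the component can be bounded by the same stochastic-dominance machinery, but this degenerate case is not handled explicitly in the paper either.) So: your plan and union-bound arithmetic are right, but replace the ``$\kappa\ge c\log n$'' sentence with an actual extension of the exposed set to hit the size threshold.
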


Corollary  \ref{corr:relevant} essentially shows the following: Assume that $G=(V,E)$ is a $d$-light graph, and there is some function $F:V\rightarrow \mcR$ that is computable by a neighborhood-dependent online algorithm $\mcA$. Then, in order to compute $F(v)$ for any vertex $v \in V$, we  only need to look at a logarithmic number of vertices and edges with high probability. Note that in order to calculate the relevant vicinity, we  need to look at all the vertices in the relevant vicinity and all of their neighbors (to make sure that we have not overlooked any vertex). This is upper bound by the number of edges which have an endpoint in the relevant vicinity, as the relevant vicinity is connected. Furthermore, as we will see in Section \ref{section:onlinetolca}, we would like to store the subgraph induced by the relevant vicinity, and for this, we need to store all of the edges.

Applying a union bound over all the vertices gives that the number of queries we need to make per inquiry is $O(\log{n})$ with probability at least $1-1/n^3$ even if we inquire about all the vertices. 

\section{Expected Size of the Relevant Vicinity}
\label{section:expected}
In Section \ref{section:onlinetolca}, we show constructions of the subgraph induced by the relevant vicinity whose running times and space requirements are dependent on $t_v$ and $t_e$, the size of the relevant vicinity and the number of edges adjacent to the relevant vicinity, respectively. All the dependencies can be upper bound by $O((t_e)^2)$. In this section, we prove that the expected value of $(t_e)^2$ in a $d$-light graph is a constant (when $d$ is a constant).

\begin{proposition}\label{prop:expect_simple_paths}
For any $d$-light graph $G=(V,E)$ and any vertex $v \in V$, the expected number of simple paths of length $t$ originating from $v$ is at most $d^t$.
\end{proposition}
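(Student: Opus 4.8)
The plan is to bound the expected number of simple paths of length $t$ from $v$ by revealing the path one vertex at a time and using the $d$-lightness property at each step. Concretely, I would expose the vertices of a hypothetical path greedily: start with $v$, then reveal its neighborhood, pick a next vertex, reveal its neighborhood, and so on. At each step the set of vertices revealed so far is an adaptively exposed set $S$, so the $d$-light property guarantees that for the current endpoint $u$, the number of \emph{new} neighbors $|N(u)\setminus S|$ is stochastically dominated by $B(\alpha, d/\alpha)$ for some $\alpha \geq d$, and in particular has expectation at most $d$.

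The key steps, in order, would be: (1) Set up the counting random variable $P_t = \sum_{\text{sequences } (v=w_0,w_1,\dots,w_t)} \mathbf{1}[\text{all } w_i \text{ distinct and } (w_{i-1},w_i)\in E \text{ for all } i]$, so $\expect[P_t]$ is exactly the quantity to bound. (2) Reorganize the expectation by conditioning on the path so far: let $Z_i$ be the number of ways to extend a fixed simple path of length $i$ by one new vertex, i.e.\ the number of neighbors of $w_i$ outside $\{w_0,\dots,w_i\}$. Because the first $i+1$ vertices of the path form an adaptively exposed set, conditioned on any realization of that prefix we have $Z_i \sd B(\alpha_i, d/\alpha_i)$, hence $\expect[Z_i \mid \text{prefix}] \leq d$. (3) Iterate: $\expect[\#\text{simple paths of length } i+1 \text{ from } v] = \expect\big[\sum_{\text{simple length-}i\text{ paths}} Z_i\big] \leq d \cdot \expect[\#\text{simple paths of length } i \text{ from } v]$, where the inequality uses that each term's conditional expectation given the prefix is at most $d$ and linearity of expectation. (4) Since the number of length-$0$ paths is $1$, induction on $i$ gives $\expect[P_t] \leq d^t$.

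The main obstacle is step (2)--(3): making the conditioning argument rigorous. One must be careful that "the number of simple paths of length $i$" is itself a random variable, and that when we extend each such path by one vertex we are summing conditional expectations that each individually are bounded by $d$ — this is exactly the kind of reasoning formalized by the tower property, and it is also the structure behind Lemma~\ref{lemma:stoch2}. The cleanest way is to phrase the induction as: enumerate all $d$-degree-style extension choices deterministically over an ambient complete graph, and for each fixed candidate simple path $w_0,\dots,w_{i+1}$ write $\Pr[\text{it is a path in }G] = \Pr[(w_i,w_{i+1})\in E \mid w_0\cdots w_i \text{ is a path in }G]\cdot \Pr[w_0\cdots w_i \text{ is a path}]$, then sum over $w_{i+1}$ first, using that conditioned on the length-$i$ prefix being an exposed path, the expected number of valid new endpoints $w_{i+1}$ is at most $d$ by the $d$-light property. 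No Chernoff bound or tail estimate is needed here — only expectations — so the argument is short once the bookkeeping is set up correctly.
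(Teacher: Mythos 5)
Your proposal is correct and is essentially the same argument as the paper's: an induction on the path length, using the $d$-light property at each step to bound the expected number of one-vertex extensions of a simple path by $d$, and concluding by multiplicativity. The only difference is bookkeeping: the paper formalizes the conditioning by proving a slightly more general claim parameterized by an adaptively exposed set $S$ (so that passing from $S$ to $S'=S\cup\{v\}$ carries the conditioning through the induction automatically), whereas you fix a candidate prefix, sum over prefixes via linearity of expectation, and invoke the tower property to reduce to the conditional bound $\expect[Z_i \mid \text{prefix}]\leq d$. Both routes are valid; the paper's phrasing makes the "adaptively exposed" hypothesis immediately applicable at every step, which is exactly the point you flagged as the main thing to get right.
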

We prove a slightly more general  claim, from which Proposition \ref{prop:expect_simple_paths} immediately follows (taking $S$ in the proposition to be the empty set).
\begin{clam}
For any $d$-light graph $G=(V,E)$, any adaptively exposed subset $S \subseteq V$, and any vertex $v \in N(S)\setminus S$ (or any vertex $v$ if $S$ is empty),   the expected number of simple paths of length $t$ originating from $v$ and not intersecting with $S$ is at most $d^t$.
\end{clam}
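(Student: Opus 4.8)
The plan is to prove the claim by induction on $t$, exploiting the adaptive-exposure structure of $d$-light graphs together with the tower property of conditional expectation. For an adaptively exposed set $S$, let $\mathcal{E}_S$ denote the information revealed by the exposure process that produced $S$ (that is, the instantiation of $S\cup N(S)$ and of the edge set $E_S=\{(u,v)\in E \mid u\in S\}$), and let $P_t(v,S)$ be the random variable counting simple paths of length $t$ that start at $v$ and whose vertex set is disjoint from $S$. I will show that $\expect[\,P_t(v,S)\mid \mathcal{E}_S\,]\le d^t$ for every adaptively exposed $S$ and every $v\in N(S)\setminus S$ (and for arbitrary $v$ when $S=\emptyset$); the claim is the special case $S=\emptyset$, and Proposition~\ref{prop:expect_simple_paths} then follows by taking $S=\emptyset$ in the claim.

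For the base case $t=0$: the unique simple path of length $0$ from $v$ is the trivial one, so $P_0(v,S)=1=d^0$ deterministically. For the inductive step, fix $t\ge 1$ and assume the statement for $t-1$. Extend the exposure by exposing $v$ itself — a legal continuation of the adaptive exposure process, since $v\in N(S)\setminus S$ (or $v$ is the first exposed vertex when $S=\emptyset$) — and let $\mathcal{E}'$ be the resulting information. Crucially, $\mathcal{E}'$ is exactly the exposure information $\mathcal{E}_{S'}$ for $S'=S\cup\{v\}$: it reveals $S\cup N(S)$, $E_S$, the neighborhood $N(v)$, and the edges incident to $v$, and these together determine $S'\cup N(S')$ and $E_{S'}$. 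In particular $N(v)$ and hence $N(v)\setminus S$ are $\mathcal{E}'$-measurable.

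The key deterministic identity is the path decomposition
$$P_t(v,S)=\sum_{w\in N(v)\setminus S}P_{t-1}\bigl(w,\,S\cup\{v\}\bigr),$$
which holds because every simple path of length $t$ from $v$ avoiding $S$ is obtained uniquely by choosing a first edge $(v,w)$ with $w\in N(v)\setminus S$ and appending a simple path of length $t-1$ from $w$ that avoids $S\cup\{v\}$ (the tail avoids $v$ by simplicity and avoids $S$ by hypothesis), and conversely. Conditioning on $\mathcal{E}'$, applying linearity of expectation, and then invoking the inductive hypothesis for each $w$ — legitimate because $w\in N(S')\setminus S'$ and $\mathcal{E}'=\mathcal{E}_{S'}$ — gives $\expect[\,P_t(v,S)\mid\mathcal{E}'\,]\le |N(v)\setminus S|\cdot d^{t-1}$. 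Finally, by the tower property, $\expect[\,P_t(v,S)\mid\mathcal{E}_S\,]\le d^{t-1}\cdot\expect[\,|N(v)\setminus S|\mid\mathcal{E}_S\,]$; and by the definition of $d$-lightness, conditioned on $\mathcal{E}_S$ the random variable $|N(v)\setminus S|$ is stochastically dominated by $B(\alpha,d/\alpha)$ for some $d\le\alpha\le|V|$, so its conditional mean is at most $\alpha\cdot(d/\alpha)=d$. Hence $\expect[\,P_t(v,S)\mid\mathcal{E}_S\,]\le d^t$, completing the induction.

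The only real subtlety — and the step I expect to require the most care in the write-up — is the bookkeeping of the conditioning: one must verify that "expose $v$, then recurse on each neighbor $w$" is a valid adaptive-exposure continuation and that the information $\mathcal{E}'$ available after exposing $v$ coincides with the exposure information $\mathcal{E}_{S'}$ of $S'=S\cup\{v\}$, so that the inductive hypothesis (a statement about conditional expectations given such exposure information) applies verbatim to each $w$. The path-decomposition identity and the passage from stochastic dominance to the bound $\expect[|N(v)\setminus S|]\le d$ are routine.
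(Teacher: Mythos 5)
Your proof is correct and follows essentially the same route as the paper's: induction on $t$, decomposing length-$t$ paths from $v$ by their first step into a neighbor $w\in N(v)\setminus S$, invoking the inductive hypothesis with $S'=S\cup\{v\}$ for each $w$, and using $\expect[|N(v)\setminus S|]\le d$ from the $d$-light property. The only difference is that you make explicit the conditioning bookkeeping (the tower property and the identification of $\mathcal{E}'$ with $\mathcal{E}_{S'}$), which the paper handles more informally by saying ``Fixing $a$, \ldots'' and ``as $S'$ is also adaptively exposed'' without spelling out the measurability details.
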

\begin{proof}
The proof is by induction on $t$. For the base of the induction, $t=0$, and there is a single simple path (the empty path). For the inductive step, let $t>0$ and assume that the claim holds for $t-1$. We show that it holds for $t$.
Given $S$, let $S'=S \cup \{v\}$. Let $a$ be a random variable representing the number of neighbors of $v$ that are not in $S$; that is, $a = |N(v)\setminus S|$. Because $G$ is $d$-light,  $\expect(a) \leq d$.
Fixing $a$, label the neighbors of $v$ that are not in $S$ by $w_1, w_2, \ldots, w_a$. By the inductive hypothesis (as $S'$ is also adaptively exposed), for all $i = 1, \ldots, a$,
the expected number of simple paths of length $t-1$ originating from $w_i$ and not intersecting with $S'$ is at most $d^{t-1}$. The expected number of simple paths of length $t$ originating from $v$ and not intersecting with $S$ is therefore upper bounded by
$$\sum_j \Pr[a=j]j\cdot d^{t-1} = \expect[a] d^{t-1} \leq d^t.$$
\end{proof}

For any simple path $p$ originating at some vertex $v$, we would like to determine whether all the vertices on $p$ are in the relevant vicinity of $v$. As we cannot make any assumptions about the original labels of the vertices, we upper bound this by the probability that the levels of the vertices on the path are non-decreasing.
 \begin{definition}[Legal path]
 We say that a path $p= v\leadsto u$ is \emph{legal} if it is simple and the labels of the vertices on $p$ are in non decreasing order.
 \end{definition}
  \begin{definition}[Prefix-legal path]
  We say that a path $p= v\leadsto u$ of length $t$ is \emph{prefix-legal} if it is simple, and the  prefix of $p$ of length $t-1$ is legal.
  \end{definition}
\begin{proposition}\label{prop:probability_of_a_path}
Let $G=(V,E)$ be a $d$-light graph and let the conditions of Lemma~\ref{lemma:main} hold. That is, $k, \kappa = O(\log{n})$ and $L>24d$. For any $c>0$ there exists a value $L = O(d)$ for which the following holds: Let $h$ be an adaptive  $k$-wise $\frac{1}{2n^5}$-dependent hash function, $h:V \rightarrow [L]$.
For any path $p$ of length $t'<k-1$ originating at some vertex $v$, the probability that $p$ is legal is at most $d^{-ct'}$.
\end{proposition}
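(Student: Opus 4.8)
The plan is to fix an arbitrary path $p=(v=x_0,x_1,\dots,x_{t'})$ and to bound, over the random choice of $h$, the probability that the levels $h(x_0),h(x_1),\dots,h(x_{t'})$ form a non-decreasing sequence, which is exactly the event that $p$ is legal. I would do this in two stages: first pass from the adaptive $\tfrac{1}{2n^5}$-almost $k$-wise independent $h$ to a truly uniform function, and then estimate the monotonicity probability by a direct counting argument.

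\emph{Stage 1: reduction to a uniform hash function.} Since $p$ is fixed in advance, the event ``$p$ is legal'' is determined by the $t'+1$ values $h(x_0),\dots,h(x_{t'})$, which a distinguisher can read off with $t'+1\le k$ (non-adaptive) queries; here we use the hypothesis $t'<k-1$. By the definition of $\tfrac{1}{2n^5}$-almost $k$-wise independence it follows that
$$\Pr_h[\,p\text{ is legal}\,]\ \le\ q_{t'}+\tfrac{1}{2n^5},$$
where $q_{t'}$ is the probability that $t'+1$ i.i.d.\ uniform values in $[L]$ are non-decreasing. For the range of path lengths that actually arise (at most $O(\log n)$, by the bound on the size of the relevant vicinity in Lemma~\ref{lemma:main}), the additive $\tfrac1{2n^5}$ is negligible compared with the bound we obtain on $q_{t'}$, so it suffices to bound $q_{t'}$.

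\emph{Stage 2: counting monotone tuples and choosing $L$.} A non-decreasing $(t'+1)$-tuple over $[L]$ is exactly a size-$(t'+1)$ multiset from $[L]$, so $q_{t'}=\binom{L+t'}{t'+1}/L^{t'+1}$. The crucial feature is that this decays geometrically in $t'$ at rate $\Theta(1/L)$: writing $q_{t'}=\prod_{i=1}^{t'+1}\frac{L-1+i}{iL}$ and bounding the $i$-th factor by $\tfrac2i$ when $i\le L$ and by $\tfrac2L$ when $i>L$ gives $q_{t'}\le \frac{2^{t'+1}}{(t'+1)!}$ for $t'<L$, and $q_{t'}\le\bigl(\tfrac2L\bigr)^{t'+1}\cdot\tfrac{L^L}{L!}\le\bigl(\tfrac2L\bigr)^{t'}\cdot\tfrac{2e^L}{L}$ for $t'\ge L$. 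Given $c$, I would then take $L$ to be a sufficiently large constant multiple of $d$---in particular $L>24d$, as Lemma~\ref{lemma:main} requires---large enough that $2/L\le d^{-c}$ with room to absorb the constant (and, in the short-path regime, factorial) factors; this yields $q_{t'}\le d^{-ct'}$ and hence, with Stage 1, $\Pr_h[\,p\text{ is legal}\,]\le d^{-ct'}$.

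The heart of the matter is the combinatorial estimate: a factor $\Theta(1/L)$ saved \emph{per edge} of $p$. A naive bound---using only that two fixed vertices receive the same level with probability $1/L$, applied to a collection of $\Theta(t')$ pairwise-disjoint edges of $p$---would give only about $(2/3)^{t'/2}$, which does not beat the $d^{t'}$ growth in the expected number of paths of length $t'$ (Proposition~\ref{prop:expect_simple_paths}) and would not permit $L=O(d)$. It is the full ``monotone-chain'' structure---that a non-decreasing length-$(t'+1)$ sequence over $[L]$ is a multiset, of which there are only $\binom{L+t'}{t'+1}$---that delivers the right exponential decay. (For the few shortest paths the displayed bounds on $q_{t'}$ are loose relative to the trivial $q_{t'}\le 1$, but there are only $O(1)$ such paths and they contribute only $O(1)$ to the downstream sum over $t'$, so they cause no difficulty.)
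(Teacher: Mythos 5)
Your proof matches the paper's: both reduce to a uniformly random hash via almost-$k$-wise independence (paying the additive $\eps$), then bound the number of non-decreasing level sequences by a stars-and-bars/multiset count --- the paper reaches $\Pr[p\text{ is legal}]\le L^{-t'}\binom{t'+L}{t'}+\eps\le\bigl(e(t'+L)/(Lt')\bigr)^{t'}+\eps$ and concludes by "selecting an appropriate value for $L$." Your explicit case split on $t'\lessgtr L$, and your remark on why a naive pairwise-collision bound would lose to the $d^{t'}$ path count, merely flesh out what the paper leaves to the reader; the two arguments are the same.
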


\begin{proof}

For any simple  path $p$ of length ${t'}$ from $v$, there are $|L|^{t'}$ possible values for the levels of the vertices of $p$.  Let the values be $r_0, r_1, \ldots, r_{{t'}-1}$. We define ${t'}+1$ new variables $a_0 = r_0$, $a_1 = r_1-r_0$, \ldots, $a_{{t'}-1} = r_{{t'}-1}-r_{{t'}-2}$, $a_{{t'}}= L-r_{{t'}-1}$. Clearly, the $a_i$'s uniquely define the $r_i$'s and $p$ is legal if and only if $a_0, \ldots, a_{{t'}}$ are all non-negative.

Note that the $\dsum_{i=0}^{t'} a_i = L$; hence computing the number of possible legal values of $a_0, \ldots, a_{{t'}}$ is the same as computing the number of ways of placing L identical balls in ${t'}+1$ distinct bins,\footnote{Alternatively, one can view this as placing ${t'}+1$ separators between $L$ balls.}  where each bin represents a vertex and if there are $k$ balls in bin $y$, then $r_y = r_{y-1}+k$. This is known to be ${{t'}+L \choose {t'}}$. 
Therefore, assuming the choices of the levels are all uniform,
$$\Pr[p \text{ is legal}] = \frac{1}{L^T}{{t'}+L \choose {t'}} \leq \left( \frac{e({t'}+L)}{L{t'}}\right)^{t'}.$$

From the definition of $\eps$-almost adaptive $k$-wise independent hash functions, we immediately get
$$\Pr[p \text{ is legal}] \leq \left( \frac{e({t'}+L)}{L{t'}}\right)^{t'} + \eps.$$
The result follows by selecting an appropriate value for $L$.
\end{proof}

The following corollary is immediate, setting $t'=t-1$ in  Proposition~\ref{prop:probability_of_a_path}.
\begin{corollary}\label{corr:probability_of_a_path}
Let $G=(V,E)$ be a $d$-light graph and let the conditions of Lemma~\ref{lemma:main} hold. That is, $k, \kappa = O(\log{n})$ and $L>24d$. For any $c>0$ there exists a value $L = O(d)$ for which the following holds: Let $h$ be an adaptive  $k$-wise $\frac{1}{2n^5}$-dependent hash function, $h:V \rightarrow [L]$.
For any path $p$ of length $t<k$ originating at some vertex $v$, the probability that $p$ is  prefix-legal is at most $d^{c-ct}$.
\end{corollary}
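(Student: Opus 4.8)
The plan is to obtain Corollary~\ref{corr:probability_of_a_path} as an immediate consequence of Proposition~\ref{prop:probability_of_a_path} via the substitution $t' = t-1$, using the fact that prefix-legality of a length-$t$ path is, by definition, exactly legality of its length-$(t-1)$ prefix (together with simplicity, which can only shrink the event).

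First I would unwind the definitions. Let $p = v \leadsto u$ be a simple path of length $t$, and let $p'$ be its prefix of length $t-1$. By the definition of a prefix-legal path, $p$ is prefix-legal precisely when $p'$ is legal; in particular the event ``$p$ is prefix-legal'' is contained in the event ``$p'$ is legal'', so $\Pr[p \text{ is prefix-legal}] \le \Pr[p' \text{ is legal}]$.

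Next I would verify that the hypotheses of Proposition~\ref{prop:probability_of_a_path} apply to $p'$ with $t' = t-1$: from $t < k$ we get $t' = t-1 < k-1$, which is exactly the length restriction the proposition requires, and the value $L = O(d)$ furnished there for the given constant $c$ is the same $L$ used in the corollary, so no additional constraint is imposed. Proposition~\ref{prop:probability_of_a_path} then yields $\Pr[p' \text{ is legal}] \le d^{-ct'} = d^{-c(t-1)} = d^{c-ct}$, and combining with the containment from the previous step gives $\Pr[p \text{ is prefix-legal}] \le d^{c-ct}$, as claimed.

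There is essentially no obstacle: this is a bookkeeping corollary, and the only points requiring care are the off-by-one in the exponent and stating the hypothesis as $t < k$ (rather than $t' < k-1$) so that it matches the form used downstream. I would also note in passing that the bound is nontrivial only once $t > c$; for smaller $t$ the right-hand side $d^{c-ct}$ exceeds $1$ and the statement is vacuous, which is harmless for the intended use, where this corollary is multiplied against the expected-path-count bound $d^t$ of Proposition~\ref{prop:expect_simple_paths} and summed over $t$ to produce a constant (for $c$ chosen large enough).
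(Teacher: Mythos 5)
Your proposal is correct and matches the paper exactly: the paper states the corollary is immediate from Proposition~\ref{prop:probability_of_a_path} by setting $t'=t-1$, and your unwinding of the definitions (prefix-legal implies the length-$(t-1)$ prefix is legal, the exponent shifts by $c$, the length bound $t<k$ becomes $t'<k-1$) is precisely the intended bookkeeping.
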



As a warm-up, we first show that the expected of edges adjacent to a relevant vicinity is a constant.

\begin{lemma}\label{lemma:expected}
Let $G=(V,E)$ be a $d$-light graph and let the conditions of Proposition~\ref{prop:probability_of_a_path} hold. That is, $k, \kappa = O(\log{n})$ and $L=O(d)$.  Let $h$ be an adaptive  $k$-wise $\frac{1}{2n^5}$-dependent hash function, $h:V \rightarrow [L]$. Then the expected number of edges adjacent to a relevant vicinity in $G$ is $O(1)$.
\end{lemma}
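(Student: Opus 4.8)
The plan is to bound the expected number of edges adjacent to the relevant vicinity $\Im(v)$ by a sum over paths, using linearity of expectation together with the two ingredients already established: Proposition~\ref{prop:expect_simple_paths} (the expected number of simple paths of length $t$ from $v$ is at most $d^t$) and Corollary~\ref{corr:probability_of_a_path} (a path of length $t$ is prefix-legal with probability at most $d^{c-ct}$). First I would observe that every edge adjacent to the relevant vicinity has at least one endpoint $u \in \Im(v)$, and that $\Im(v)$ is connected through $v$ via edges going ``downward'' in $\Pi_h$; hence every $u \in \Im(v)$ is reached from $v$ by a simple path all of whose vertices lie in $\Im(v)$. Such a path, together with the rule defining $\Im(v)$ (add $w \in N(u)$ whenever $\Pi_h(w) < \Pi_h(u)$), forces the $\Pi_h$-ranks along the path from $v$ to $u$ to be decreasing; since ranks are based on the pairs $(h(\cdot),\cdot)$, this is implied by the levels $h(\cdot)$ being non-increasing along the path — equivalently, reversing the path, the levels are non-decreasing, i.e. the path is legal. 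An edge adjacent to $\Im(v)$ is then ``witnessed'' by a path from $v$ to one of its endpoints which is legal, plus one extra edge to the other endpoint, so it corresponds to a prefix-legal path.

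Next I would set up the counting. For each $t \geq 0$, let $P_t$ be the number of simple paths of length $t$ from $v$, and let $Q_t$ be the number of prefix-legal simple paths of length $t$ from $v$. Each edge adjacent to the relevant vicinity is accounted for by at least one prefix-legal path (of some length $t \leq |\Im(v)|$), so the number of such edges is at most $\sum_{t \geq 1} Q_t$, and we may truncate at $t < k$ using Corollary~\ref{corr:relevant} to control the contribution of the (rare) event that the relevant vicinity is large — this is the one place some care is needed. Taking expectations, $\expect[Q_t] \le \expect[P_t \cdot \mathbf{1}_{p \text{ prefix-legal}}]$, and since the level assignment $h$ is (almost) $k$-wise independent while the graph structure determining which paths exist is revealed by an adaptive exposure, I would argue that for $t<k$ one can condition on the realized path (there are at most $P_t$ of them in expectation by Proposition~\ref{prop:expect_simple_paths}) and apply Corollary~\ref{corr:probability_of_a_path} to each: $\expect[Q_t] \leq d^t \cdot d^{c - ct} = d^{c} \cdot d^{(1-c)t}$. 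Choosing the constant $c > 1$ in Corollary~\ref{corr:probability_of_a_path} (which we are free to do, at the cost of a larger $L = O(d)$), the series $\sum_{t\ge 1} d^{c}\, d^{(1-c)t}$ is a convergent geometric series summing to $O(1)$ for constant $d$. The tail $t \geq k$ contributes at most $1/\poly(n)$ times the trivial bound $n^2$ on the number of edges, hence $o(1)$, by Corollary~\ref{corr:relevant}.

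Finally I would assemble: $\expect[\#\{\text{edges adjacent to }\Im(v)\}] \leq \sum_{1 \leq t < k} \expect[Q_t] + n^2 \cdot \Pr[|\Im(v)| \geq k] \leq \sum_{t \geq 1} d^{c} d^{(1-c)t} + o(1) = O(1)$.

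The main obstacle I anticipate is making the conditional-independence step rigorous: Proposition~\ref{prop:expect_simple_paths} bounds the expected number of simple paths where the expectation is over the random graph, while Corollary~\ref{corr:probability_of_a_path} bounds the legality probability over the random hash function $h$ — these are independent sources of randomness, so one should factor the expectation cleanly as a sum over potential paths $p$ (ranging over all simple paths in any fixed realization of $G$) of $\Pr_G[p \text{ exists}] \cdot \Pr_h[p \text{ prefix-legal}]$, and then recognize $\sum_p \Pr_G[p\text{ exists}] = \expect_G[P_t] \le d^t$. The subtlety is that ``$p$ exists'' and ``$p$ prefix-legal'' must genuinely depend on disjoint randomness for the product to be valid, and that the adaptive exposure framework underlying $d$-lightness does not secretly couple them; since the relevant vicinity is explored using $h$, one must phrase the path enumeration so that the graph-exposure order is not itself a function of $h$ in a way that breaks the bound — this is exactly the kind of issue the paper flags in its remark about computationally-efficient exposures, and I would handle it by enumerating paths combinatorially rather than by the LCA's actual exploration order.
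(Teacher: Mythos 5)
Your proof takes essentially the same approach as the paper's: both decompose the edges adjacent to the relevant vicinity into those witnessed by a prefix-legal path of length at most $k$ (bounded via Proposition~\ref{prop:expect_simple_paths} and Corollary~\ref{corr:probability_of_a_path} as a geometric series, choosing $c$ large enough) and a tail of length greater than $k$ (bounded by $n^2\cdot\Pr[|S_L|>k-1]=o(1)$). Your additional care about the monotonicity direction in the definition of ``legal'' and about factoring the graph and hash randomness independently are sensible clarifications of points the paper leaves implicit, but the underlying argument is the same.
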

\begin{proof}
Let $v$ be any vertex, let $S_L$ be the relevant vicinity of $v$, and let $E_L$ be the set of edges with at least one endpoint in $S_L$.
Let $(u,w) \in E$ be any edge , and denote by $p_{(v,u,w)}^{\leq k}$ the indicator random variable whose value is $1$ if there exists a prefix-legal path of length at most $k$ from $v$ to $w$ whose last edge is $(u,w)$, and $0$ otherwise. Similarly,  denote by $p_{(v,u,w)}^{>k}$ the random variable whose value is $1$ if there exists a prefix-legal path of length greater than $k$ from $v$ to $w$ whose last edge is $(u,w)$, and $0$ otherwise.
For any $(u,w) \in E$,
$$\Pr[(u,w) \in E_L] \leq \Pr[p_{(v,u,w)}^{\leq k}] + \Pr[p_{(v,u,w)}^{>k}].$$

Therefore,
\begin{align}
\expect[|E_L|] &\leq \dsum_{(u,w) \in E} \Pr[(u,w) \in E_L] \notag\\
&\leq \dsum_{(u,w) \in E} \Pr[p_{(v,u,w)}^{\leq k}] + \dsum_{(u,w) \in E} \Pr[p_{(v,u,w)}^{>k}] \notag \\
&\leq \dsum_{(u,w) \in E} \Pr[p_{(v,u,w)}^{\leq k}] + n^2 \Pr[|S_L| > k-1] \notag \\
&\leq \dsum_{(u,w) \in E} \Pr[p_{(v,u,w)}^{\leq k}] +1 \label{eq:lemma5}\\
&\leq \dsum_{\text{paths of length } \leq k\ \text{from } v} \Pr[\text{path is prefix-legal }] +1 \notag\\
&\leq \dsum_{t \leq k} \left( \dsum_{\text{paths of length}\ t\ \text{from } v} \Pr[\text{path is prefix-legal }]\right)  +1\notag \\
&\leq \dsum_{t \leq k} d^{t+c-ct} +1 = O(1) \label{eq:from2props}
\end{align}
where Inequality \eqref{eq:lemma5} is due to Lemma~\ref{lemma:main}, and Inequality \eqref{eq:from2props} is due to Proposition \ref{prop:expect_simple_paths} and Corollary \ref{corr:probability_of_a_path}.
\end{proof}

\begin{lemma}\label{lemma:expected_squared}
Let $G=(V,E)$ be a $d$-light graph and let the conditions of Proposition~\ref{prop:probability_of_a_path} hold. That is, $k, \kappa = O(\log{n})$ and $L=O(d)$.  Let $h$ be an adaptive  $k$-wise $\frac{1}{2n^5}$-dependent hash function, $h:V \rightarrow [L]$. Denote by $E_L$ the number of edges that have at least one endpoint in the relevant vicinity of some vertex $v$. Then, $\expect [|E_L|^2] = O(1)$.
\end{lemma}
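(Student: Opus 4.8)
The plan is to bound $\expect[|E_L|^2]$ by expanding the square as a double sum over pairs of edges and controlling the probability that both edges lie (have an endpoint) in the relevant vicinity $S_L$ simultaneously. Write $|E_L| = \sum_{(u,w)\in E} \mathbf{1}[(u,w)\in E_L]$, so that
\[
\expect[|E_L|^2] = \sum_{e_1\in E}\sum_{e_2 \in E} \Pr[e_1\in E_L \wedge e_2 \in E_L].
\]
The key observation is that if both $e_1=(u_1,w_1)$ and $e_2=(u_2,w_2)$ have an endpoint in $S_L$, then (ignoring the low-probability event that $|S_L|>k-1$, handled exactly as in Lemma~\ref{lemma:expected} via Lemma~\ref{lemma:main}) there are prefix-legal paths of length at most $k$ from $v$ reaching $w_1$ via $u_1$ and reaching $w_2$ via $u_2$. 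Since $S_L$ is connected and contains $v$, these two paths share at least the vertex $v$; more usefully, one can extract from $S_L$ a ``Y-shaped'' structure --- a legal path from $v$ to some branch vertex $x$, followed by two prefix-legal paths from $x$ to $w_1$ (through $u_1$) and to $w_2$ (through $u_2$) --- whose total length is at most $2k$ and whose constituent pieces are legal/prefix-legal. The cleaner route is simply to bound $\Pr[e_1\in E_L \wedge e_2\in E_L]$ by the probability that there exist prefix-legal paths of length $\le k$ from $v$ to both $w_1$ and $w_2$, and sum over all pairs.

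Concretely, I would proceed as follows. First, split off the tail: $\expect[|E_L|^2] \le \expect[|E_L|^2 \mathbf{1}[|S_L|\le k-1]] + \expect[|E_L|^2\mathbf{1}[|S_L|>k-1]]$. On the event $|S_L|>k-1$ we have $|E_L|\le n^2$ trivially (at most $\binom{n}{2}$ edges), and $\Pr[|S_L|>k-1]\le 1/n^4$ by Lemma~\ref{lemma:main} (via Corollary~\ref{corr:relevant}), so this term contributes $O(1)$ --- in fact $o(1)$. Second, on the event $|S_L|\le k-1$, bound
\[
|E_L|^2 \le \Big(\sum_{(u,w)\in E} p^{\le k}_{(v,u,w)}\Big)^2 = \sum_{(u_1,w_1)\in E}\sum_{(u_2,w_2)\in E} p^{\le k}_{(v,u_1,w_1)}\, p^{\le k}_{(v,u_2,w_2)},
\]
using the notation from the proof of Lemma~\ref{lemma:expected}. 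Taking expectations, I need to bound $\sum_{e_1,e_2}\Pr[p^{\le k}_{(v,u_1,w_1)} = 1 \wedge p^{\le k}_{(v,u_2,w_2)} = 1]$. The natural way is to reindex by the pair of prefix-legal paths: the quantity is at most the expected number of ordered pairs $(p_1,p_2)$ of prefix-legal paths of length $\le k$ from $v$, which is at most $\big(\expect[\#\text{prefix-legal paths of length} \le k \text{ from } v]\big)^2$ only if the two path-counts were independent --- which they are not. So instead I would bound the expected number of such pairs directly by grouping on the last common vertex $x$ of $p_1$ and $p_2$: each pair decomposes into a legal path $v\leadsto x$ of some length $s$ and two internally-vertex-disjoint prefix-legal continuations from $x$ of lengths $t_1 - s$ and $t_2 - s$. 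Using Proposition~\ref{prop:expect_simple_paths} (expected number of simple paths of length $t$ from any vertex is $\le d^t$, and this holds from $x$ conditioned on the already-exposed prefix since $G$ is $d$-light) together with Corollary~\ref{corr:probability_of_a_path} (each prefix-legal path of length $t$ has probability $\le d^{c-ct}$ of being prefix-legal), and choosing $c$ large enough (say $c \ge 3$), the sum
\[
\sum_{s\le k}\ \sum_{t_1 - s \le k}\ \sum_{t_2 - s \le k} d^{s}\cdot d^{(t_1-s)} d^{c - c(t_1-s)}\cdot d^{(t_2 - s)} d^{c-c(t_2 - s)}
\]
converges to $O(1)$, because after the branch vertex each of the two arms contributes a geometrically-decaying factor $d^{(1-c)(t_i - s)}$ while the shared stem contributes $d^{s}\cdot d^{-(c-1)s}\cdot\ldots$ --- here one must be a little careful that the stem up to $x$ is \emph{legal} (so its probability is $d^{-cs}$, gaining back the $d^s$ from Proposition~\ref{prop:expect_simple_paths} with room to spare). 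The bookkeeping mirrors Inequalities~\eqref{eq:lemma5}--\eqref{eq:from2props} in the proof of Lemma~\ref{lemma:expected}, just with two arms instead of one.

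The main obstacle, and the step requiring the most care, is the dependence between the two path-existence events: $p^{\le k}_{(v,u_1,w_1)}$ and $p^{\le k}_{(v,u_2,w_2)}$ are correlated through the shared hash values of the vertices on the common prefix of the two paths, so one cannot simply multiply the single-path bounds. The resolution is exactly the ``last common vertex'' decomposition described above: once we condition on the (adaptively exposed) stem $v\leadsto x$ and its hash values, the two continuations branch into disjoint vertex sets, the $d$-lightness property applies to the continuation from $x$ conditioned on everything exposed so far, and the hash values on the two arms are governed by the adaptive $k$-wise ($\frac{1}{2n^5}$-dependent) property of $h$ --- so Corollary~\ref{corr:probability_of_a_path} applies to each arm independently given the stem, up to an additive $1/\poly(n)$ error that is absorbed harmlessly (the total path count being at most $n^{2k}$, and $\eps = 1/\poly(n)$ chosen with a large enough polynomial, or more simply by invoking the adaptive-independence bound for the $\le 2k = O(\log n)$ queries made along the union of the two paths). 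Everything else is a geometric-series estimate with the constant $c$ in Corollary~\ref{corr:probability_of_a_path} chosen large enough (it is a free parameter there, at the cost of $L = O(d)$) to make the double sum converge; since $d$ is constant, $L$ is constant and the final bound is $O(1)$.
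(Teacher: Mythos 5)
Your high-level strategy --- expand $|E_L|^2$ as a double sum over pairs of edges, split off the tail event $|S_L| > k-1$ via Lemma~\ref{lemma:main}, and bound the main term by a sum over pairs of candidate paths from $v$ --- is the same as the paper's. Where you depart is in how the double sum over path pairs is controlled. You decompose each ordered pair of prefix-legal paths by the length $s$ of their longest common prefix and the branch vertex $x$, condition on the stem, and then apply Proposition~\ref{prop:expect_simple_paths} and Corollary~\ref{corr:probability_of_a_path} to each arm separately. The paper instead symmetrizes (taking the second path to be no longer than the first, at the cost of a factor of $2$) and then charges the \emph{second} path only its raw, unweighted path count $\sum_{s\le t}\Lambda_v^s$, applying the prefix-legal probability bound to the first path alone. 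That asymmetric bookkeeping is what lets the paper stay inside the $k$-wise hypothesis: the hash function is only ever ``examined'' along a single path of length $<k$.

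That is exactly where your version has a gap. Conditioning on the hash values along the stem and the first arm before invoking Corollary~\ref{corr:probability_of_a_path} on the second arm requires the family to be (roughly) adaptive $2k$-wise almost independent, since the union of the two paths can contain up to $2k$ distinct vertices, while the lemma supplies only $k$-wise. You gesture at this (``invoking the adaptive-independence bound for the $\le 2k$ queries''), but the hypothesis as stated does not give it to you; it would have to be patched by enlarging the constant in the definition of $k$ (which Theorem~\ref{thm:almost} tolerates, since $kL = O(\log n)$ still has slack) or, more cleanly, by adopting the paper's asymmetric bound, which sidesteps the issue entirely. Two smaller points: the displayed triple sum omits the factor $d^{-cs}$ for the stem being legal --- you note it in the surrounding prose, but as written $\sum_s d^s$ diverges --- and the claim that the two arms are ``internally-vertex-disjoint'' is false in a general graph (they may reconverge after $x$), though for an upper bound on the count of ordered pairs this does not actually matter.
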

\begin{proof}
For any edge $e=(u,w) \in E$, let $\mathbb{I}_e$ be an indicator variable whose value is $1$ if $e \in E_L$ and $0$ otherwise. Let $\mathbb{I}_{|S_L| > k}$ be an indicator variable whose value is $1$ if $|S_L|>k$ and $0$ otherwise.
Then
$$|E_L|^2 = \left( \dsum_{e \in E}\mathbb{I}_e\right) ^2 =  \dsum_{e \in E}\mathbb{I}_e\dsum_{f \in E}\mathbb{I}_f$$
Let $p_{(v,u,w)}^{\leq k}$ and $p_{(v,u,w)}^{>k}$ be as in the proof of Lemma~\ref{lemma:expected}.
\begin{align}
|E_L|^2 &\leq  \dsum_{(u,w) \in E} (p_{(v,u,w)}^{\leq k} + p_{(v,u,w)}^{>k})   \dsum_{(x,y) \in E} ( p_{(v,x,y)}^{\leq k} + p_{(v,x,y)}^{>k})   \notag \\
&= \dsum_{(u,w) \in E}\dsum_{(x,y) \in E}  (p_{(v,u,w)}^{\leq k} + p_{(v,u,w)}^{>k})  (p_{(v,x,y)}^{\leq k} + p_{(v,x,y)}^{>k}) \notag \\
&\leq \dsum_{(u,w) \in E}\dsum_{(x,y) \in E} (p_{(v,u,w)}^{\leq k}p_{(v,x,y)}^{\leq k} +3\mathbb{I}_{|S_L| > k-1})  \notag  \\
&\leq 3n^4\mathbb{I}_{|S_L| > k-1} + \dsum_{(u,w) \in E}\dsum_{(x,y) \in E}(p_{(v,u,w)}^{\leq k}p_{(v,x,y)}^{\leq k}).  \label{eq:lemm5}
\end{align}
For every vertex $u \in V$, let $\sigma_u^t$ denote the number of simple paths from $v$ to $u$ of length $t$, and label these paths arbitrarily by $q_u^t(i), i=1,2,\ldots \sigma_u^t$. For each path $q_u^t(i)$, let $\hat{q}_u^t(i)$ be the random variable whose value is $1$ if $q_u^t(i)$ is prefix-legal, and $0$ otherwise. Let $\Lambda_v^t$ denote the total number of simple paths of length $t$ originating in $v$.

\begin{align}
 \dsum_{(u,w) \in E}\dsum_{(x,y) \in E} (p_{(v,u,w)}^{\leq k}p_{(v,x,y)}^{\leq k})   &\leq \dsum_{w \in V}\dsum_{t\leq k}\dsum_{i=1}^{\sigma_w^t}   \dsum_{y \in V}\dsum_{s\leq k}\dsum_{j=1}^{\sigma_y^s}  \hat{q}_w^t(i)  \hat{q}_y^s(j)  \notag\\
 &\leq 2\dsum_{w \in V}\dsum_{t\leq k}\dsum_{i=1}^{\sigma_w^t}   \dsum_{y \in V}\dsum_{s\leq t}\dsum_{j=1}^{\sigma_y^s} \hat{q}_w^t(i)  \hat{q}_y^s(j) \label{eq:3}\\
&= 2 \left( \dsum_{w \in V}\dsum_{t\leq k}\dsum_{i=1}^{\sigma_w^t} \hat{q}_w^t(i)\right)  \left(   \dsum_{y \in V}\dsum_{s\leq t}\dsum_{j=1}^{\sigma_y^s}\hat{q}_y^s(j) \right) \notag\\
&\leq 2 \dsum_{w \in V}\dsum_{t\leq k}\dsum_{i=1}^{\sigma_w^t} \hat{q}_w^t(i) \dsum_{s \leq t}\Lambda_y^s , \notag
\end{align}
where Inequality \eqref{eq:3} is because we order the paths by length and either path can be longer. 
\begin{align}
\expect[ \dsum_{(u,w) \in E}\dsum_{(x,y) \in E} (p_{(v,u,w)}^{\leq k}p_{(v,x,y)}^{\leq k})]
&\leq 2\dsum_{t \leq k}\dsum_{\text{paths of length }  t\ \text{from } v} \Pr[\text{path is prefix-legal}] \dsum_{s \leq t}\Lambda_y^s  \notag\\
&\leq  2\dsum_{t \leq k} d^{t+c-ct} \dsum_{s \leq t} \Lambda_y^s \notag\\
&\leq 2 \dsum_{t \leq k} d^{t+c-ct} d^{2t} \notag \\
&\leq 2 \dsum_{t \leq k} d^{3t+c-ct} = O(1).\label{eq:last}
\end{align}
From Corollary~\ref{corr:probability_of_a_path} we know that we can choose an $L=O(d)$ such that Inequality~\eqref{eq:last} holds. From Inequalities~\eqref{eq:lemm5} and \eqref{eq:last},  Lemma~\ref{lemma:main} and the linearity of expectation, the lemma follows. 
\end{proof}

\section{From Online to LCA}
\label{section:onlinetolca}
Let $G=(V,E)$ be a $d$-light graph, $d>0$.  Let $F$ be a search problem on $V$, and assume that there exists a neighborhood-dependent online algorithm $\mcA$ for $F$. We show how we can use the results of the previous sections to construct an LCA for $F$.
Given an inquiry $v \in V$, we would like to generate a permutation $\Pi$ on $V$, build the relevant vicinity of $v$ relative to $\Pi$, and simulate $\mcA$ on these vertices in the order of $\Pi$ . Because $\mcA$ is neighborhood-dependent, we do not need to look at any vertices outside the relevant vicinity  in order to correctly compute the output of $\mcA$ on $v$, and so our LCA will output a reply consistent with the execution of $\mcA$ on the vertices, if they arrive according to $\Pi$. In order to simulate $\mcA$ on the correct order, we need to store the relevant vicinity and label the vertices in a way that defines the ordering. We show two ways of doing this. The first gives a better time bound, at the expense of a worse space bound. The second gives a better space bound, at the expense of a worse time bound. It remains an open problem whether we can achieve ``the best of both worlds''- an LCA requiring $O(\log{n}\loglog{n})$ time and space (or even $O(\log n)$). We note that in expectation, both our LCAs require $O(\loglog{n})$ time and $O(\log{n})$ space.

\begin{definition}
We say that online algorithm $\mcA$ is \emph{crisp} if $\mcA$ requires time and space linear in its input length (where the input length is measured by the number of words) per query, and the output of $\mcA$ is $O(1)$ per query.
\end{definition}
Most of the algorithms that we wish to convert to LCAs using the techniques of this paper are indeed crisp; for example the greedy algorithm for maximal independent set requires computation time and space linear in the number of neighbors of each vertex, and the output per vertex is a single bit. Not all the algorithms we wish to handle are necessarily crisp; in the case of vertex coloring, the output of the greedy algorithm can be $\log{\Delta}$, where $\Delta$ is the maximal degree of the graph. Nevertheless, as to avoid a cumbersome statement of our results, we restrict ourselves to crisp algorithms; it is straightforward to extend our results to non-crisp algorithms (as we remark below).

\begin{theorem}\label{thm:main}
Let $G=(V,E)$ be a $d$-light graph, where $d>0$ is a constant.  Let $F$ be a neighborhood-dependent search problem on $V$. Assume $\mcA$ is a crisp neighborhood-dependent online algorithm that  correctly computes $F$ on any order of arrival of the vertices. Then
\begin{enumerate}
\item There is an $(O(\log{n}\loglog{n}),  O(\log^2{n}), 1/n)-$ LCA for $F$.
\item There is an $(O(\log^2{n}), O(\log{n}\loglog{n}), 1/n)-$ LCA for $F$.
\end{enumerate}
Furthermore, both LCAs require, in expectation, $O(\loglog{n})$ time and $O(\log{n})$ space.
\end{theorem}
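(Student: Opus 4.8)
Both parts use the same pseudorandom ordering and the same simulation of $\mcA$; they differ only in how the explored part of the graph is stored. Fix $L=\Theta(d)=O(1)$ large enough that the hypotheses of Lemma~\ref{lemma:main} and Proposition~\ref{prop:probability_of_a_path} hold, and put $k=6d\kappa=O(\log n)$ and $\eps=\frac{1}{2n^5}$ as there. Since $kL=O(\log n)$ and $1/\eps=\poly(n)$, Theorem~\ref{thm:almost} supplies an $\eps$-almost adaptive $k$-wise independent $h\colon V\to[L]$ whose seed has $O(\log n)$ bits (sampled once and stored), with evaluation cost $O(\log k)=O(\loglog{n})$ word operations and $O(\log n)$ bits of scratch per evaluation. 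Then $h$ fixes the ordering $\Pi_h$ (Definition~\ref{def:perm}), and running $\mcA$ on $G$ in the order $\Pi_h$ produces one global output $f\colon V\to O$ which by hypothesis solves $F$. On inquiry $v$, the LCA builds the relevant vicinity $\Im(v)$ (Definition~\ref{def:vicinity}) by exploration, simulates $\mcA$ on the subgraph it induces in $\Pi_h$-order, and returns the value the simulation gives to $v$. I would first dispatch correctness and consistency, which are essentially free: by the defining closure property of $\Im(v)$ every $u\in\Im(v)$ has all of its $\Pi_h$-earlier neighbors inside $\Im(v)$, so an induction along $\Pi_h$ shows the simulation agrees with the global $f$ on all of $\Im(v)$, in particular at $v$; hence every answer is a restriction of the single feasible solution $f$, which gives consistency for every fixing of the seed. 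One checks this requires no event to hold, so consistency survives even when $\Im(v)$ is large.

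Next, the resources on the typical event. Building $\Im(v)$ is a computationally efficient adaptive vertex-exposure process: keep the discovered vertices in a balanced search tree keyed by original ($O(1)$-word) name, with their $h$-values cached; repeatedly pop a discovered $u$, query its $G$-neighbors, and for each neighbor $w$ evaluate $h(w)$ and add $w$ when $\Pi_h(w)<\Pi_h(u)$. There are $O(|E_L|)$ graph queries and $O(|E_L|)$ evaluations of $h$, where $E_L$ is the set of edges incident to $\Im(v)$, and each search-tree operation costs $O(\log|\Im(v)|)=O(\log|E_L|)$. By Corollary~\ref{corr:relevant}, $\Pr[|E_L|>k]<1/n^4$, so with probability $>1-1/n$ we have $|E_L|\le k=O(\log n)$, hence $|\Im(v)|=O(\log n)$ and $\log|E_L|=O(\loglog{n})$; on this event the exploration plus the sort of $\Im(v)$ by rank take $O(|E_L|\log|E_L|+|E_L|\loglog{n})=O(\log n\loglog{n})$ time, and simulating the crisp $\mcA$ costs $O(|E_L|)$ time and $O(|E_L|)$ words. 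Storing $\Im(v)$, its $O(\log n)$ incident edges, and $\mcA$'s state under the original $O(\log n)$-bit names uses $O(\log^2 n)$ bits, which together with the seed gives part~(1): time $O(\log n\loglog{n})$, space $O(\log^2 n)$, failure probability $<1/n$.

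For part~(2) I would keep the same exploration and simulation but store the vertices of $\Im(v)$ under fresh names. Since $|\Im(v)|=O(\log n)$ with probability $>1-1/n$ (Lemma~\ref{lemma:main}), its vertices can be named within $[O(\log n)]$, i.e.\ by $O(\loglog{n})$ bits, and --- because $\mcA$ is neighborhood-dependent and crisp --- the set $\Im(v)$, its incident edges, and $\mcA$'s state can all be represented purely in these fresh names, for a total of $O(\log n\loglog{n})$ bits. The price is that we can no longer also afford the $\Theta(\log^2 n)$-bit table translating a fresh name back to the original $\Theta(\log n)$-bit name; so whenever the algorithm needs the original name --- equivalently, the $G$-neighborhood --- of a vertex it currently knows only by a fresh name, it recovers it by re-running the exploration from $v$ at $O(\log n)$ cost. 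Over the $O(\log n)$ such recoveries incurred while assembling $\Im(v)$ and driving the simulation, a careful accounting keeps the time at $O(\log^2 n)$ (a cost of the $O(|E_L|^2)$ type) while the resident memory stays $O(\log n\loglog{n})$ bits; this is part~(2).

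Finally the expectations. We never abort, so for either part the running time is bounded unconditionally by $O(|E_L|)$ evaluations of $h$ at $O(\loglog{n})$ each plus $O(|E_L|^2)$ bookkeeping (search-tree operations, sorting, and in part~(2) the re-explorations), and the resident memory by $O(|E_L|)$ words in part~(1) or $O(|E_L|\loglog{n})$ bits in part~(2), plus the $O(\log n)$-bit seed. By Lemma~\ref{lemma:expected} and Lemma~\ref{lemma:expected_squared}, $\expect[|E_L|]=O(1)$ and $\expect[|E_L|^2]=O(1)$; hence the expected time of both LCAs is $O(\loglog{n})$ and, since $\Im(v)$ has $O(1)$ vertices in expectation, the expected space is dominated by the seed, i.e.\ $O(\log n)$. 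The routine part of the proof is this assembly of the earlier lemmas together with standard data-structuring; the real obstacle is part~(2) --- organizing the $O(\loglog{n})$-bit naming scheme, and in particular the recomputation that replaces the translation table, so that no object of size $\omega(\log n\loglog{n})$ bits is ever resident while the running time still comes out $O(\log^2 n)$. Everything else follows.
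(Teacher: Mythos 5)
Your part (1), the correctness/consistency argument via induction along $\Pi_h$, and the expectation claims all track the paper's proof closely: the paper explores by DFS into adjacency lists where you use a balanced search tree keyed by original name, but both give $O(|E_L|\log|E_L|)$ bookkeeping plus $O(|E_L|\loglog n)$ for the $h$-evaluations, hence the same $O(\log n\loglog n)$ time and $O(\log^2 n)$ space, and both invoke Corollary~\ref{corr:relevant} for the $1-1/n$ probability and Lemmas~\ref{lemma:expected} and \ref{lemma:expected_squared} for the expectations.

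For part (2), however, you have correctly identified the obstacle but not closed it. You propose that when the original $\Theta(\log n)$-bit name of a fresh-named vertex is needed you ``re-run the exploration from $v$,'' but this is left unspecified in precisely the respect that matters: having discarded the original names, the re-exploration has no way to recognize when it has reached the target vertex, and a na\"ive re-exploration in rank order again costs $O(|E_L|\loglog n)$ per recovery (each step pays for an $h$-evaluation), not the $O(\log n)$ you budget. The paper's Construction~\ref{meth2} supplies the missing mechanism. It grows the relevant vicinity one vertex at a time in rank order, maintaining two adjacency-list structures: $D_1(v)$ holds the vertices already placed in $S$ under their $O(\loglog n)$-bit fresh names, and $D_2(v)$ holds boundary vertices stored only by their $O(1)$-bit $h$-level (so $h$ is evaluated once per vertex and never recomputed). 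Crucially, every stored edge is annotated with its position among the edges leaving its tail in $G$ and its direction of discovery; these $O(\loglog n)$-bit annotations make the stored subgraph a DAG in which, from any fresh-named $u$, one can DFS back to $v$, read off the path as a sequence of relative edge positions, and then replay that path forward in $G$ starting from $v$ (whose original name is known). That is Observation~\ref{obs:access2}, giving $O(t_v)$ per recovery, and it is this recovery mechanism combined with the rank-order greedy construction that makes $O((t_e)^2)$ time and $O(\log n\loglog n)$ space simultaneously achievable. Supplying that data structure and its access argument is the content your ``careful accounting'' must contain.
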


\begin{proof}

We show two methods of constructing an LCA from the  online algorithm $\mcA$. In both, given a query $v \in V$, we use adjacency lists to store the containing vicinity, $\Psi(v)$.  We use a single bit to indicate for each vertex, whether it is in the relevant vicinity, $\Im(v)$. This means that for each vertex in $\Im(v)$, we keep a list of all of its neighbors, but for vertices that are in $\Psi(v)\setminus \Im(v)$, we don't need to keep such a list. We denote the number of vertices in the relevant vicinity, $|\Im(v)|$, by $t_v$, and the total number of edges stored by $t_e$. Note that $t_e \geq |\Psi(v)|$.
This adjacency list representation of $\Psi(v)$ is generated slightly differently in the two constructions. In both cases we label this data structure by $D(v)$. For clarity, we abuse the notation, and use the same name, $u$, for $u \in V$, and for  the vertex which represents $u$ in $D(v)$.
\begin{meth}\label{meth1}
The first time the LCA is invoked, it chooses a random function $h$ from a family of adaptive $k$-wise $\frac{1}{2n^5}$-dependent hash functions as in  Theorem~\ref{thm:almost}, Lemma~\ref{lemma:main} and Proposition~\ref{prop:probability_of_a_path}. The LCA receives as an inquiry a vertex $v$, and computes $h(v)$. It then discovers the relevant vicinity using DFS. For each vertex $u$ that it encounters, it relabels the vertex $(h(u),u)$. The LCA simulates $\mcA$ on the vertices arriving in the order induced by the new labels.
\end{meth}

The size of the new label for any vertex $u$ is $|(h(u),u)| = O(\log{n})$. In addition, we need to store $\mcA(u)$ for every vertex $u \in \Im(v)$ (which is $O(1)$ because we assume $\mcA$ is crisp). Overall, because $|(h(u),u)| = \log{n}$, the space required for the LCA is upper bounded by $O((t_e + t_v)\log{n} + t_v)$.  From Corollary~\ref{corr:relevant}, $t_v, t_e = O(\log{n})$. In expectation, by Lemma~\ref{lemma:expected_squared}, $\expect[t_e + t_v] = O(1)$. This gives us the required space bounds.
 To analyze the running time, we make the following observation.
 \begin{obs}\label{obs:access1}
 In Construction~\ref{meth1}, given $u \in D(v)$, we can access $u \in V$ in $O(1)$.
 \end{obs}
  We look at each stage of the construction separately:
\begin{enumerate}[noitemsep,nolistsep]
\item Constructing $D(v)$ is done by DFS, which takes time $O(t_v + t_e)$, as well as the time it takes to generate $|\Im(v)|$ labels, which requires invoking $h$ at most $t_e$ times, and, by Theorem~\ref{thm:almost},  this requires $O(\loglog{n})$ time per label.
\item Sorting the labels takes $O(t_v \log t_v)$.
\item Simulating $\mcA$ on $\Psi(v)$ now takes $O(t_e)$ (since $\mcA$ is crisp).
\end{enumerate}
Given the high probability upper bounds and expected values of $t_v$ and $t_e$, the first part of the theorem follows.
(Note that if $\mcA$ is not crisp in the sense that computing $F(v)$ is more than linear in the number of neighbors of $v$, this must be taken into account in the running time.)

In the first construction method, we give each vertex a label of length $O(\log{n})$. This seems wasteful, considering we know that the expected size of the relevant vicinity is $O(1)$, and  that its size is  $O(\log{n})$ w.h.p. We therefore give a more space-efficient method of constructing the induced subgraph.

\begin{meth}\label{meth2}
As in the first method, the first time the LCA is invoked, it chooses a random function $h$ from a family of adaptive $k$-wise $\frac{1}{2n^5}$-dependent hash functions as in  Theorem~\ref{thm:almost}, Lemma~\ref{lemma:main} and Proposition~\ref{prop:probability_of_a_path}. Again, we would like to construct the induced subgraph of the relevant vicinity, but to save memory we will not hold the original labels of the vertices (which require $\log n$ bits to represent), but rather new labels that require at most $\loglog n$ bits to describe (logarithmic in the size of the relevant vicinity). As before, the LCA receives as an inquiry a vertex $v$, and computes $h(v)$. We still use $(h(v),v)$ to determine the ordering, however we do not commit this ranking to memory. We  initialize $S = \{v\}$, and give $v$ the label $1$. In each round $i$, we look at $N(S)$, and choose the vertex $u$ with the highest rank $(h(u),u)$, out of all the vertices in $N(S)$ which have a lower rank than their neighbor in $S$. We then add $u$ to $S$, and give it the label $i+1$.  When we have discovered the entire relevant vicinity, we simulate $\mcA$ on the vertices in the reverse order of the new labels.
\end{meth}
\paragraph{Note on the required data structure:} To efficiently build $S$, we need to use a slightly different data structure used for storing the adjacency lists than in Construction~\ref{meth1}; in fact, we have two adjacency list data structures. The first, $D_1(v)$, contains only the vertices in the relevant vicinity (not vertices in $\Psi(v) \setminus \Im(v)$). The second, $D_2(v)$, holds the neighbors of the vertices of $S$ which have not yet been added to $S$.  In $D_1(v)$, each vertex is represented by its new label. In $D_2(v)$, each vertex is represented only by its level. We need $D_2$ to avoid recalculating $h(u)$ more than once for each vertex $u$. In both $D_1(v)$ and $D_2(v)$, for each edge $(i,j)$ which represents the edge $(v_i, v_j)$, we also store the position of this edge among the edges that leave $v_i$, and its direction of discovery.

\paragraph{Correctness:} Note that $v$ is the vertex of highest rank in its relevant vicinity, and indeed it holds by induction that at step $i$, the subgraph we expose will contain vertices $v_1=v,v_2,\ldots v_i$ that have the highest ranking in the relevant vicinity of $v$ (and such that $v_i$ has the $i$th highest rank). This guarantees the correctness of $\mcA$ - the reverse order of the labels is exactly the correct ranking of the vertices of the relevant vicinity.

\paragraph{Complexity:}  The size of the new label for any vertex $u$ is  $O(\log{t_e})$. In addition, we need to store, for each edge, its position relative to the edges, the edge's direction of discovery, and $\mcA(u)$ for every vertex $u \in \Im(v)$. Because the graph is $d$-light, we know that the degree of each edge is $O(\log{n})$ w.h.p., and so keeping the relative position of each edge will require $O(\loglog{n})$ bits w.h.p. Overall the space required for $D_1(v)$ is upper bound by $O((t_e + t_v)(\log t_e) + t_v + t_e\loglog{n})$. The space required for $D_2(v)$ is $O(t_e \cdot |L|) = O(t_e)$. From Corollary~\ref{corr:relevant}, and Lemma~\ref{lemma:expected_squared}, we have the required space bounds. The expected space bound is due to the length of the seed, $O(\log{n})$.

 \begin{obs}\label{obs:access2}
 In Construction~\ref{meth2}, given $u \in D_1(v)\cup D_2(v)$, we can access $u \in V$ in $O(t_v)$.
 \end{obs}
 \begin{proof}
 Given $u \in D_1(v)$ (or $D_2(v)$), we find $v$ by DFS from $u$. As the edges are directed, and $D_1(v)$ and $D_2(v)$ are acyclic, this takes $O(t_v)$. Note that the space required for this DFS may be as much as $t_v\loglog{n}$, but we use that amount of space regardless. We can store the path $v \leadsto u$ using the relative locations of the edges, and follow this path on $G$ to find $u$.
 \end{proof}
  Again, we look at each stage of the construction separately:
\begin{enumerate}[noitemsep,nolistsep]
\item Before we add a vertex to $S$, we need find the vertex $u$ with the lowest $(h(u),u)$ among all vertices in $\Psi(v)\setminus \Im(v)$. This is done by going over all of these vertices to find the minimum, using a DFS on $G$ and the subgraph concurrently, which takes $O(t_e)$. \label{1}
\item Once we have chosen which vertex $u$ to add to $S$, we update  $D_1(v)$ and $D_2(v)$ to include it. When we look at $u$'s neighbors, though, we don't know whether they are already in $D_1(v)$ or $D_2(v)$, as we don't have  pointers to the original vertices in $G$. From Observation~\ref{obs:access2}, though, finding this out takes $O(t_v)$ per neighbor, and updating $D_1(v)$ and $D_2(v)$ takes a further $O(1)$ per neighbor. \label{2}
\item Because of $D_2(v)$, we only need to generate $h$ once for each vertex in $\Psi(v)$. This takes $O(t_e $ $\loglog{n})$. \label{3}
\item Reversing the order of the labels takes $O(t_v)$.\label{4}
\item Simulating $\mcA$ on $\Psi(v)$ takes $O(t_e)$.\label{5}
\end{enumerate}
\ref{3}, \ref{4} and \ref{5} require $O(t_e \loglog{n})$ time in total. \ref{1} accounts for $O((t_e)^2)$, and \ref{2}  for $O(t_e t_v)$ overall.  Lemma~\ref{lemma:expected_squared} gives the required expected time bound.
 Note that we have not discounted the possibility that $\mcA$ requires the original labels (or ``names'') of the vertices, in order to compute $F(v)$. When we encounter a vertex, we can always give $\mcA$ the name of the vertex by exploring the original representation of the graph (the additional time needed is bounded by the time already invested).

The worst case (w.h.p.) running time and space of the LCA are  $O(\log^2{n})$ and $O(\log{n}\loglog{n})$ respectively. (Note that if $\mcA$ is not crisp in the sense that  $|F(v)|$ is not a constant, this must be taken into account in the space bounds.)
\end{proof}
Our results immediately extend to the case that $d=O(\loglog{n})$:
\begin{theorem}\label{thm:loglog}
Let $G=(V,E)$ be a $d$-light graph, where $d=O(\loglog{n})$.  Let $F$ be a neighborhood-dependent search problem on $V$. Assume $A$ is a crisp neighborhood-dependent online algorithm that  correctly computes $F$ on any order of arrival of the vertices. Then there is an $O(\polylog{n}),  $ $ O(\polylog{n}),$ $ 1/n)-$ LCA for $F$.
\end{theorem}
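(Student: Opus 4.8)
The plan is to re-run the entire analysis of Theorem~\ref{thm:main}, but carefully tracking the dependence on $d$ instead of treating it as a constant, and then observe that when $d = O(\loglog{n})$ every quantity that was $O(1)$ or $O(\log n)$ degrades only to $\polylog{n}$. First I would revisit the chain of parameters: in Lemma~\ref{lemma:main} we set $L = \Theta(d)$ (so $L = O(\loglog{n})$), $\kappa = 2^L c\log n$, and $k = 6d\kappa$. The only step that is sensitive here is $2^L$: with $L = O(\loglog{n})$ we get $2^L = \poly(\loglog n) \cdot$ — more precisely, if $L = c'\loglog n$ then $2^L = (\log n)^{c'} = \polylog n$. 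Hence $\kappa = \polylog n$ and $k = 6d\kappa = \polylog n$ as well. One must check that Theorem~\ref{thm:almost} still applies: it requires $k\cdot L = O(\log n)$, which is now $\polylog n \cdot \loglog n$, too large. So instead I would invoke Theorem~\ref{thm:simple} (exact adaptive $k$-wise independence via polynomials), whose seed length is $k\log n = \polylog n$ and whose evaluation time is $O(k)$ word operations — still $\polylog n$. This keeps the hash-function machinery valid; the error terms $1/n^5$, $1/n^4$ in Propositions~\ref{prop:1},~\ref{prop:neighbors} and Lemma~\ref{lemma:main} are unaffected since they came from Chernoff bounds and do not involve $d$ except through the now-$\polylog n$ value of $k$ (and a union bound over $n$ vertices still gives $1 - 1/\poly(n)$).

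Next I would redo the relevant-vicinity size bound: Corollary~\ref{corr:relevant} gives that the number of edges touching the relevant vicinity is at most $k = \polylog n$ with probability $1 - 1/n^4$, so $t_v, t_e = \polylog n$ w.h.p. For the expected-size arguments (Proposition~\ref{prop:expect_simple_paths}, Proposition~\ref{prop:probability_of_a_path}, Lemma~\ref{lemma:expected_squared}) the key quantity is the geometric-type sum $\sum_{t} d^{3t + c - ct}$: here $c$ is a parameter we are free to choose (larger than $3$, via the choice of $L = O(d)$ in Corollary~\ref{corr:probability_of_a_path}), so the sum is still $O(1)$ — actually $O(\poly(d)) = O(\polylog n)$ suffices, we don't even need it constant. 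So $\expect[t_e^2] = O(\polylog n)$, which is all we need. Then I would plug these $\polylog n$ bounds for $t_v, t_e, k, L$ into the two constructions (Construction~\ref{meth1} and Construction~\ref{meth2}) from the proof of Theorem~\ref{thm:main}. In Construction~\ref{meth1}: space is $O((t_e + t_v)\log n + t_v) = \polylog n$, plus the seed of length $k\log n = \polylog n$; time is $O(t_v + t_e)$ for the DFS, plus $t_e$ hash evaluations at $O(k)$ each, plus $O(t_v\log t_v)$ sorting, plus $O(t_e)$ simulation — all $\polylog n$. Construction~\ref{meth2} similarly gives $\polylog n$ in both; the dominant terms $O(t_e^2)$ and $O(t_e t_v)$ are $\polylog n$ since $t_e, t_v = \polylog n$ w.h.p.

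The main obstacle — and the one place where a little care is genuinely needed — is the interaction between the exponential-in-$L$ blow-up of $\kappa = 2^L c\log n$ and the almost-$k$-wise-independence construction of Theorem~\ref{thm:almost}, which only tolerates $kL = O(\log n)$. Once $d$ (hence $L$) is superconstant, that theorem is simply unavailable, so the argument must fall back on the exact construction of Theorem~\ref{thm:simple}, and one must verify that its $O(k) = \polylog n$ evaluation time (rather than the $O(\log k)$ of Theorem~\ref{thm:almost}) is still acceptable — it is, because every target bound in Theorem~\ref{thm:loglog} is merely $\polylog n$. With that substitution in place, the proof is a mechanical re-tracing of the proof of Theorem~\ref{thm:main} with every ``$O(1)$'' replaced by ``$O(\polylog n)$'' and every ``$O(\log n)$'' replaced by ``$O(\polylog n)$'', and the high-probability guarantee $1 - 1/\poly(n)$ is preserved verbatim. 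Hence both constructions yield an $(O(\polylog n), O(\polylog n), 1/n)$-LCA for $F$, as claimed.
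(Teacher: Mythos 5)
Your proposal is correct and follows the only natural route (the paper itself gives no explicit argument, stating only that the results ``immediately extend''): re-trace the proof of Theorem~\ref{thm:main} while tracking $d$, and observe that $L=O(d)=O(\loglog n)$ forces $\kappa=2^L c\log n$ and $k=6d\kappa$ up to $\polylog n$ rather than $O(\log n)$, which degrades every bound from $O(\log n)$/$O(\log n\loglog n)$ to $\polylog n$ but no further. You correctly identify the one genuinely non-immediate point, namely that Theorem~\ref{thm:almost} becomes unusable once $kL=\omega(\log n)$, and that one must instead fall back on the exact construction of Theorem~\ref{thm:simple} (with range $[n]$ projected down to $[L]$), accepting the longer $k\log n=\polylog n$ seed and the $O(k)=\polylog n$ evaluation time --- both still within budget. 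Your extra check that the expected-size sums $\sum_t d^{3t+c-ct}$ remain bounded is sound but not actually needed here, since Theorem~\ref{thm:loglog} (unlike Theorem~\ref{thm:main}) makes no claim about expected complexity; only the w.h.p.\ bounds from Lemma~\ref{lemma:main} and Corollary~\ref{corr:relevant} are required.
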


In Section~\ref{section:upper} we show that the techniques of this paper do not hold for graphs where the expected degree is $\omega(\loglog{n})$, and so our results are, in this sense, tight.
\ignore{
\begin{meth}
As in the first method, the first time the LCA is invoked, it chooses a random function $h$ from a family of adaptive $k$-wise $\frac{1}{2n^5}$-dependent hash functions as in  Theorem~\ref{thm:almost} and Lemma~\ref{lemma:main}. Again, we would like to construct the induced subgraph of the relevant vicinity, but to save memory we will not hold in this graph the original labels of the vertices (which require $\log n$ bits to represent) but rather new labels that require at most $\loglog n$ bits to describe (logarithmic in the size of the of the relevant vicinity). We give $v$ the label $1$. Note that $v$ is the vertex of highest rank in its relevant vicinity, and indeed it will hold by induction that at step $i$, the subgraph we expose will contain vertices $v_1=v,v_2,\ldots v_i$ that have the highest ranking in the relevant vicinity of $v$ (and such that $v_i$ has the $i$th highest rank). In addition, we will hold all of the edges between these vertices. For edge $(i,j)$ which represents the edge $(v_i,v_j)$, we also store the position of this edge among the edges that leave $v_i$ (this allowing for easy traversal of this subgraph in the original graph). After step $i$ is complete, we explore the graph to discover the vertex of highest rank among the vertices we already discovered (of course, we only take vertices that are of lower rank than the vertices already discovered, as other vertices are not in the relevant vicinity). We can do so with one exploration of the subgraph we discovered so far in the original graph representation (we can hold the current maximal-rank neighbor at each point and and all the edges to it we discovered so far, when a new maximal value comes along we discard the old candidate). The new vertex will receive label $i+1$ and we add it to the representation along with all its edges to vertices already discovered. When we have discovered the entire relevant vicinity, to simulate $\mcA$, we give $\mcA$ the vertices in the reverse order of the new labels.
\end{meth}
}

\section{Tightness with Respect to $d$-light Graphs}
\label{section:upper}
Our results hold for $d$-light graphs where $d = O(\log\log{n})$. Although we do not discount the possibility that LCAs exist for higher degree graphs, we show that at least, using the technique of simulating an online algorithm on a random ordering of the vertices, we cannot do better.
We do this by showing that the expected relevant vicinity of the root of a complete binary tree of a $d$-regular graph is $\Omega (2^{d/2})$, and hence for $d=\omega(\log\log{n})$, the expected size will be super-polylogarithmic.
\begin{lemma}
Let $T$ be a complete $d$-regular binary tree rooted at $v$, and let $\Pi$ be a uniformly random permutation on the vertices. The expected size of the relevant vicinity of $v$ relative to $\Pi$ is at least $2^{d/2}$.
\end{lemma}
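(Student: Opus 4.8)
The plan is to root $T$ at $v$ and translate membership in the relevant vicinity into a purely combinatorial statement about root-to-vertex paths. In a tree there is a unique path $v = u_0, u_1, \dots, u_t = u$ from $v$ to any vertex $u$, and I would first show that $u \in \Im(v)$ if and only if this path is $\Pi$-\emph{decreasing}, i.e.\ $\Pi(u_0) > \Pi(u_1) > \dots > \Pi(u_t)$. The ``if'' direction is immediate from the definition of $\Im(v)$: walk down the decreasing path, adding $u_{i+1}$ as a neighbour of the already-included $u_i$ of smaller rank. For ``only if'', note that whenever a vertex enters $\Im(v)$ it does so through a neighbour of strictly larger rank that is already in $\Im(v)$; in a tree the only neighbour of $u$ that leads back toward $v$ without passing through $u$ is its parent on the $v$--$u$ path (a child $c$ of $u$ cannot serve, since the $v$--$c$ path runs through $u$ and would force $\Pi(u) > \Pi(c)$), so unwinding the additions shows the whole path is decreasing.

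Next I would apply linearity of expectation together with the observation that, for a uniformly random permutation $\Pi$, the $t+1$ values $\Pi(u_0), \dots, \Pi(u_t)$ on any fixed $v$--$u$ path of length $t$ form a uniformly random ordering, so that path is decreasing with probability exactly $1/(t+1)!$. Writing $N_t$ for the number of vertices at distance $t$ from $v$, this gives
\[
\expect\bigl[\,|\Im(v)|\,\bigr] \;=\; \sum_{t \ge 0} \frac{N_t}{(t+1)!}.
\]
In $T$ every internal node has $\Theta(d)$ children, so $N_t \ge (d-1)^t$ for all $t$ up to the depth of $T$, and (taking $T$ to be the full tree on its vertex set) that depth is $\Theta(\log_d n)$, which is at least $d$ throughout the parameter range of interest.

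The lower bound then follows by keeping only the terms near the peak of the series. Using $N_t \ge (d-1)^t$ and the change of index $s = t+1$,
\[
\expect\bigl[\,|\Im(v)|\,\bigr] \;\ge\; \sum_{t=0}^{d-1} \frac{(d-1)^t}{(t+1)!} \;=\; \frac{1}{d-1}\Bigl(\sum_{s=0}^{d}\frac{(d-1)^s}{s!} - 1\Bigr) \;\ge\; \frac{1}{d-1}\Bigl(\frac{(d-1)^{d-1}}{(d-1)!} - 1\Bigr),
\]
and Stirling gives $(d-1)^{d-1}/(d-1)! = \Omega\!\bigl(e^{d-1}/\sqrt{d}\bigr)$, so the whole expression is $\Omega(e^{d}/d^{3/2})$, which exceeds $2^{d/2}$ once $d$ is a sufficiently large constant (the ratio $e^{d-1}/2^{d/2} = (e/\sqrt2)^{d}/e$ grows exponentially, swamping the polynomial factor). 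If one prefers a tree with only $d^{\Theta(d)}$ rather than exponentially many vertices, the same estimate applied at depth $m = \lfloor d/2\rfloor$ gives $\expect[|\Im(v)|] \ge (d-1)^{m}/(m+1)!$, which is still at least $2^{d/2}$ by Stirling and the fact that $(d-1)/(m+1)\to 2$. The only point requiring real care is this Stirling estimate: one must keep enough of the series (equivalently, bound the dominant factorial sharply enough) to clear $2^{d/2}$ exactly rather than $2^{d/2}/\poly(d)$; everything else is routine.

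Finally I would record the consequence for tightness. Choosing any $d$ with $\omega(\log\log n) \le d \le c\,\log n/\log\log n$ for a small constant $c$ (so $T$ can be realized on $\Theta(n)$ vertices with depth $\ge d$) produces a $d$-light graph on which the expected relevant vicinity --- hence the expected time and space of any LCA obtained by simulating a neighbourhood-dependent online algorithm along a random vertex order --- is $2^{d/2} = 2^{\omega(\log\log n)} = (\log n)^{\omega(1)}$, i.e.\ super-polylogarithmic. This complements Theorems~\ref{thm:main} and~\ref{thm:loglog}, which give polylogarithmic LCAs precisely when $d = O(\log\log n)$, so the dependence on $d$ is tight for this family of techniques.
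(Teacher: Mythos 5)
Your proposal is correct and takes essentially the same route as the paper: characterize membership in $\Im(v)$ by whether the unique root-to-vertex path is $\Pi$-decreasing, apply linearity of expectation level by level, and pick the level near $d/2$ where the ratio $(\text{number of vertices})/(\text{path length})!$ peaks. You fill in the details more carefully than the paper does (the decreasing-path lemma, the precise $1/(t+1)!$ probability, and the Stirling estimate), but the decomposition and the key counting argument are the same.
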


\begin{proof}
Let $X_{\ell}$ be a random variable for the number of vertices on level $\ell$ of the tree that are in the relevant vicinity.
$$\expect[X_{\ell}] = \frac{d^{\ell}}{\ell!} \geq \left( \frac{d}{\ell}\right) ^{\ell}$$
Taking $\ell=d/2$ gives that $\expect[X_{\ell}] \geq 2^{d/2}$.
\end{proof}

\bibliographystyle{plain}
\bibliography{Vardi_PhD_Bibliography}

\appendix

\section{Stochastic Dominance of Binomial Distributions}
\label{app:lrst}
We prove the lemmas stated in Section \ref{section:stoch}.

\begin{lemma} \label{lem:stoch}
Let $Y_1, Y_2$ be independent discrete random variables. Let  $X_1, X_2$ be (possibly) dependent discrete random variables, such that $X_1 \sd Y_1$, and conditioned on any realization of $X_1$, it holds that $X_2 \sd Y_2$, then
$$X_1 + X_2 \sd Y_1 + Y_2.$$
\end{lemma}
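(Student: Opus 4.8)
The plan is to argue directly with tail probabilities and reduce to the standard equivalent characterization of first-order stochastic dominance: for discrete (indeed arbitrary) real random variables $U,V$ one has $U \sd V$ if and only if $\expect[\phi(U)] \le \expect[\phi(V)]$ for every bounded non-decreasing $\phi:\R\to\R$. One direction is just the definition applied to the indicator $\phi=\mathbf 1[\,\cdot\,>x]$; the other direction follows by writing a bounded non-decreasing $\phi\ge 0$ (adding a constant is harmless) as $\phi(u)=\int_0^\infty \mathbf 1[\phi(u)>y]\,dy$ and noting that each level set $\{u:\phi(u)>y\}$ is an up-set of the form $(c_y,\infty)$ or $[c_y,\infty)$, so $\expect[\phi(U)]=\int_0^\infty \Pr[\phi(U)>y]\,dy\le\int_0^\infty \Pr[\phi(V)>y]\,dy=\expect[\phi(V)]$; in the discrete case this is just Abel summation over the support.

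First I would fix an arbitrary real $t$ and aim to show $\Pr[X_1+X_2>t]\le\Pr[Y_1+Y_2>t]$. Define
$$g(a)=\Pr[Y_2>t-a],\qquad a\in\R.$$
Since $a\mapsto t-a$ is decreasing and $s\mapsto\Pr[Y_2>s]$ is non-increasing, $g$ is non-decreasing, and it is bounded in $[0,1]$. For each value $a$ in the (countable) support of $X_1$, using the hypothesis that conditioned on $X_1=a$ we have $X_2\sd Y_2$ (and that $Y_2$ does not depend on $X_1$),
$$\Pr[X_1+X_2>t \mid X_1=a]=\Pr[X_2>t-a \mid X_1=a]\le\Pr[Y_2>t-a]=g(a).$$
Summing over the support of $X_1$ via the law of total probability gives $\Pr[X_1+X_2>t]\le\expect[g(X_1)]$.

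Next I would apply the monotone-function characterization to $X_1\sd Y_1$ with the bounded non-decreasing test function $g$, obtaining $\expect[g(X_1)]\le\expect[g(Y_1)]$. Finally, since $Y_1$ and $Y_2$ are independent, conditioning on $Y_1$ yields
$$\expect[g(Y_1)]=\expect\big[\Pr[Y_2>t-Y_1 \mid Y_1]\big]=\Pr[Y_1+Y_2>t].$$
Chaining these gives $\Pr[X_1+X_2>t]\le\Pr[Y_1+Y_2>t]$, and as $t$ was arbitrary, $X_1+X_2\sd Y_1+Y_2$.

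The only genuine subtlety — the ``hard part'' — is the monotone-function characterization of stochastic dominance and checking that it is legitimately invoked with a bounded monotone test function; once that is in hand the rest is bookkeeping with the law of total probability and the independence of $Y_1,Y_2$. A coupling-based alternative (realize $X_1\le Y_1$ on a common space, then couple $X_2$ and $Y_2$ conditionally on $X_1$) also works, but setting it up cleanly is more delicate because $X_2$ may depend on $X_1$, so I would prefer the tail-probability argument above.
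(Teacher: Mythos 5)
Your proof is correct, and it shares the same backbone as the paper's: condition on $X_1$, use the conditional dominance hypothesis to push $X_2$ up to $Y_2$, and arrive at the intermediate quantity $\sum_a \Pr[Y_2 > t-a]\,\Pr[X_1 = a] = \expect[g(X_1)]$. Where you diverge is in how that quantity is bounded. The paper observes that it equals $\Pr[X_1 + Y_2 > t]$ (treating $X_1$ and $Y_2$ as independent, i.e., working on a product space) and then applies a second direct conditioning, this time on $Y_2$, together with $X_1 \sd Y_1$ to conclude $\Pr[X_1 + Y_2 > t] \le \Pr[Y_1 + Y_2 > t]$. You instead appeal to the equivalence $U \sd V \iff \expect[\phi(U)] \le \expect[\phi(V)]$ for all bounded nondecreasing $\phi$, applied to the test function $g$. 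That characterization is a genuine (if standard) extra lemma, with its own proof obligation (which you correctly sketch via layer-cake or Abel summation); the paper's route avoids it entirely, staying at the level of tail probabilities throughout. What your route buys is that the final step is one line once the characterization is in hand, and it generalizes verbatim beyond discrete variables; what the paper's route buys is self-containment and elementariness, relying only on the law of total probability and the independence assumption. Both are valid, and the structural content is essentially the same.
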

\begin{proof}

For every realization $x$ of $X_1$, define a different random variable for $X_2$. That is $X_2(x) = X_2|X_1=x$. Note $\forall x, X_2(x) \sd Y_2$. Further note that $X_1$ and $Y_2$ are independent. From the law of total probability,

\begin{align*}
Pr[X_1 + X_2 >k] & = \dsum_{x}Pr[x+X_2(x) >k] \cdot Pr[X_1=x]\\
&= \dsum_{x}Pr[X_2(x) > k-x]\cdot Pr [X_1=x]\\
&\leq \dsum_{x} Pr[Y_2 > k-x]\cdot Pr [X_1=x]\\
&= \dsum_{x}Pr[x + Y_2 >k]\cdot Pr [X_1=x]\\
&= Pr[X_1 + Y_2 >k]\\
&\leq Pr[Y_1 + Y_2 >k].
\end{align*} 
\end{proof}

This implies
{
\renewcommand{\thetheorem}{\ref{lemma:stoch2}}
\begin{lemma}
Let $\{Y_1, Y_2, \ldots, Y_N\}$  be a sequence of independent random variables. Let  $\{X_1, X_2,\ldots, X_N\}$ be a series of (possibly) dependent random variables. If it holds that $X_1 \sd Y_1$, and for any $1\leq i \leq N$, conditioned on any realization of $X_1,\ldots X_{i-1}$, it holds that $X_i \sd Y_i$, then
 $$\dsum_{j=1}^{N} X_j \sd \dsum_{j=1}^{N}Y_j.$$
\end{lemma}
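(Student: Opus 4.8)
The plan is to bootstrap Lemma~\ref{lem:stoch} (the two-variable case proved in the appendix) into the $N$-variable statement by induction on $N$. The base case $N=1$ is exactly the hypothesis $X_1 \sd Y_1$. For the inductive step, I would suppose the claim holds for sequences of length $N-1$ and try to combine the partial sum $\sum_{j=1}^{N-1} X_j$ with the single variable $X_N$, reducing to Lemma~\ref{lem:stoch} with ``$X_1$'' there replaced by $\sum_{j=1}^{N-1} X_j$ and ``$X_2$'' replaced by $X_N$ (and similarly for the $Y$'s, which are independent so that $\sum_{j=1}^{N-1} Y_j$ and $Y_N$ are independent as required).

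The key steps, in order, would be: (1) set $U = \sum_{j=1}^{N-1} X_j$ and $V = \sum_{j=1}^{N-1} Y_j$; by the inductive hypothesis $U \sd V$. (2) Verify the hypothesis of Lemma~\ref{lem:stoch} for the pair $(U, X_N)$ versus $(V, Y_N)$: we need that conditioned on any realization of $U$, it holds that $X_N \sd Y_N$. (3) Apply Lemma~\ref{lem:stoch} to conclude $U + X_N \sd V + Y_N$, i.e. $\sum_{j=1}^{N} X_j \sd \sum_{j=1}^{N} Y_j$, completing the induction.

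The main obstacle is step (2): the hypothesis of the lemma only gives us that $X_N \sd Y_N$ conditioned on a realization of the \emph{full tuple} $(X_1,\dots,X_{N-1})$, whereas Lemma~\ref{lem:stoch} as literally stated wants the domination conditioned on a realization of the single aggregated variable $U = \sum_{j=1}^{N-1} X_j$. To bridge this, I would observe that conditioning on $\{U = u\}$ is the same as averaging over all realizations $(x_1,\dots,x_{N-1})$ with $\sum x_j = u$ (weighted by their conditional probabilities), and for each such realization $X_N \sd Y_N$ by hypothesis; since a mixture of distributions each stochastically dominated by $Y_N$ is itself dominated by $Y_N$ (this is immediate from the definition via $\Pr[X_N > x \mid U=u] = \sum \Pr[X_N > x \mid X_1=x_1,\dots] \Pr[\cdots \mid U=u] \le \Pr[Y_N > x]$), the required conditional domination follows. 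One should also note that $X_N$ and $Y_N$ only enter as a fresh pair at each step, so independence of the $Y_j$'s is exactly what makes $V$ and $Y_N$ independent, matching the setup of Lemma~\ref{lem:stoch}. Alternatively, one could reprove Lemma~\ref{lem:stoch} in the slightly more general form where the conditioning is on an arbitrary random vector rather than a single variable, which makes the induction completely seamless; I would mention this as the cleaner route if space permits.
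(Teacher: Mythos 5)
Your proposal follows the same route as the paper: induction on $N$, applying the two-variable Lemma~\ref{lem:stoch} at each step with the partial sums $\sum_{j<N} X_j$ and $\sum_{j<N} Y_j$ playing the roles of $X_1$ and $Y_1$. You are in fact slightly more careful than the paper's own terse proof, which invokes Lemma~\ref{lem:stoch} without comment: you explicitly notice that Lemma~\ref{lem:stoch} asks for domination conditioned on the aggregate $U=\sum_{j<N}X_j$ while the hypothesis supplies it conditioned on the full tuple $(X_1,\dots,X_{N-1})$, and you correctly close this gap by observing that a mixture of distributions each dominated by $Y_N$ is itself dominated by $Y_N$.
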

\addtocounter{theorem}{-1}
}

\begin{proof}
We prove by induction on $i$ that $\sum_{j=1}^{i} X_j \sd \sum_{j=1}^{i} Y_j$. For $i=1$ we have that $X_1 \sd Y_1$. Assume that $\sum_{j=1}^{i} X_j \sd \sum_{j=1}^{i} Y_j$, we prove that $\sum_{j=1}^{i+1} X_j \sd \sum_{j=1}^{i+1} Y_j$. Let $Z_1 = \sum_{j=1}^{i} X_j$, $Z_2 = X_{i+1}$, $W_1 = \sum_{j=1}^{i} Y_j$, and $W_2 = Y_{i+1}$. Applying Lemma~\ref{lem:stoch} with $Z_1, Z_2, W_1, W_2$ we get that $Z_1 + Z_2 \sd W_1 + W_2$ as required. 
\end{proof}

For the second lemma we prove, we require the following well-known inequalities:
\begin{fact}For every $0<x < 1$ and every $y>0$,
$$\left(1-\frac{x}{y}\right) ^y  < e^{-x} < 1-\frac{x}{2}.$$
\end{fact}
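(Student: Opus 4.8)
The plan is to prove the two stated inequalities, $(1-x/y)^y < e^{-x}$ and $e^{-x} < 1 - x/2$, for $0 < x < 1$ and $y > 0$. For the first inequality, I would take logarithms: since $y > 0$ and both sides are positive, $(1-x/y)^y < e^{-x}$ is equivalent to $y \ln(1 - x/y) < -x$, i.e. $\ln(1 - t) < -t$ where $t = x/y \in (0,1)$ (note $0 < x/y$; if $x/y \ge 1$ the left side is $\le 0$ and the inequality is trivial, so we may assume $t \in (0,1)$). This is the standard bound $\ln(1-t) < -t$, which follows from the strict concavity of $\ln$, or directly from the series $\ln(1-t) = -t - t^2/2 - t^3/3 - \cdots < -t$ for $t \in (0,1)$. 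Multiplying back by $y > 0$ gives the claim. One should be slightly careful to note that when $x/y \ge 1$ the expression $(1-x/y)^y$ may not even be real (non-integer power of a non-positive base), so the intended reading is that this regime is vacuous or handled separately; I would simply remark that the inequality is of interest precisely when $0 < x/y < 1$.

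For the second inequality, $e^{-x} < 1 - x/2$ for $0 < x < 1$, I would define $g(x) = 1 - x/2 - e^{-x}$ and show $g(x) > 0$ on $(0,1)$. We have $g(0) = 0$ and $g'(x) = -1/2 + e^{-x}$. Now $g'(x) > 0$ iff $e^{-x} > 1/2$ iff $x < \ln 2 \approx 0.693$, and $g'(x) < 0$ for $x > \ln 2$. So $g$ increases on $(0, \ln 2)$ and decreases on $(\ln 2, 1)$; hence on $(0,1)$ the minimum of $g$ on the closed interval is attained at an endpoint, $g(0) = 0$ or $g(1) = 1/2 - e^{-1} = 1/2 - 0.3678\ldots > 0$. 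Since $g(0)=0$ and $g$ is strictly increasing immediately to the right of $0$, we get $g(x) > 0$ for all $x \in (0,1)$. This establishes $e^{-x} < 1 - x/2$ on the stated range.

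The main (minor) obstacle is simply bookkeeping the domain conventions — ensuring the first inequality is interpreted over the range where $(1-x/y)^y$ is well-defined and meaningful — and checking the endpoint value $g(1) > 0$ numerically rather than hand-waving it. Neither step is deep; the whole fact is a routine calculus exercise, and I would present it in a few lines along the lines above.
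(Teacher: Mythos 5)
The paper states this Fact without proof, presenting it as a pair of ``well-known inequalities,'' so there is no proof in the paper to compare against. Your argument is correct and is the standard calculus proof: the first inequality reduces (after taking logarithms) to $\ln(1-t) < -t$ for $t = x/y \in (0,1)$, and the second follows by checking the sign of $g(x) = 1 - x/2 - e^{-x}$ on $(0,1)$, noting $g(0)=0$, $g'(0)>0$, and that $g$ is unimodal on $(0,1)$ with $g(1) = 1/2 - e^{-1} > 0$. Your remark about the regime $x/y \ge 1$ is a reasonable piece of bookkeeping; in the paper's only use of this Fact (Claim A.2), $y$ is a positive integer $\lceil n^2/\alpha \rceil$ and $x/y = 2d/n^2 \ll 1$, so the expression $(1-x/y)^y$ is well-defined and positive, and the intended regime is exactly the one you verify. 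In short, the proof is sound; the paper simply chose to cite the inequalities rather than derive them.
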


\begin{clam}\label{clam:1}
Let $2d <   \alpha \leq n$. Let $X$ and $Y$ be random variables such that $X \sim B(1, \frac{d}{\alpha})$ and $Y \sim B(\lceil\frac{n^2}{\alpha}\rceil, \frac{2d}{n^2})$. Then $X \sd Y$.
\end{clam}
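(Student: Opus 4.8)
\textbf{Proof proposal for Claim~\ref{clam:1}.}

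The plan is to compare the two distributions directly via their tails, using the fact that $X \sim B(1, d/\alpha)$ is supported on $\{0,1\}$, which makes the stochastic dominance condition extremely easy to state: we only need $\Pr[X > 0] \le \Pr[Y > 0]$, i.e. $\Pr[X = 1] \le \Pr[Y \ge 1]$, since for every $x \ge 1$ we have $\Pr[X > x] = 0 \le \Pr[Y > x]$, and for $x < 0$ both probabilities equal $1$. So the entire claim reduces to the single inequality
$$\frac{d}{\alpha} \;\le\; 1 - \left(1 - \frac{2d}{n^2}\right)^{\lceil n^2/\alpha \rceil}.$$

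First I would lower-bound the right-hand side. Since $\lceil n^2/\alpha \rceil \ge n^2/\alpha$ and $1 - 2d/n^2 \in (0,1)$, we have $\left(1 - \frac{2d}{n^2}\right)^{\lceil n^2/\alpha \rceil} \le \left(1 - \frac{2d}{n^2}\right)^{n^2/\alpha}$. Now apply the quoted Fact with $x = 2d/\alpha$ and $y = n^2/\alpha$: note $0 < 2d/\alpha < 1$ because $\alpha > 2d$, and $y > 0$, so $\left(1 - \frac{2d/\alpha}{n^2/\alpha}\right)^{n^2/\alpha} = \left(1 - \frac{2d}{n^2}\right)^{n^2/\alpha} < e^{-2d/\alpha} < 1 - \frac{d}{\alpha}$, where the last step is again the Fact (the $e^{-x} < 1 - x/2$ half, with $x = 2d/\alpha$). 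Chaining these gives $1 - \left(1 - \frac{2d}{n^2}\right)^{\lceil n^2/\alpha \rceil} \ge 1 - e^{-2d/\alpha} > d/\alpha$, which is exactly what we need.

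The only mild subtlety — and the step I would be most careful about — is the hypothesis checking for the Fact: we need $0 < 2d/\alpha < 1$, which follows from $2d < \alpha$, and we need to make sure the monotonicity step $\left(1 - \frac{2d}{n^2}\right)^{\lceil n^2/\alpha \rceil} \le \left(1 - \frac{2d}{n^2}\right)^{n^2/\alpha}$ is valid, which holds because the base lies in $(0,1)$ (here $2d/n^2 < 1$ since $2d < \alpha \le n \le n^2$) and the exponent only increases when we round up. Everything else is a one-line reduction using the Bernoulli/binomial support structure, so there is no real obstacle; assembling the three inequalities in the right order is all that remains.
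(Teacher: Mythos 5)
Your proof is correct and takes exactly the paper's route: reduce Bernoulli stochastic dominance to the single tail inequality $\Pr[Y=0] \le \Pr[X=0]$ and then chain $\left(1 - \frac{2d}{n^2}\right)^{\lceil n^2/\alpha\rceil} < e^{-2d/\alpha} < 1 - \frac{d}{\alpha}$ using the stated Fact. The paper's proof is just terser, leaving implicit the ceiling-monotonicity step and the hypothesis checks you carefully spell out.
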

\begin{proof}
$X$ is in fact a random variable with the Bernoulli distribution. As $X$ can only take the values $0$ or $1$, to show stochastic dominance, it suffices to show that $\Pr[Y=0] \leq \Pr[X=0]$.

$$\Pr[Y=0] = \left( 1- \frac{2d}{n^2}\right)^{ \lceil \frac{n^2}{\alpha}\rceil} < e^{-2d/\alpha} <  1-\frac{d}{\alpha} = \Pr[X=0].$$
\end{proof}

\begin{clam}\label{clam:2}
Let $X$  be a random variable such that $X \sim B(\alpha, \frac{d}{\alpha})$, and let $X_1, X_2, \ldots, X_{\alpha-1}$ be random variables duch that $\forall i, X_i \sim  B(1, \frac{d}{\alpha})$. Then  $X \sd \dsum_{i=1}^{\alpha-1} X_i +1$.
\end{clam}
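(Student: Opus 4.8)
The plan is to prove the claim by an explicit coupling. Realize $X\sim B(\alpha,d/\alpha)$ as a sum $X=\dsum_{i=1}^{\alpha}B_i$ of $\alpha$ independent Bernoulli$(d/\alpha)$ variables, and couple the first $\alpha-1$ of them to the given variables, i.e.\ set $X_i=B_i$ for $i=1,\ldots,\alpha-1$ (this is legitimate since $X_1,\ldots,X_{\alpha-1}$ are exactly independent Bernoulli$(d/\alpha)$ variables). Then, on this common probability space,
$$X=\dsum_{i=1}^{\alpha-1}B_i+B_{\alpha}=\dsum_{i=1}^{\alpha-1}X_i+B_{\alpha}\leq \dsum_{i=1}^{\alpha-1}X_i+1,$$
where the last inequality holds pointwise because $B_{\alpha}\in\{0,1\}$. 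Since a random variable that is pointwise at most another is in particular stochastically dominated by it, we conclude $X\sd \dsum_{i=1}^{\alpha-1}X_i+1$, as desired.

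Alternatively, and without introducing a coupling, one can argue directly from the facts about stochastic dominance recalled in Section~\ref{section:stoch}: write $X=\dsum_{i=1}^{\alpha}X_i'$ with $X_i'$ i.i.d.\ Bernoulli$(d/\alpha)$, observe that $X_i'\sd X_i$ trivially for $i\leq\alpha-1$ (same distribution) and that $X_{\alpha}'\sd 1$ (since $\Pr[X_{\alpha}'>x]\leq\Pr[1>x]$ for every real $x$), and then apply item~2 of the facts (stochastic dominance is preserved under sums of independent variables) to the sequences $\{X_1',\ldots,X_{\alpha-1}',X_{\alpha}'\}$ and $\{X_1,\ldots,X_{\alpha-1},1\}$.

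There is essentially no obstacle here: this is a routine warm-up step whose only role is to be chained with Claim~\ref{clam:1} (via transitivity of $\sd$ and item~2 of the facts) to deduce Lemma~\ref{lemma:lrst}. The one point worth stating carefully is that $X_1,\ldots,X_{\alpha-1}$ should be taken independent, so that their sum has distribution $B(\alpha-1,d/\alpha)$ and the coupling $X_i=B_i$ is available; with that understood, the argument is immediate.
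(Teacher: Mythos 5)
Your proof is correct and matches the paper, which simply states that the claim is ``immediate from the definition of the binomial distribution'': writing $B(\alpha,d/\alpha)$ as a sum of $\alpha$ i.i.d.\ Bernoulli$(d/\alpha)$ variables and bounding the last one by $1$ is exactly the intended argument. Your two phrasings (explicit coupling, or invoking Fact~2 from Section~\ref{section:stoch}) are equivalent ways of making that one-line justification precise.
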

The proof is immediate from the definition of the binomial distribution.

\begin{clam}\label{clam:3}
Let  $Y$ be a random variable such that $Y \sim B(n^2, \frac{2d}{n^2})$ and let $Y_1, Y_2, \ldots, Y_{\alpha-1}$ be random variables such that $\forall i, Y_i \sim B(\lceil\frac{n^2}{\alpha}\rceil, \frac{2d}{n^2})$. Then  $\dsum_{i=1}^{\alpha-1} Y_i \sd Y$.
\end{clam}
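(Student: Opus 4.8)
The plan is to express the left-hand side $\sum_{i=1}^{\alpha-1} Y_i$, a sum of $\alpha-1$ independent copies of $B(\lceil n^2/\alpha\rceil, 2d/n^2)$, as a sum of independent Bernoulli$(2d/n^2)$ trials, and to do the same for $Y \sim B(n^2, 2d/n^2)$, so that stochastic dominance reduces to a comparison of the number of trials on each side. Concretely, $\sum_{i=1}^{\alpha-1} Y_i$ is distributed as $B\bigl((\alpha-1)\lceil n^2/\alpha\rceil,\ 2d/n^2\bigr)$, since a sum of independent binomials with a common success probability is again binomial with the summed number of trials. Similarly $Y \sim B(n^2, 2d/n^2)$.

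First I would verify the elementary numerical inequality $(\alpha-1)\lceil n^2/\alpha\rceil \le n^2$. Writing $\lceil n^2/\alpha\rceil \le n^2/\alpha + 1$, it suffices to check $(\alpha-1)(n^2/\alpha + 1) \le n^2$, i.e.\ $n^2 - n^2/\alpha + \alpha - 1 \le n^2$, i.e.\ $\alpha - 1 \le n^2/\alpha$, i.e.\ $\alpha(\alpha-1) \le n^2$; this holds since $\alpha \le n$ (and in fact $\alpha(\alpha-1) < n^2$). Second, I would invoke the monotonicity of the binomial family in the number of trials: for fixed $p$, if $m \le M$ then $B(m,p) \sd B(M,p)$. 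This is itself immediate from Fact~2 of Section~\ref{section:stoch} (the second listed fact about stochastic dominance), since $B(M,p)$ is the independent sum of $B(m,p)$ and $B(M-m,p)$, and adding an independent nonnegative random variable can only increase the distribution in the stochastic order (apply the fact with the extra $M-m$ summands on the left being identically $0$, which is stochastically dominated by Bernoulli$(p)$). Combining the two steps with transitivity of $\sd$ (Fact~1 of Section~\ref{section:stoch}) gives
$$\dsum_{i=1}^{\alpha-1} Y_i \sim B\bigl((\alpha-1)\lceil n^2/\alpha\rceil, 2d/n^2\bigr) \sd B(n^2, 2d/n^2) \sim Y,$$
which is the claim.

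There is no real obstacle here — the only mildly fiddly point is the ceiling arithmetic $(\alpha-1)\lceil n^2/\alpha\rceil \le n^2$, and even that is a one-line estimate using $\alpha \le n$. I would present the argument in exactly the order above: reduce both sides to a single binomial, check the trial-count inequality, and quote binomial monotonicity in the number of trials. (I also anticipate that this claim, together with Claims~\ref{clam:1} and \ref{clam:2} and Lemma~\ref{lemma:stoch2}, is assembled in the proof of Lemma~\ref{lemma:lrst}: Claim~\ref{clam:2} bounds $X$ above by $1 + \sum X_i$ with $X_i \sim$ Bernoulli$(d/\alpha)$, Claim~\ref{clam:1} replaces each $X_i$ by $Y_i \sim B(\lceil n^2/\alpha\rceil, 2d/n^2)$ via Lemma~\ref{lemma:stoch2}, and Claim~\ref{clam:3} collapses $\sum Y_i$ into $Y \sim B(n^2, 2d/n^2)$, yielding $X \sd 2d + B(n^2, 2d/n^2) = Z$ after accounting for the additive constants — but that is the job of the surrounding proof, not of this claim.)
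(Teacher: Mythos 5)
Your proposal is correct and takes essentially the same route as the paper: the paper also reduces the claim to the numerical inequality $(\alpha-1)\lceil n^2/\alpha\rceil \leq n^2$ and checks it via $(\alpha-1)(n^2/\alpha+1)\leq n^2$ using $\alpha\leq n$. The only difference is that you spell out the implicit steps (sum of independent binomials with common $p$ is binomial; $B(m,p)\sd B(M,p)$ for $m\leq M$) that the paper compresses into the phrase ``it suffices to show.''
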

\begin{proof}
It suffices to show that $(\alpha-1)\lceil\frac{n^2}{\alpha}\rceil \leq n^2$.
\begin{align*}
(\alpha-1)\left\lceil\frac{n^2}{\alpha}\right\rceil \leq (\alpha-1)(\frac{n^2}{\alpha} +1) \leq n^2 - n + \alpha -1 < n^2,
\end{align*}
because $\alpha \leq n$. 
\end{proof}

Combining Claims~\ref{clam:1}, \ref{clam:2} and \ref{clam:3}, we get

{
\renewcommand{\thetheorem}{\ref{lemma:lrst}}
\begin{lemma}
Let  $Z$ and $X$ be random variables such that $Z \sim 2d + B(n^2, \frac{2d}{n^2})$ and $X \sim B(\alpha, \frac{d}{\alpha})$, where $d \leq \alpha \leq n$. Then $X \sd Z$.
\end{lemma}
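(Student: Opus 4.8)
The plan is to prove Lemma~\ref{lemma:lrst} by interpolating between the two binomial distributions through a chain of stochastic dominance steps, exactly as the auxiliary claims in the appendix are set up. The target is $X \sd Z$ where $X \sim B(\alpha, d/\alpha)$ and $Z \sim 2d + B(n^2, 2d/n^2)$, under the hypothesis $d \le \alpha \le n$. First I would handle the degenerate case $\alpha \le 2d$ (where Claim~\ref{clam:1} does not directly apply): here $B(\alpha,d/\alpha)$ is supported on $\{0,\dots,\alpha\}$ with $\alpha \le 2d$, so $X \le 2d \le Z$ pointwise, giving dominance trivially. So assume $2d < \alpha \le n$, which is the regime covered by Claim~\ref{clam:1}.

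The core argument is a three-hop chain. Step one: by Claim~\ref{clam:2}, $X \sd \sum_{i=1}^{\alpha-1} X_i + 1$ where each $X_i \sim B(1, d/\alpha)$ are independent Bernoullis — this just says a $B(\alpha,d/\alpha)$ variable is dominated by one extra guaranteed success plus $\alpha-1$ independent Bernoulli trials. Step two: apply Claim~\ref{clam:1} termwise — since $2d < \alpha$, each $X_i \sim B(1,d/\alpha)$ satisfies $X_i \sd Y_i$ for $Y_i \sim B(\lceil n^2/\alpha\rceil, 2d/n^2)$ — and then invoke Fact~2 of Section~\ref{section:stoch} (stochastic dominance is preserved under summing independent families) to conclude $\sum_{i=1}^{\alpha-1} X_i \sd \sum_{i=1}^{\alpha-1} Y_i$, and hence $\sum_{i=1}^{\alpha-1} X_i + 1 \sd \sum_{i=1}^{\alpha-1} Y_i + 1$. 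Step three: by Claim~\ref{clam:3}, $\sum_{i=1}^{\alpha-1} Y_i \sd Y$ where $Y \sim B(n^2, 2d/n^2)$ — this is the observation that a sum of $\alpha-1$ independent $B(\lceil n^2/\alpha\rceil, 2d/n^2)$ variables is itself $B((\alpha-1)\lceil n^2/\alpha\rceil, 2d/n^2)$, and $(\alpha-1)\lceil n^2/\alpha\rceil \le n^2$, so it is dominated by $B(n^2, 2d/n^2)$. Adding the constant $1$ on both sides preserves dominance, so $\sum_{i=1}^{\alpha-1} Y_i + 1 \sd Y + 1 \le 2d + Y = Z$ (using $1 \le 2d$, which holds as $d \ge 1$). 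Chaining these via transitivity (Fact~1 of Section~\ref{section:stoch}): $X \sd \sum X_i + 1 \sd \sum Y_i + 1 \sd Y + 1 \sd Z$.

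The one place requiring a little care — the main ``obstacle'', though it is minor — is making sure the constant offsets line up: Claim~\ref{clam:2} produces a $+1$, Claim~\ref{clam:3} has no offset, and $Z$ carries a $+2d$, so I need the slack $2d \ge 1$ to absorb the stray $+1$; this is fine since $d$ is a positive integer (or at least $d \ge 1$ in all the intended applications). I would also double-check the boundary of Claim~\ref{clam:1}'s hypothesis $2d < \alpha$ against the lemma's hypothesis $d \le \alpha$: the gap $d \le \alpha \le 2d$ is exactly the trivial pointwise case handled first. Everything else is just assembling the claims with Facts~1 and 2 on stochastic dominance, so the proof is essentially a one-line chain of $\sd$'s once the claims are in hand.

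\begin{proof}
If $\alpha \le 2d$, then $X \sim B(\alpha, d/\alpha)$ is supported on $\{0,1,\dots,\alpha\} \subseteq \{0,\dots,2d\}$, so $X \le 2d \le Z$ with probability $1$, and in particular $X \sd Z$. Now assume $2d < \alpha \le n$. Let $X_1,\dots,X_{\alpha-1}$ be independent with $X_i \sim B(1, d/\alpha)$ and let $Y_1,\dots,Y_{\alpha-1}$ be independent with $Y_i \sim B(\lceil n^2/\alpha\rceil, 2d/n^2)$. By Claim~\ref{clam:2}, $X \sd \sum_{i=1}^{\alpha-1} X_i + 1$. By Claim~\ref{clam:1}, $X_i \sd Y_i$ for every $i$, so by Fact~2 of Section~\ref{section:stoch}, $\sum_{i=1}^{\alpha-1} X_i \sd \sum_{i=1}^{\alpha-1} Y_i$, and adding $1$ to both sides preserves dominance: $\sum_{i=1}^{\alpha-1} X_i + 1 \sd \sum_{i=1}^{\alpha-1} Y_i + 1$. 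By Claim~\ref{clam:3}, $\sum_{i=1}^{\alpha-1} Y_i \sd Y$ where $Y \sim B(n^2, 2d/n^2)$, so $\sum_{i=1}^{\alpha-1} Y_i + 1 \sd Y + 1 \le Y + 2d = Z$ (using $d \ge 1$). Combining these via Fact~1 of Section~\ref{section:stoch} (transitivity of $\sd$) yields $X \sd Z$.
\end{proof}
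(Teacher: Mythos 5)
Your proof is correct and follows essentially the same route as the paper: the trivial pointwise bound for $\alpha \le 2d$, followed by the chain $X \sd \sum X_i + 1 \sd \sum Y_i + 1 \sd Y + 1 \sd Z$ assembled from Claims~\ref{clam:1}, \ref{clam:2}, \ref{clam:3} together with Facts~1 and 2. You are slightly more explicit than the paper about the $\alpha \le 2d$ case and about needing $d \ge 1$ to absorb the $+1$ into the $+2d$, but the argument is the same.
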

\addtocounter{theorem}{-1}
}
\begin{proof}
If $\alpha \leq 2d$, the lemma holds immediately. Assume $\alpha \geq 2d$. Then, using the notation of Claims~\ref{clam:2} and \ref{clam:3}
$$X \sd \dsum_{i=1}^{\alpha-1} X_i +1 \sd \dsum_{i=1}^{\alpha-1} Y_i +1 \sd  B(n^2, \frac{2d}{n^2}) +1 \sd Y+1\sd Z.$$
\end{proof}

\end{document}